\newcommand\symsubspacedt[1]{\vee^t\mathbb{#1}^d}
\newcommand\symsubspaced[2]{\vee^{#1}\mathbb{#2}^d}
\newcommand\symsubspace[3]{\vee^{#1}\mathbb{#2}^{#3}}
\newcommand{\proj}[1]{\Pi_{#1}}
\newcommand\projdtC{\Pi_{\symsubspacedt{C}}}
\newcommand\projdtR{\Pi_{\symsubspacedt{R}}}
\newcommand\projdR[1]{\Pi_{\symsubspaced{#1}{R}}}
\newcommand{\unitary}[1]{\mathcal{U}_{#1}}
\newcommand{\orthogonal}[1]{\mathcal{O}_{#1}}
\newcommand{\realvecs}[1]{\mathcal{R}_{#1}}
\newcommand{\complexvecs}[1]{\mathcal{C}_{#1}}
\newcommand\homogeneousdt{\mathcal{P}_d^{t}}
\newcommand\homogeneousd[1]{\mathcal{P}_d^{#1}}
\newcommand\homogeneous[2]{\mathcal{P}_{#1}^{#2}}
\newcommand\harmonicd[1]{\mathcal{H}_d^{#1}}
\newcommand\harmonic[2]{\mathcal{H}_{#1}^{#2}}
\newcommand{\meanrho}[1]{\rho_{#1}}
\newcommand{\normrrho}{\hat{\rho}_{\realvecs{d}^t}}
\newcommand{\crho}{\meanrho{\complexvecs{d}^t}}
\newcommand{\rrho}[2]{\meanrho{\realvecs{#1}^{#2}}}
\newcommand{\rrhod}[1]{\meanrho{\realvecs{d}^{#1}}}
\newcommand{\rrhodt}{\rrhod{t}}
\newcommand{\smallo}[1]{o\!\left(#1\right)}
\newcommand{\bigo}[1]{O\!\left(#1\right)}
\newcommand{\bigtheta}[1]{\Theta\!\left(#1\right)}
\newcommand{\expect}[1]{\mathbb{E}\left[#1\right]}
\newcommand{\ket}[1]{\left|#1\right\rangle}
\newcommand{\bra}[1]{\left\langle#1\right|}
\newcommand{\ketbra}[2]{\left|#1\middle\rangle\!\middle\langle#2\right|}
\newcommand{\selfouter}[1]{\left|#1\middle\rangle\!\middle\langle#1\right|}
\newcommand{\inner}[2]{\left\langle#1\middle|#2\right\rangle}
\newcommand{\braopket}[3]{\left\langle#1\middle|#2\middle|#3\right\rangle}
\newcommand{\polynomialbraket}[2]{\left\langle#1,#2\right\rangle}
\newcommand{\defed}{\overset{\text{def}}{=}}
\DeclareMathOperator{\trace}{tr}
\newcommand{\tr}[1]{\trace\left[#1\right]}
\DeclareMathOperator{\SpanOp}{span}
\newcommand{\Span}[1]{\SpanOp_{#1}}
\DeclareMathOperator{\rank}{rank}
\newcommand{\rk}[1]{\rank\left(#1\right)}
\newcommand{\NN}{\mathbb{N}}
\newcommand{\RR}{\mathbb{R}}
\newcommand{\CC}{\mathbb{C}}
\newcommand{\id}{\mathbb{I}}
\DeclareMathOperator{\Polylog}{polylog}
\DeclareMathOperator{\Negl}{negl}
\newtheorem*{theorem1}{Theorem 1}
\newtheorem*{corollary1}{Corollary 1}
\newtheorem*{proposition1}{Proposition 1}
\newtheorem*{proposition2}{Proposition 2}
\DeclareMathOperator{\generallinear}{GL}
\newcommand{\gl}[1]{\generallinear\left(#1\right)}
\DeclareMathOperator{\Hermitian}{Herm}
\newcommand{\Herm}[1]{\Hermitian\left(#1\right)}
\title{Exact distinguishability between real-valued and complex-valued Haar random quantum states}
\author{Tristan Nemoz}
\email{tristan.nemoz@telecom-paris.fr}
\author{Romain Alléaume}
\email{romain.alleaume@telecom-paris.fr}
\author{Peter Brown}
\email{peter.brown@telecom-paris.fr}
\affiliation{Télécom Paris, LTCI, Institut Polytechnique de Paris, Inria, 19 Place Marguerite Perey, 91~120 Palaiseau, France}
\begin{document}
	
	\maketitle
	\begin{abstract}
            \noindent Haar random states are fundamental objects in quantum information theory and quantum computing. 
            We study the density matrix resulting from sampling $t$ copies of a $d$-dimensional quantum state according to the Haar measure on the orthogonal group. In particular, we analytically compute its spectral decomposition. This allows us to compute exactly the trace distance between $t$-copies of a real Haar random state and $t$-copies of a complex Haar random state. Using this we show a lower-bound on the approximation parameter of real-valued state $t$-designs and improve the lower-bound on the number of copies required for imaginarity testing.
	\end{abstract}
	
	\section{Introduction}
            Haar random unitaries and states are important tools in the study of quantum information and quantum computing~\cite{Mel24}. They have found prolific applications to problems such as the development of shadow tomography~\cite{HKP20}, demonstrating quantum advantage~\cite{Nei+17} and benchmarking quantum computers~\cite{CBSNG19}. However, sampling a state from the Haar measure requires quantum circuits of exponential size~\cite{HP07}, which is impractical. Therefore, the consideration of practicality necessitates a search for ensembles of efficiently-preparable quantum states that, when sampled, closely resemble a Haar random state.
            
            Ensembles of random states that mimic Haar random states are more formally known as approximate state $t$-designs~\cite{AE07}. Roughly, an ensemble of states is an $\varepsilon$-approximate state $t$-design if when given $t$-copies of either a sample from the ensemble or a Haar random state, the probability we correctly distinguish the two is bounded above by $\frac12 + \varepsilon$. In other words, there is only an $\varepsilon$ advantage over randomly guessing. As such, any task requiring to sample \(t\) copies of a state according to the Haar measure on the unitary group can be performed up to precision \(\varepsilon\) by using an \(\varepsilon\)-approximate state \(t\)-design instead.
            
            Approximate state $t$-designs have recently found significant applications to quantum cryptography in the context of Asymptotically Random State generators (ARS)~\cite{BS19}, which are a particular cryptographic notion of a random state generator -- akin to a random number generator in classical cryptography. Loosely, an ARS is a sequence of $\varepsilon$-approximate state $t$-designs (indexed by the dimension of the states) with $t$ growing polynomially in the logarithm of the dimension and $\varepsilon$ remaining negligible. An ARS is thus a family of approximate state $t$-designs with $t$ growing modestly with the dimension. ARS have been critical in the development of constructions of pseudorandom states (PRS)~\cite{JLS18, BS19}, which are a quantum analogue of pseudorandom generators. For PRS, the information theoretic indistinguishability of ARS with Haar random states is replaced with a computational indistinguishability. That is, the ensembles cannot be distinguished from Haar random states by any computationally bounded agent. PRS are in turn widely studied in the quantum cryptography field as they not only represent a new approach for performing quantum cryptography, namely adding computational assumptions to surpass what's classically achievable, but because they also are a potentially weaker assumption than quantum-secure one-way functions~\cite{Kre21,KQST23}. This has led to many cryptographic primitives being built out of PRS in a black-box way, such as quantum bit commitments~\cite{AQY22} or multi-party computation~\cite{MY22} amongst others~\cite{MicrocryptZoo}.
            
            Interestingly, many ARS are constructed solely from real-valued quantum states~\cite{BS19,Aar+23,GB23,JMW24,CSBH25}. These constitute ensembles of real-valued quantum states for which sampling a polynomial number of copies of a state approximates sampling the same number of copies of a state from the Haar measure on the unitary group. Beyond cryptographic applications, such constructions have also been used to show that there is no efficient procedure to test whether a generic state requires complex numbers to be expressed in the standard formalism~\cite{HBK24}, this is quantified by the imaginarity of the state and has been studied in the context of quantum resource theories~\cite{HG18,Wu+21,Xue+21}.
        
            In order for an \(\varepsilon\)-approximate state \(t\)-design to capture well a Haar random sample, its approximation parameter \(\varepsilon\) should be as small as possible. Now, whilst real-valued quantum theory can sometimes be as powerful as its complex counterpart, for instance in computation~\cite{Shi03}, it has been shown that it is insufficient to describe all the phenomena we observe~\cite{Ren+21}. This leads to the very natural question
            \begin{quote}
            \begin{center}
                \emph{Are there fundamental limitations on the performance of real-valued state $t$-designs?}
            \end{center}
            \end{quote}
            If answered positively, this would also imply fundamental limitations on their various applications such as the performance of real-valued ARS.

    \subsection{Contributions}
	In this work we derive a fundamental lower bound on the approximation parameter \(\varepsilon\) of any real-valued \(\varepsilon\)-approximate state \(t\)-design. To achieve this result we analytically study the spectral decomposition of the density matrix resulting from sampling \(t\) copies of a \(d\)-dimensional real-valued quantum state according to the Haar distribution on the orthogonal group. This forms the main technical result of our work, based on representation theory of the orthogonal group and connections to Harmonic polynomials. We expect the proof technique and result (which are presented in \Cref{section:main}) to find further applications beyond this work.
    \begin{theorem1}[Spectral decomposition]
        Let \(d\geqslant2\) and \(t\geqslant1\) be two natural numbers. Let \(\rrhodt\) be the density matrix associated to sampling \(t\) copies of a real-valued state according to the Haar distribution of the orthogonal group, that is
        \begin{equation}
            \label{eq:first_mention_rrhodt}
            \rrhodt = \int_{O\in\orthogonal{d}}\left(O\selfouter{0}O^\top\right)^{\otimes t}\,\mathrm{d}\mu(O)\,.
        \end{equation}
        Then for \(k\in \{0,1, \dots, \left\lfloor\frac{t}{2}\right\rfloor-1\}\), the eigenvalues of \(\rrhodt\) are
        \begin{equation}
            \label{eq:first-mention-eigenvalues}
            \lambda_k=\frac{t!(d-2)!!}{(2k)!!(d+2t-2k-2)!!}
        \end{equation}
        with multiplicity
        \begin{equation}
            \alpha_k = \binom{d+t-2k-1}{d-1}-\binom{d+t-2k-3}{d-1}\,.
        \end{equation}
    \end{theorem1}
    Using this spectral decomposition, we are able to provide an exact analytical formula for the distinguishing advantage between real-valued Haar random states and complex-valued Haar random states. This completes a recently derived upper bound~\cite{Sch24}.
    \begin{corollary1}[Informal]
        Let $\rrhodt$ and $\crho$ be the density matrices associated with sampling $t$-copies of a real and complex Haar random state respectively. Then the maximal probability with which one can distinguish these states, assuming they are both chosen with probability \(\frac12\), is given by $\frac12 + \frac14 \left\| \rrhodt - \crho\right\|_1$ and
        \begin{equation}
            \left\|\rrhodt - \crho\right\|_1 = \sum_k \alpha_k \left|\lambda_k - \binom{d+t-1}{t}^{-1}\right|
        \end{equation}
        where $\alpha_k$ and $\lambda_k$ are given in the previous theorem. 
    \end{corollary1}

    In \Cref{section:applications} we describe several applications of this result. Firstly, by a data processing inequality for the trace-norm, we obtain a lower bound on the distinguishability between sampling $t$-copies of any real-valued ensemble of states and $t$-copies of a complex-valued Haar random state. This lower bound places a fundamental limitation on the performance of real-valued state $t$-designs. In turn, we also derive limitations on the performance of real-valued ARS, providing a bound on how $t$ scales in terms of $d$, limiting their performance. This is captured in the following proposition.
    \begin{proposition1}
        There exists a projective measurement that can distinguish any ensemble of real-valued states from the complex-valued Haar random ensemble with probability \(p\) if given \(t\sim\sqrt{2\ln\left(\frac{1}{2(1-p)}\right)d}\) copies.
    \end{proposition1}
    Finally, we can, using the result on the trace distance, generalize~\citeauthor{HBK24}'s result on the number of copies required for testing for imaginarity~\cite{HBK24}. In particular, we refine their lower-bound from \(\Omega\!\left(\sqrt{2^n}\right)\) to \(\sqrt{\ln\left(\frac32\right)2^n}+\smallo{\sqrt{2^n}}\). To achieve this, we perform an exact security analysis of the binary phases ARS. This exact analysis goes beyond the original asymptotic analyses of~\cite{BS19,AGQY22} and may be of independent interest. We then show that this bound can be improved to \(\sqrt{2\ln\left(\frac32\right)d}+\smallo{\sqrt{d}}\) for an arbitrary dimension \(d\). This not only generalizes their result to arbitrary dimensions, but improve the lower-bound by a factor \(\sqrt{2}\). Along the way, we also derive the distribution of imaginarity for Haar random states, showing they follow a power law. This is informally stated in the following Proposition.
    \begin{proposition2}[Informal]
        Any algorithm testing whether a quantum state has any amount of imaginarity requires at least \(\sqrt{2\ln\left(\frac32\right)d}+\smallo{\sqrt{d}}\) copies of that quantum state to succeed with probability at least \(\frac23\).
    \end{proposition2}

    \section{Notations}
        \label{section:notations}
	Let us start by introducing the relevant objects and notation that we will use. Let $d\geqslant2$ and \(t\geqslant1\) be two natural numbers. We use the notation \(\orthogonal{d}\) and \(\unitary{d}\) to denote the sets of \(d\times d\) orthogonal and unitary matrices respectively. If \(\mathcal{S}\) is a set of states or operators, we denote by \(\mathcal{S}^t\) the set of objects from \(\mathcal{S}\) to the $t$ tensor power, i.e.,
	\begin{equation}
		\mathcal{S}^t\defed \left\{x^{\otimes t}\middle|x\in\mathcal{S}\right\}\,.
	\end{equation}
	Two particularly important sets in this work are the set of real and complex-valued states, which we define as
	\begin{equation}
		\realvecs{d} = \left\{O\ket{0}\middle|O\in\orthogonal{d}\right\} \qquad \text{and} \qquad \complexvecs{d} = \left\{U\ket{0}\middle|U\in\unitary{d}\right\}\,,
	\end{equation}
	respectively and to which we implicitly attach their respective Haar measures. We will also make extensive use of the notation $[d]\defed\{0,\ldots,d-1\}$.

	Let $\mathfrak{S}_t$ be the symmetric group on $t$ objects. Let $\mathbb{K}$ be either $\RR$ or $\CC$. For any $\pi \in \mathfrak{S}_t$ we can construct the operator $P_\pi$ whose action on $(\mathbb{K}^d)^{\otimes t}$ is given by 
	\begin{equation}
		P_\pi(\ket{v_1} \otimes \dots \otimes \ket{v_t}) =\ket{v_{\pi^{-1}(1)}} \otimes \dots \otimes \ket{v_{\pi^{-1}(t)}}\,.
	\end{equation}
	We can then define the symmetric subspace as the subspace whose elements are invariant under these actions, i.e.
	\begin{equation}
		\symsubspacedt{K} \defed \left\{\ket{v} \in \left(\mathbb{K}^d\right)^{\otimes t} \, \middle| \, P_\pi \ket{v} = \ket{v} \,\, \forall \pi \in \mathfrak{S}_t\right\}\,.
	\end{equation}
    For a subspace \(F\) of a vector space \(E\), we denote by \(\proj{F}\) the projector onto \(F\), with \(E\) being implicit.

    We will also make extensive use of complexity notations throughout this work and so we provide the reader with a brief recall on the definitions of these notations. Though we are often working with functions of two parameters $t,d\in \mathbb{N}$, we will only be concerned with the setting where \(t\) is a function of \(d\) and so the functions that we consider are defined using only a single parameter.

        \begin{center}
        \begin{tabular}{|c|c|c|}
            \hline
            Notation & Definition & Intuitive meaning \\
            \hline
            \(f=\bigo{g}\) & \(\frac{f}{g}\) is bounded & \(g\) upper-bounds \(f\)\\
            \hline
            \(f=\smallo{g}\) & \(\frac{f}{g}\) converges to \(0\) & \(g\) \emph{strictly} upper-bounds \(f\)\\
            \hline
            \(f=\Omega(g)\) & \(g=\bigo{f}\) & \(g\) lower-bounds \(f\)\\
            \hline
            \(f=\omega(g)\) & \(g=\smallo{f}\) & \(g\) \emph{strictly} lower-bounds \(f\)\\
            \hline
            \(f=\bigtheta{g}\) & \(f=\bigo{g}\) and \(g=\bigo{f}\) & \(g\) and \(f\) have a similar asymptotic behavior\\
            \hline
            \(f\sim g\) & \(\frac{f}{g}\) converges to 1 & \(g\) and \(f\) have the same asymptotic behavior\\
            \hline
        \end{tabular}
        \end{center}
	
	\section{Eigenvalues of real Haar random states}
        \label{section:main}
	The goal of this section is to compute, for $d$ and $t$ being two natural numbers, the eigenvalues and associated multiplicities of the operator \(\rrhodt\) (see Theorem~\ref{thm:eigenvalues}), where
	\begin{equation}
		\rrhodt\defed\int_{O\in\orthogonal{d}}\left(O\selfouter{0}O^\top\right)^{\otimes t}\,\mathrm{d}\mu(O)\,.
	\end{equation}
	To begin, we will briefly review how the eigenvalues and multiplicities can be computed in the complex-case, i.e. for the operator
	\begin{equation}
		\crho\defed\int_{U\in\unitary{d}}\left(U\selfouter{0}U^\dagger\right)^{\otimes t}\,\mathrm{d}\mu(U)\,.
	\end{equation}
	We note that this case is well-studied and arguably much simpler. We refer the interested reader to~\cite{Har13} where many of the concepts in this introductory discussion can be found in significantly more detail.
	
	We begin by recalling two well-known facts about $\crho$. Firstly, the support of $\crho$ is contained entirely within the symmetric subspace $\symsubspacedt{C}$ and so for any $\ket{v} \in \symsubspacedt{C}$ we have $\crho \ket{v} \in \symsubspacedt{C}$. Secondly, for any $U \in \unitary{d}$ we have that
	\begin{equation}
		U^{\otimes t} \crho = \crho U^{\otimes t}\,.
	\end{equation}
    Using these two facts, we can apply some representation theory to diagonalize \(\crho\). We provide a brief introduction to necessary concepts in \Cref{app:rep-theory}. In particular, since \(U\mapsto U^{\otimes t}\) is an irreducible representation of \(\unitary{d}\) in \(\symsubspacedt{C}\)~\cite[Theorem~5]{Har13}, it follows by \Cref{cor:proportional-to-identity}, which is derived from \nameref{lem:schur}, that $\crho$ must be proportional to the identity operator on \(\symsubspacedt{C}\), namely $\projdtC$. That is, there exists \(\lambda\in\RR\) such that $\crho = \lambda\,\projdtC$. Finally, we know that $\tr{\crho}=1$ and $\tr{\projdtC} = \dim(\symsubspacedt{C}) = \binom{d+t-1}{t}$ and so
	\begin{equation}
		\crho = \binom{d+t-1}{t}^{-1}\,\projdtC\,.
	\end{equation}
	As $\projdtC$ is a projector we immediately obtain the spectral decomposition of $\crho$.
	
	We now proceed to try to obtain a similar result for the case of $\rrhodt$. We begin with two lemmas whose proofs can be found in \Cref{app:main_text_proofs}.
	\begin{restatable}{lemma}{onrealsymspace}
		\label{lem:on-real-symspace}
		Let $d\geqslant2$ and \(t\geqslant1\) be two natural numbers. The support of \(\rrhodt\) viewed as an endomorphism of \(\left(\RR^d\right)^{\otimes t}\) is contained within the real symmetric subspace \(\symsubspacedt{R}\).
	\end{restatable}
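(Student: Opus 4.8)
The plan is to show that $\rrhodt$ is block-diagonal with respect to the decomposition $\left(\RR^d\right)^{\otimes t} = \symsubspacedt{R} \oplus \left(\symsubspacedt{R}\right)^\perp$, with the block on $\left(\symsubspacedt{R}\right)^\perp$ being zero. Concretely, it suffices to show that for every permutation $\pi \in \mathfrak{S}_t$ we have $P_\pi \rrhodt = \rrhodt$, since then every vector in the image of $\rrhodt$ is fixed by all the $P_\pi$, i.e.\ lies in $\symsubspacedt{R}$.

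First I would observe that the integrand $\left(O\selfouter{0}O^\top\right)^{\otimes t}$ is itself symmetric under permutations: writing $\ket{\psi} = O\ket{0} \in \RR^d$, the operator is $\selfouter{\psi}^{\otimes t} = \selfouter{\psi^{\otimes t}}$, and $P_\pi \ket{\psi^{\otimes t}} = \ket{\psi^{\otimes t}}$ because all tensor factors are equal. Hence $P_\pi \left(\selfouter{\psi}^{\otimes t}\right) = \left(\selfouter{\psi}^{\otimes t}\right) = \left(\selfouter{\psi}^{\otimes t}\right) P_\pi$. Then I would push $P_\pi$ through the integral defining $\rrhodt$ using linearity of the integral (the Haar measure on $\orthogonal{d}$ is a finite measure and the integrand is continuous and bounded, so this is routine), obtaining $P_\pi \rrhodt = \int_{O}\, P_\pi \left(O\selfouter{0}O^\top\right)^{\otimes t}\,\mathrm{d}\mu(O) = \int_{O}\, \left(O\selfouter{0}O^\top\right)^{\otimes t}\,\mathrm{d}\mu(O) = \rrhodt$. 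Taking adjoints (or using the right-multiplication version) also gives $\rrhodt P_\pi = \rrhodt$, though only $P_\pi \rrhodt = \rrhodt$ is needed for the support claim.

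To conclude, let $\ket{v}$ be any vector in the image of $\rrhodt$, say $\ket{v} = \rrhodt\ket{w}$. Then for every $\pi \in \mathfrak{S}_t$, $P_\pi\ket{v} = P_\pi\rrhodt\ket{w} = \rrhodt\ket{w} = \ket{v}$, so $\ket{v} \in \symsubspacedt{R}$ by definition of the real symmetric subspace. Since the support of $\rrhodt$ (which is a positive semidefinite, hence in particular normal, operator) equals the closure of its image, which is contained in $\symsubspacedt{R}$, this gives the claim.

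There is essentially no hard obstacle here; the only point requiring a modicum of care is the interchange of $P_\pi$ with the integral, which is justified because $P_\pi$ is a fixed bounded linear operator on the finite-dimensional space $\left(\RR^d\right)^{\otimes t}$ and the integrand is a continuous matrix-valued function on the compact group $\orthogonal{d}$ against a finite measure, so the integral may be understood entrywise and linearity applies directly. I would state this step briefly and move on.
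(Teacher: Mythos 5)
Your proof is correct and takes essentially the same approach as the paper: both establish $P_\pi\rrhodt=\rrhodt$ by commuting $P_\pi$ through the integral using $P_\pi\ket{\psi}^{\otimes t}=\ket{\psi}^{\otimes t}$. The only cosmetic difference is that the paper finishes by invoking $\projdtR=\frac{1}{t!}\sum_{\pi}P_\pi$ to write $\rrhodt=\projdtR\rrhodt\projdtR$, whereas you argue directly that every image vector is fixed by all $P_\pi$ and hence lies in $\symsubspacedt{R}$.
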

	\begin{restatable}{lemma}{commuteorthogonal}
		\label{lem:commute-orthogonal}
		Let $d\geqslant2$ and \(t\geqslant1\) be two natural numbers. For any \(O\in\orthogonal{d}\) we have
		\begin{equation}
			O^{\otimes t}\rrhodt=\rrhodt O^{\otimes t}\,.
		\end{equation}
	\end{restatable}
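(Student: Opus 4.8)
The plan is to mimic the argument given above for the complex operator $\crho$, which ultimately rests only on the translation invariance of the Haar measure. First I would fix an arbitrary $O \in \orthogonal{d}$ and, using that left-multiplication by $O^{\otimes t}$ is linear and continuous and that the integrand is bounded and continuous on the compact group $\orthogonal{d}$, move $O^{\otimes t}$ inside the integral:
\begin{equation}
    O^{\otimes t}\rrhodt = \int_{O'\in\orthogonal{d}} O^{\otimes t}\left(O'\selfouter{0}O'^\top\right)^{\otimes t}\,\mathrm{d}\mu(O')\,.
\end{equation}
Since $(AB)^{\otimes t} = A^{\otimes t}B^{\otimes t}$, the integrand equals $\left((OO')\selfouter{0}O'^\top\right)^{\otimes t}$.

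Next I would perform the change of variables $P = OO'$. Because the Haar measure on $\orthogonal{d}$ is left-invariant we have $\mathrm{d}\mu(O') = \mathrm{d}\mu(P)$, and $O' = O^\top P$ using $O^{-1} = O^\top$. The only point requiring a little care is the transpose bookkeeping: $\left(O^\top P\right)^\top = P^\top O$, so that
\begin{equation}
    \left((OO')\selfouter{0}O'^\top\right)^{\otimes t} = \left(P\selfouter{0}P^\top O\right)^{\otimes t} = \left(P\selfouter{0}P^\top\right)^{\otimes t}O^{\otimes t}\,,
\end{equation}
again by multiplicativity of $(\cdot)^{\otimes t}$. Pulling the constant factor $O^{\otimes t}$ out of the integral then gives $O^{\otimes t}\rrhodt = \rrhodt O^{\otimes t}$, which is the claim.

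I do not expect a genuine obstacle here: this is simply the orthogonal-group analogue of the standard computation recalled above for $\crho$. The only thing to watch is that in the real case $\selfouter{0}$ is flanked by $O'$ and $O'^\top$ rather than $O'$ and $O'^\dagger$, which is harmless since orthogonal matrices satisfy $O^\top = O^{-1}$. One could equally argue via the right-invariance of the Haar measure with the substitution $P = O'O^\top$; both routes yield the same conclusion.
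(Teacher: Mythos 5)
Your argument is correct and is essentially the paper's own proof: both insert $O^{\otimes t}$ into the integral, substitute $P = OO'$ (the paper writes $Q = OP$), invoke left-invariance of the Haar measure, and then pull $O^{\otimes t}$ out on the right, with the transpose bookkeeping handled via $O^{-1} = O^\top$. The cosmetic difference is that the paper writes $O^\dagger$ where you write $O^\top$, which coincide for real orthogonal matrices.
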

	Note at this point that the real-valued case appears to mirror the complex-valued case exactly. As such, we would now look to apply \Cref{cor:proportional-to-identity} as we did previously. Unfortunately this is not immediately possible as \(O\mapsto O^{\otimes t}\) is not an irreducible representation of $\orthogonal{d}$ in \(\symsubspacedt{R}\) for \(t>1\). Indeed, let us consider the case \(t=d=2\). We have
	\begin{equation}
		\label{eq:decompositiond=t=2}
		\symsubspace{2}{R}{2} = \Span{\RR}\left(\left\{\ket{01}+\ket{10},\ket{00}-\ket{11},\ket{00}+\ket{11}\right\}\right)\,.
	\end{equation}
	However, a quick calculation shows that for any $O \in \orthogonal{2}$ we have $O^{\otimes 2}(\ket{00} + \ket{11}) = \ket{00} + \ket{11}$. Hence $\Span{\RR}\left(\{\ket{00} + \ket{11}\}\right)$ is a one-dimensional $\orthogonal{2}$-invariant subspace and hence this representation is not irreducible.
Thus, in order to apply \Cref{cor:proportional-to-identity}, we first need to decompose \(\symsubspacedt{R}\) into subspaces in which the representation \(O\mapsto O^{\otimes t}\) of \(\orthogonal{d}\) is irreducible.
	
	In order to construct such a decomposition, it will be useful to demonstrate an isomorphism between \(\symsubspacedt{R}\) and the space of homogeneous polynomials with \(d\) variables of degree \(t\) which we denote as
	\begin{equation}
		\homogeneousdt\defed\Span{\RR}\left(\left\{X_0^{a_0}\cdots X_{d-1}^{a_{d-1}}\middle|\sum_{i=0}^{d-1}a_i=t\right\}\right)\,.
	\end{equation}
	The isomorphism can be shown to exist by noting that the two vector spaces are over the same field and have the same dimension, namely \(\binom{d+t-1}{t}\). However, it will prove useful to exhibit an explicit isomorphism between the two spaces. Consider the standard basis for $(\RR^d)^{\otimes t}$ which we may write as $\ket{x_1, \dots, x_t}$ for $x_i \in [d]$. To each basis vector we may associate a $d$-dimensional vector $a$ taking values in $[t]$ which counts the number of times each value in $[d]$ appears in the basis vector, i.e. $a_k = \left|\left\{ i\in[t] \, \middle| \, x_i = k\right\}\right|$. To construct a basis for $\symsubspacedt{R}$ we can apply the map $\ket{v} \mapsto \sum\limits_{\pi \in \mathfrak{S}_t} P_\pi \ket{v}$ to each basis vector in the standard basis -- the remaining unique vectors form the basis of $\symsubspacedt{R}$. We note that two vectors are the same under this mapping if and only if their associated $a$ vectors are the same. In particular, this implies that the basis for $\symsubspacedt{R}$ may be succinctly described by the unique $a$ vectors. We can then map these $a$ vectors to \(\homogeneousdt\) by interpreting the $a$ vector as encoding the the degrees of the different variables.
	
	To illustrate this isomorphism, let us take \(d=t=2\) once again. An orthonormal basis for \(\symsubspace{2}{R}{2}\) is
	\begin{equation}
		\left\{\ket{00}, \ket{11}, \frac{\ket{01}+\ket{10}}{\sqrt{2}}\right\}\,,
	\end{equation}
	whose elements have the respective $a$ vectors $\{(2,0), (0,2), (1,1)\}$. The associated orthonormal basis of \(\homogeneous{2}{2}\) would then be
	\begin{equation}
		\left\{X_0^2, X_1^2, \sqrt{2}\,X_0X_1\right\}
	\end{equation}
    where the norm of a monomial is given by the Bombieri norm~\cite{BBEM90}, which is defined as
    \begin{equation}
        \left\|\prod_i X_i^{a_i}\right\| \defed \sqrt{ \frac{\prod_i a_i !}{(\sum_i a_i)!} }\,.
    \end{equation} More formally, the isomorphism is given by
    \begin{equation}
        \label{eq:isomorphism}
       X_0^{a_0}\cdots X_{d-1}^{a_{d-1}}\mapsto \projdtR\ket{0^{a_0},\cdots,(d-1)^{a_{d-1}}}
    \end{equation}
    with \(\sum_ia_i=t\). In particular, this isomorphism described maps polynomial with unit Bombieri norm to quantum states with unit norm, as shown in~\Cref{lem:same_inner_product}.
    
	Now for $k\leq t$, consider the subspace of homogeneous harmonic polynomials defined as  
	\begin{equation}
		\harmonicd{k} \defed \left\{P\in\homogeneousd{k}\middle|\sum_{i=0}^{d-1}\frac{\partial^2P}{\partial X_i^2}=0\right\}\,.
	\end{equation}
	A classical result from representation theory of the orthogonal group~\cite[Theorem~2.12]{CW68} tells us that
	\begin{equation}
            \label{eq:decomposition-pdt}
		\homogeneousdt = \harmonicd{t}\oplus q\harmonicd{t-2}\oplus q^2\harmonicd{t-4}\oplus\cdots
	\end{equation}
	where
	\begin{equation}
		q\defed\sum_{i=0}^{d-1}X_i^2\,.
	\end{equation}
	Moreover, this result also states that there exists a certain representation of \(\orthogonal{d}\) that acts irreducibly on each of these $q^k \harmonicd{t-2k}$ subspaces. We show in \Cref{lem:representation-od} that this representation corresponds exactly to \(\orthogonal{d}\ni O\mapsto O^{\otimes t}\) when mapped back to \(\symsubspacedt{R}\) using the aforementioned isomorphism. As such, we can freely apply \Cref{cor:proportional-to-identity} on each of these subspaces. We further know from~\cite[Proposition~5.8]{ABW13} that for all \(k\in\left[1+\left\lfloor\frac{t}{2}\right\rfloor\right]\), we have
	\begin{equation}
            \label{eq:dim-subspaces}
		\dim\left(\harmonicd{t-2k}\right) = \binom{d+t-2k-1}{d-1}-\binom{d+t-2k-3}{d-1}\,.
	\end{equation}
	Note that in the case where \(d=2\) and \(t\) is even, we use the convention \(\binom{-1}{1}=0\), so that \(\dim\left(\harmonicd{0}\right)=1\). Thus, we know the multiplicities of the eigenvalues of \(\rrhodt\) and are only left with determining their values. In order to do so, we will explicitly construct corresponding eigenvectors.
	
	To achieve this we begin by constructing a more explicit expression for $\rrhodt$. For $x,y \in [d]^t$ and for $i \in [d]$, we define \(n_i(x, y)\) to be the number of times \(i\) appears in the concatenation of \(x\) and \(y\):
	\begin{equation}
		n_i(x, y)\overset{\text{def}}{=}\left|\left\{z\in x|z=i\right\}\right|+\left|\left\{z\in y|z=i\right\}\right|\,.
	\end{equation}
	This allows us to define the following equivalence relation on \([d]^t\):
	\begin{equation}
		x\sim y\iff \forall i\in[d], \,\, n_i(x,y)=0\!\!\!\!\pmod2\,.
	\end{equation}
	That is, \(x\) and \(y\) are equivalent if and only if no element of \([d]\) appears an odd number of times in the concatenation of \(x\) and \(y\). Using this relation we are able to give a more explicit form of $\rrhodt$.
	\begin{restatable}{lemma}{rhodtexpression}
		\label{lem:rhodtexpression}
		Let $d\geqslant2$ and \(t\geqslant1\) be two natural numbers. We have
		\begin{equation}
			\rrhodt = \frac{(d-2)!!}{(d+2t-2)!!}\sum_{x\sim y}\prod_{i=1}^d\left[n_i(x,y)-1\right]!!\ketbra{x}{y}
		\end{equation}
		with the convention \(-1!!=0!!=1\).
	\end{restatable}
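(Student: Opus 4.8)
The plan is to compute the matrix entries of $\rrhodt$ in the standard basis $\{\ket{x}\mid x\in[d]^t\}$ and recognise them as moments of a uniformly random unit vector. Since $O\selfouter{0}O^\top=\selfouter{O0}$ for $O\in\orthogonal{d}$, writing $v=O\ket{0}$ for the first column of $O$ we get, for all $x,y\in[d]^t$,
\begin{equation}
    \braopket{x}{\rrhodt}{y}=\int_{O\in\orthogonal{d}}\prod_{j=1}^t\braopket{x_j}{O\selfouter{0}O^\top}{y_j}\,\mathrm{d}\mu(O)=\int_{O\in\orthogonal{d}}\prod_{i=1}^d v_i^{n_i(x,y)}\,\mathrm{d}\mu(O)\,,
\end{equation}
where we used that $O$ is real (so no complex conjugation appears) and that the index $i$ appears exactly $n_i(x,y)$ times in total among the $x_j$ and $y_j$. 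The first step is the standard observation that, by left-invariance of $\mu$ together with transitivity of $\orthogonal{d}$ on the unit sphere $S^{d-1}$, the vector $v=O\ket{0}$ is distributed uniformly on $S^{d-1}$; hence $\braopket{x}{\rrhodt}{y}=\expect{\prod_{i=1}^d v_i^{n_i(x,y)}}$ with $v$ uniform on $S^{d-1}$.

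The second step is to evaluate this spherical moment. Flipping the sign of any single coordinate preserves the uniform measure on $S^{d-1}$, so $\expect{\prod_i v_i^{m_i}}$ vanishes unless every $m_i$ is even, which is precisely the condition $x\sim y$; this already shows that the off-support entries are zero. When $x\sim y$, set $n_i=n_i(x,y)$, so that all $n_i$ are even with $\sum_i n_i=2t$, and it remains to establish
\begin{equation}
    \expect{\prod_{i=1}^d v_i^{n_i}}=\frac{(d-2)!!}{(d+2t-2)!!}\prod_{i=1}^d(n_i-1)!!\,.
\end{equation}
I would prove this via the Gaussian-to-sphere trick: if $g\in\RR^d$ is a standard Gaussian vector then $g/\|g\|$ is uniform on $S^{d-1}$ and independent of $\|g\|$, so $\prod_i\expect{g_i^{n_i}}=\expect{\|g\|^{2t}}\cdot\expect{\prod_i v_i^{n_i}}$. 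The left-hand side factorises into univariate Gaussian moments $\prod_i(n_i-1)!!$, while $\|g\|^2$ being $\chi^2_d$-distributed gives $\expect{\|g\|^{2t}}=d(d+2)\cdots(d+2t-2)=(d+2t-2)!!/(d-2)!!$; dividing yields the displayed identity, with the conventions $(-1)!!=0!!=1$ making the expression well-defined at boundary coordinates.

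Finally, reassembling $\rrhodt=\sum_{x,y}\braopket{x}{\rrhodt}{y}\,\ketbra{x}{y}$ and substituting the two cases gives exactly
\begin{equation}
    \rrhodt=\frac{(d-2)!!}{(d+2t-2)!!}\sum_{x\sim y}\prod_{i=1}^d\left[n_i(x,y)-1\right]!!\,\ketbra{x}{y}\,,
\end{equation}
as claimed. I do not anticipate a serious obstacle here: the only mildly delicate point is a clean justification of the spherical-moment formula and a check that the double-factorial conventions are consistent at the boundary (in particular $d=2$ with $t$ even, where $(d-2)!!=1$), both of which the Gaussian argument settles in a line. As an alternative one could integrate $\prod_i O_{i,0}^{n_i}$ directly using Weingarten calculus for $\orthogonal{d}$, but the Gaussian route is more elementary and self-contained.
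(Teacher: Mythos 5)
Your proof is correct and follows essentially the same route as the paper's: both express $O\ket{0}$ as a Gaussian vector normalized to the sphere, reduce the matrix entries of $\rrhodt$ to spherical moments $\expect{\prod_i v_i^{n_i}}$, and evaluate these via the independence of $\|g\|$ and $g/\|g\|$ together with univariate Gaussian moments and the $\chi^2_d$ distribution of $\|g\|^2$ (the content of the paper's Lemma~\ref{lem:expectationY}). The only cosmetic differences are that you dispose of the odd-exponent case by a sign-flip symmetry on the sphere rather than by the vanishing of odd Gaussian moments, and you state the $\chi^2$ fact more cleanly than the paper does; neither affects the substance.
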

	As the proof is mostly computational, we defer it to \Cref{app:proof_expression_rho}.
	
	With this explicit form we can use the isomorphism to generate eigenvectors $\ket{v_k}$ and compute analytically $\lambda$ such that $\rrhodt \ket{v_k} = \lambda \ket{v_k}$. Overall, we arrive at the following theorem whose proof is given in \Cref{app:proof_eigenvalues}.
    \begin{restatable}{theorem}{eigenvalues}
		\label{thm:eigenvalues}
        Let $d\geqslant2$ and \(t\geqslant1\) be two natural numbers and let $\{\lambda_k\}_k$ be the eigenvalues of $\rrhodt$ and let $\{\alpha_k\}_k$ be their respective multiplicities. For a homogeneous polynomial \(P\in\homogeneousdt\), we denote by \(\ket{P}\in\symsubspacedt{R}\) its image by the isomorphism described in~\Cref{eq:isomorphism}. Then for any \(k\in\left[1+\left\lfloor\frac{t}{2}\right\rfloor\right]\) and any \(P\in q^k\harmonicd{t-2k}\), \(\ket{P}\) is an eigenvector of \(\rrhodt\) with eigenvalue
        \begin{equation}
            \lambda_k = \frac{t!(d-2)!!}{(2k)!!(d+2t-2k-2)!!}\,.
        \end{equation}
        In particular, this eigenvalue has multiplicity
        \begin{equation}
            \alpha_k = \dim\left(q^k\harmonicd{t-2k}\right) = \binom{d+t-2k-1}{d-1}-\binom{d+t-2k-3}{d-1}\,.
        \end{equation}
        
	\end{restatable}
    \begin{example}
        If \(d=2\), then an orthonormal basis of \(\harmonic{2}{t}\) is
        \begin{equation}
            \label{eq:orthonormal_basis_h2t}
            \left\{\frac{1}{\sqrt{2^{t-1}}}\sum_{k=0}^{\left\lfloor\frac{t}{2}\right\rfloor}(-1)^k\binom{t}{2k}\,X_0^{2k}X_1^{t-2k},\frac{1}{\sqrt{2^{t-1}}}\sum_{k=0}^{\left\lfloor\frac{t-1}{2}\right\rfloor}(-1)^k\binom{t}{2k+1}\,X_0^{2k+1}X_1^{t-2k-1}\right\}\,.
        \end{equation}
        Note that multiplying these two polynomials by \(q^r\) for \(r\in\mathbb{N}\) yields an orthogonal basis of \(q^r\harmonicd{t}\), that we then have to renormalize in order to produce an orthonormal basis.
        We can then use this basis, together with \Cref{eq:decomposition-pdt}, to compute the spectral decomposition of \(\rrho{2}{t}\). For instance, let us take \(t=2\). An orthonormal basis of \(\harmonic{2}{2}\) is then
        \begin{equation}
            \left\{\frac{1}{\sqrt{2}}\,\left(X_1^2-X_0^2\right),\sqrt{2}\,X_0X_1\right\}
        \end{equation}
        which after applying the isomorphism results in the eigenvectors
        \begin{equation}
            \left\{\frac{\ket{11}-\ket{00}}{\sqrt{2}},\frac{\ket{01}+\ket{10}}{\sqrt{2}}\right\}\,.
        \end{equation}
        These eigenvectors are associated with the eigenvalue \(\frac14\). The remaining eigenvector is associated to the one-dimensional subspace $q$. This has an eigenvalue \(\frac12\) and after applying the isomorphism corresponds to \(\frac{\ket{00}+\ket{11}}{\sqrt{2}}\). All in all, we have
        \begin{equation}
                \rrho{2}{2} = \frac14\,\left(\proj{\frac{\ket{11}-\ket{00}}{\sqrt{2}}}+\proj{\frac{\ket{01}-\ket{10}}{\sqrt{2}}}\right)+\frac12\,\proj{\frac{\ket{11}+\ket{00}}{\sqrt{2}}}
        \end{equation}
        with \(\proj{\ket{\psi}}\) being \(\selfouter{\psi}\).
    \end{example}
    \begin{remark}[Constructing eigenvectors]
        For arbitrary $d$ and $t$ explicit formulae for bases of $\harmonicd{t}$ exist, see for example~\cite[Theorem~5.1]{DX13}. 
        In general, if \(\left\{Y_i\right\}_i\) is an orthonormal basis of spherical harmonics, then \(\left\{q^{\frac{t}{2}}Y_i\left(\frac{X}{\sqrt{q}}\right)\right\}_i\) is an orthonormal basis of \(\harmonicd{t}\).       
There are also efficient algorithms to generate an orthonormal basis of \(\harmonicd{t}\) for arbitrary \(d\) and \(t\) with software implementations available~\cite{HFT}. In~\Cref{app:example_spetrctal_decomposition} we provide a more detailed example of constructing the eigenvectors of \(\rrho{3}{3}\) and \(\rrho{3}{2}\) using the above ideas.
    \end{remark}
	
	As $\rrhodt$ and $\crho$ commute, computing the trace distance $\tfrac12 \|\rrhodt - \crho\|_1$ is now straightforward and we find that 
    \begin{equation}
        \frac12 \left\|\rrhodt - \crho\right\|_1 = \frac12 \sum_{k=0}^{\lfloor\frac{t}{2}\rfloor} \alpha_k \left|\lambda_k - \binom{d+t-1}{t}^{-1}\right| 
    \end{equation}
    where $\lambda_k$ and $\alpha_k$ are given in \Cref{thm:eigenvalues}. Through the operational interpretation of the trace distance in terms of distinguishability, this gives us precisely the advantage in distinguishing real-valued and complex-valued Haar random states when given access to $t$-copies. In certain cases we can also find a closed form expression for the trace distance. To begin, we note that $\rrhodt$ and $\crho$ have the same rank and trace. Therefore, as $\rrhodt$ always has more than one eigenvalue (when $d, t \geq 2$) we necessarily have that \(\lambda_0<\binom{d+t-1}{t}^{-1}\). The above trace distance simplifies in the case when we have that $\lambda_1 \geqslant \binom{d+t-1}{t}^{-1}$ which we describe in the following result.
	\begin{restatable}{corollary}{tracedistance}
		\label{cor:trace_distance}
		Let $d\geqslant2$ and \(t\geqslant1\) be two natural numbers such that
		\begin{equation}
                \label{eq:trace_distance_condition}
			\frac{t!(d-2)!!}{2(d+2t-4)!!}\geqslant \binom{d+t-1}{t}^{-1}\,.
		\end{equation}
		Then the trace distance between \(\crho\) and \(\rrhodt\) is given by
		\begin{equation}
                \label{eq:trace_distance}
			\frac12\left\|\crho-\rrhodt\right\|_1=\left(1-\frac{t(t-1)}{(d+t-1)(d+t-2)}\right)\left(1-\frac{(d+t-1)!}{(d-1)!!(d+2t-2)!!}\right)\,.
		\end{equation}
	\end{restatable}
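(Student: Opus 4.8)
The plan is to read the trace distance off the eigenvalue data of \Cref{thm:eigenvalues} and to determine which eigenvalues of $\rrhodt$ lie below $c\defed\binom{d+t-1}{t}^{-1}$, the value of every eigenvalue of $\crho$ on the symmetric subspace. Since $\rrhodt$ and $\crho$ commute, the main text already gives $\frac12\left\|\crho-\rrhodt\right\|_1=\frac12\sum_{k=0}^{\lfloor t/2\rfloor}\alpha_k\left|\lambda_k-c\right|$, so the whole argument reduces to a sign analysis followed by a closed-form evaluation of two factors.

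First I would show that $k\mapsto\lambda_k$ is strictly increasing: a direct computation gives $\lambda_k/\lambda_{k-1}=\frac{d+2t-2k}{2k}$, which exceeds $1$ for every $k\leqslant\lfloor t/2\rfloor$ since then $4k\leqslant 2t<d+2t$. The text already records that $\lambda_0<c$ (this needs $d,t\geqslant2$; the case $t=1$ is degenerate, as then $\rrhodt=\crho=\frac1d\id$ and both sides of \eqref{eq:trace_distance} vanish), and hypothesis \eqref{eq:trace_distance_condition} is exactly the statement $\lambda_1\geqslant c$. Combining these with monotonicity, $k=0$ is the unique index with $\lambda_k<c$, so $\left|\lambda_0-c\right|=c-\lambda_0$ while $\left|\lambda_k-c\right|=\lambda_k-c$ for all $k\geqslant1$.

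Next I would use the two normalisations $\sum_k\alpha_k\lambda_k=\tr{\rrhodt}=1$ and $\sum_k\alpha_k=\dim\symsubspacedt{R}=\binom{d+t-1}{t}=c^{-1}$, which together yield $\sum_k\alpha_k(\lambda_k-c)=0$. Isolating the $k=0$ term rewrites this as $\sum_{k\geqslant1}\alpha_k(\lambda_k-c)=\alpha_0(c-\lambda_0)$, so the sum of absolute values collapses to $2\alpha_0(c-\lambda_0)$ and therefore $\frac12\left\|\crho-\rrhodt\right\|_1=\alpha_0(c-\lambda_0)$.

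It then remains to evaluate the two factors. Factoring $\binom{d+t-1}{d-1}$ out of $\alpha_0=\binom{d+t-1}{d-1}-\binom{d+t-3}{d-1}$ and simplifying the ratio of binomials to $\frac{t(t-1)}{(d+t-1)(d+t-2)}$ gives $\alpha_0=\binom{d+t-1}{d-1}\left(1-\frac{t(t-1)}{(d+t-1)(d+t-2)}\right)$. For the gap, $c-\lambda_0=\binom{d+t-1}{d-1}^{-1}-\frac{t!(d-2)!!}{(d+2t-2)!!}$; factoring out $\binom{d+t-1}{d-1}^{-1}$ and using $(d-1)!=(d-1)!!\,(d-2)!!$ to rewrite $\binom{d+t-1}{d-1}\frac{t!(d-2)!!}{(d+2t-2)!!}=\frac{(d+t-1)!}{(d-1)!!(d+2t-2)!!}$ gives $c-\lambda_0=\binom{d+t-1}{d-1}^{-1}\left(1-\frac{(d+t-1)!}{(d-1)!!(d+2t-2)!!}\right)$. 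Multiplying the two expressions, the binomial coefficients cancel and one recovers exactly \eqref{eq:trace_distance}. I expect the sign analysis of the second step — checking that the monotone sequence $\lambda_k$ crosses $c$ exactly once, between $k=0$ and $k=1$, under hypothesis \eqref{eq:trace_distance_condition} — to be the only conceptually load-bearing part; everything after it is routine bookkeeping with binomial and double-factorial identities.
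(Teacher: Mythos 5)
Your proof is correct and follows essentially the same route as the paper's: identify $k=0$ as the unique index with $\lambda_k<\binom{d+t-1}{t}^{-1}$, use the tracelessness of $\rrhodt-\crho$ to collapse the sum to $\alpha_0\bigl(\binom{d+t-1}{t}^{-1}-\lambda_0\bigr)$, then simplify the binomial and double-factorial expressions. You are a bit more explicit than the paper in two places — you verify monotonicity of $k\mapsto\lambda_k$ via the ratio $\lambda_k/\lambda_{k-1}=\frac{d+2t-2k}{2k}$ and derive the collapse from the two normalisations $\sum_k\alpha_k\lambda_k=1$ and $\sum_k\alpha_k=\binom{d+t-1}{t}$ — but these are just careful spell-outs of the same steps.
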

        The proof of this corollary being only computational, we defer it to \Cref{app:trace_distance_proof}. We show in \Cref{lem:asymptotics} that if \(t<\frac{5+\sqrt{9+8d\ln\left(\frac{d}{2}\right)}}{2}\), then \Cref{eq:trace_distance_condition} holds for sufficiently large \(d\). In particular, this enforces \(t=\bigo{\sqrt{d\ln(d)}}\). Though this seems limiting, we can show that this covers every interesting regime for our applications, in that it covers the ones where the trace distance is negligible. Indeed, as shown in \Cref{lem:asymptotics}, under this condition, the trace distance can be written as
        \begin{equation}\label{eq:td-asymptotic}
            \frac12\left\|\crho-\rrhodt\right\|_1=\left(1-\bigtheta{\frac{t^2}{d^2}}\right)\left(1-\mathrm{e}^{-\frac{t(t-1)}{2d}+\bigtheta{\frac{t^3}{d^2}}}\right)\,.
        \end{equation}

        If we look more carefully, this splits into three asymptotic regimes wherein (i) the trace distance vanishes, (ii) the trace distance has a nontrivial lower bound and (iii) the trace distance converges to 1.
        \begin{enumerate}
            \item Let \(t=\smallo{\sqrt{d}}\), then we have
        \begin{equation}
            \frac12\left\|\crho-\rrhodt\right\|_1=\frac{t(t-1)}{2d}+\bigtheta{\frac{t^3}{d^2}}\,.
        \end{equation}
        Notably in this regime the trace distance vanishes asymptotically.
        \item If \(t\sim\alpha\sqrt{d}\), then the trace distance reduces to
        \begin{equation}
            \label{eq:trace-distance-converges-to-non-zero}
            \frac12\left\|\crho-\rrhodt\right\|_1=1-\mathrm{e}^{-\frac{\alpha^2}{2}}+\smallo{1}\,.
        \end{equation}
        In this case, we can see that the trace distance doesn't converge to 0.
        \item If \(t=\omega\!\left(\sqrt{d}\right)\), then the trace distance converges to 1.
        \end{enumerate}

        Finally, we note that the computation of the trace norm using Theorem~\ref{thm:eigenvalues} extends immediately to other Schatten $p$-norms, in particular we have that for any $p>1$
        \begin{equation}
            \left\|\rrhodt-\crho\right\|_{p} = \left(\sum_{k=0}^{\lfloor\frac{t}{2}\rfloor} \alpha_k \left|\lambda_k - \binom{d+t-1}{t}^{-1}\right|^p\right)^{1/p}
        \end{equation}
        and for the operator norm this simplifies to
        \begin{equation}
            \left\|\rrhodt-\crho\right\|_{\infty}=\max\left(\lambda_{\left\lfloor\frac{t}{2}\right\rfloor}-\binom{d+t-1}{t}^{-1},\binom{d+t-1}{t}^{-1}-\lambda_0\right)\,.
        \end{equation}

        \section{Applications}
        \label{section:applications}
        In this section, we use \Cref{cor:trace_distance} to show a lower-bound on the approximation parameter \(\varepsilon\) of real-valued \(\varepsilon\)-approximate state \(t\)-designs. We also refine and improve upon \citeauthor{HBK24}'s argument about the minimal number of queries required to test for the imaginarity of a state~\cite{HBK24}.
        
	\subsection{Real-valued approximate state \texorpdfstring{$t$}{t}-designs and ARS}
        Let us start by formally defining an \(\varepsilon\)-approximate state \(t\)-design.
        \begin{definition}[\(\varepsilon\)-approximate state \(t\)-design, adapted from {\cite{BS19}}]
            \label{def:state_designs}
            Let \(\varepsilon\in[0, 1]\) and \(t\in\mathbb{N}\). A set of quantum states \(\mathcal{S}\) equipped with a probability measure \(\lambda\) is an \(\varepsilon\)-approximate \(t\)-design if 
            \begin{equation}
                \frac12\left\|\int_{\ket{\psi}\in\mathcal{S}}(\selfouter{\psi})^{\otimes t}\,\mathrm{d}\lambda(\ket{\psi})-\crho\right\|_1\leqslant\varepsilon\,.
            \end{equation}
        \end{definition}
        
        By extension, we'll generally call an ensemble of states an \(\varepsilon\)-approximate state \(t\)-design when \(\varepsilon\) is the smallest number satisfying this definition, in which case we'll call \(\varepsilon\) the \emph{approximation parameter} of the approximate state design.
        
        Note that the definition usually restricts the approximate design to be a finite ensemble of quantum states. Here we allow for arbitrary ensembles. Furthermore, though this definition defines \(\varepsilon\)-approximate states \(t\)-designs using the trace norm, we could instead require closeness in spectral norm, as originally done in~\cite{AE07}. In all cases, the approximation parameter \(\varepsilon\) translates how well does the design reproduces the first \(t\) moments of the Haar measure. It is thus natural to ask whether there are fundamental lower-bounds on this approximation parameter given certain constraints.
        
        Following \citeauthor{BS19}'s footsteps, we can then use state $t$-designs to define Asymptotically Random States (ARS). 
        \begin{definition}[Asymptotically Random State, adapted from {\cite{BS19}}]
        An ARS, or Asymptotically Random State, is a sequence of \(\Negl(\log(d))\)-approximate state \(\log(d)^{\omega(1)}\)-designs, with \(d\) being the dimension of the quantum states and \(\Negl\) is a function such that for any polynomial $P$, \(\Negl(x)<\frac{1}{P(x)}\) for \(x\) being large enough.
        \end{definition}
        Whilst this definition is somewhat technical it can be readily unpacked with an example. 
        Consider the case of qubit systems, where the dimension is \(d=2^n\), with \(n\) being the number of qubits. Using this definition, an ARS is a sequence indexed by the number of qubits such that, when increasing the number of qubits by 1, one increases the number of copies (by more than \(1\)) that one can send to an adversary while retaining the same security.

        When analysing constructions of ARS, it is common to fix \(t\) and to compute the optimal approximation parameter as a function of the dimension \(d\) and the number of copies \(t\). As an example, let us consider the first ARS construction, introduced and studied in~\cite{JLS18}. For a fixed \(t\) and dimension \(d\), one considers the set of states
        \begin{equation}
            \mathcal{S}_{\text{Fourier}} = \left\{\ket{\psi_f^{\text{Fourier}}}\middle|f:\{0, 1\}^n\to\left[2^n\right]\right\}
        \end{equation}
        with \(\ket{\psi_f^{\text{Fourier}}}\) being defined as
        \begin{equation}
            \label{eq:prsfourier}
            \ket{\psi_f^{\text{Fourier}}}\defed\frac{1}{\sqrt{2^n}}\sum_{x\in\{0, 1\}^n}\mathrm{e}^{\frac{2\mathrm{i}\pi f(x)}{2^n}}\,\ket{x}\,.
        \end{equation}
        To make up an ensemble of states, we equip \(\mathcal{S}_{\text{Fourier}}\) with the uniform measure on the set of functions from the \(n\)-bit bitstrings \(\{0,1\}^n\) to \(\left[2^n\right]\). That is, sampling a state in \(\mathcal{S}_{\text{Fourier}}\) is done by assigning to the phase of each component \(\ket{x}\) of the statevector a randomly chosen power of the \(2^n\)-th root of unity \(\mathrm{e}^{\frac{2\mathrm{i}\pi}{2^n}}\).
        Let us denote \(\meanrho{\mathcal{S}^t_{\text{Fourier}}}\) the density matrix resulting from sampling \(t\) copies of a quantum state in \(\mathcal{S}_{\text{Fourier}}\) uniformly at random. The trace distance between \(\meanrho{\mathcal{S}^t_{\text{Fourier}}}\) and \(\crho\) has been exactly computed in~\cite[Lemma~1]{JLS18}. If we compute the asymptotics of this exact expression for the trace distance, we find that we have
        \begin{subequations}
            \begin{align}
                \frac12\left\|\meanrho{\mathcal{S}^t_{\text{Fourier}}}-\crho\right\|_1&=\frac{2^n\left(2^n-1\right)\cdots\left(2^n-t+1\right)}{2^{nt}}-\frac{2^n\left(2^n-1\right)\cdots\left(2^n-t+1\right)}{\left(2^n+t-1\right)\cdots2^n}\\
             &=\frac{t(t-1)}{2^{n+1}}+\bigtheta{\frac{t^4}{2^{2n}}}\,.
            \end{align}
        \end{subequations}
        Thus, \(\mathcal{S}_{\text{Fourier}}\) equipped with the uniform measure forms a \(\bigtheta{\frac{t^2}{2^n}}\)-approximate state \(t\)-design for all \(t\) and \(n\). Loosely speaking, increasing \(n\) by \(1\) allows to send \(\sqrt{2}\) more copies while keeping the same approximation parameter.
        
        As \(t\) grows, it becomes more difficult to reproduce the first $t$ moments of the Haar measure with a good precision, meaning that the optimal \(\varepsilon\) will at least grow with \(t\). For the purposes of applications, we are interested in understanding the speed of this growth. For example, consider the private-key quantum money scheme described in~\cite[Theorem~6]{JLS18}. For the sake of the argument, we will ignore all considerations relating to the complexity of producing the quantum states here. In this scheme, users can deposit a fixed amount of money in the bank, in exchange of which they receive a state \(\ket{\psi_{f_{\text{Bank}}}^{\text{Fourier}}}\), where \(f_\text{Bank}\) is a secret function, known only to the bank and identical for all users. Intuitively, the security of this scheme relies on the fact that users can't clone this quantum state, which means that the number of states that the bank can issue is limited, as it would otherwise risk the users colluding and perform tomography on this state. What the approximation parameter tells us is that if \(n\) is large enough, then as long as the bank issues less than \(\approx\sqrt{2^n}\) quantum states, the security of the system is ensured. Indeed, in that case, this quantum state cannot be readily distinguished from a Haar random state, even with \(\Omega(\sqrt{2^{n}})\) copies.

        Whilst this is not a formal security proof, morally the approximation parameter limits the performance of the ARS in practical applications. For instance, had the bank used an ARS with a smaller approximation parameter, such as \(\bigo{\frac{t}{2^n}}\) for the construction in~\cite{BS20}, then it could issue significantly more quantum states whilst remaining secure.

        It is thus of interest to understand any fundamental limitations that exist on the approximation parameter of state $t$-designs and in turn ARS. To this end, let us consider the following lemma, which shows such a limitation.

        \begin{lemma}[Distinguishing advantage lower bound]
		\label{lem:upper-bound-real-valued-ars}
		Let $d\geqslant2$ and \(t\geqslant1\) be integers, let \(p\in[1,\infty]\) and let \(\mathcal{S}_d\) be any given set of \(d\)-dimensional real-valued pure quantum states equipped with a probability measure \(\lambda\). Let 
        \begin{equation}
            \meanrho{\mathcal{S}_d^t} \defed \int_{\ket{\psi}\in\mathcal{S}_d}\left(\selfouter{\psi}\right)^{\otimes t}\,\mathrm{d}\lambda(\ket{\psi})\,,
        \end{equation}
        then we have
		\begin{equation}
			\left\|\meanrho{\mathcal{S}_d^t}-\crho\right\|_p\geqslant\left\|\rrhodt-\crho\right\|_p\,.
		\end{equation}
	\end{lemma}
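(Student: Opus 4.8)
The plan is to construct an \emph{orthogonal twirl} channel that collapses any real-valued ensemble onto $\rrhodt$ while leaving $\crho$ untouched, and then to use that Schatten norms do not increase under such channels. Concretely, I would define the linear map
\begin{equation}
    \mathcal{T}(X)\defed\int_{O\in\orthogonal{d}}O^{\otimes t}\,X\,\left(O^{\otimes t}\right)^\top\,\mathrm{d}\mu(O)
\end{equation}
on operators over $\left(\RR^d\right)^{\otimes t}$; note that $\meanrho{\mathcal{S}_d^t}$, $\rrhodt$ and $\crho$ all have real entries in the computational basis (for $\crho=\binom{d+t-1}{t}^{-1}\projdtC$ this is because $\projdtC$ is an average of permutation matrices), so there is no loss in working over $\RR$, and the Schatten $p$-norms coincide with those computed over $\CC$.

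First I would check the two defining properties of $\mathcal{T}$. For the first, every $\ket{\psi}\in\mathcal{S}_d$ is a real unit vector, hence $\ket{\psi}=O_\psi\ket{0}$ for some $O_\psi\in\orthogonal{d}$; then $\mathcal{T}\!\left(\left(\selfouter{\psi}\right)^{\otimes t}\right)=\int_{O}\left(OO_\psi\selfouter{0}O_\psi^\top O^\top\right)^{\otimes t}\mathrm{d}\mu(O)=\rrhodt$ by right-invariance of the Haar measure. Averaging over $\ket{\psi}\sim\lambda$ and using linearity and continuity of $\mathcal{T}$ gives $\mathcal{T}\!\left(\meanrho{\mathcal{S}_d^t}\right)=\rrhodt$. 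For the second, $\projdtC$ commutes with $U^{\otimes t}$ for every $U\in\unitary{d}$, in particular with $O^{\otimes t}$ for every $O\in\orthogonal{d}$, so each conjugation fixes $\crho$ and hence $\mathcal{T}(\crho)=\crho$.

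It then remains to conclude. Since $\mathcal{T}$ is an average (integral) of conjugations by the unitaries $O^{\otimes t}$ and every Schatten $p$-norm with $p\in[1,\infty]$ is unitarily invariant, the integral form of the triangle inequality yields, for any operator $Y$, $\left\|\mathcal{T}(Y)\right\|_p\leqslant\int_O\left\|O^{\otimes t}Y\left(O^{\otimes t}\right)^\top\right\|_p\mathrm{d}\mu(O)=\left\|Y\right\|_p$. Applying this to $Y=\meanrho{\mathcal{S}_d^t}-\crho$ and using the two properties above gives $\left\|\rrhodt-\crho\right\|_p=\left\|\mathcal{T}\!\left(\meanrho{\mathcal{S}_d^t}-\crho\right)\right\|_p\leqslant\left\|\meanrho{\mathcal{S}_d^t}-\crho\right\|_p$. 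I do not expect a genuine obstacle here: the only points needing a little care are justifying the passage of $\mathcal{T}$ through the $\lambda$-integral and the integral triangle inequality (both routine in finite dimension), and observing that all operators involved may be taken real so that the $p$-norms over $\RR$ and $\CC$ agree.
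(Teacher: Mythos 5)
Your proposal is correct and follows essentially the same route as the paper: define the orthogonal twirling channel, observe that it sends $\meanrho{\mathcal{S}_d^t}$ to $\rrhodt$ and fixes $\crho$, then conclude by contractivity in the Schatten $p$-norm. The only (minor) difference is that the paper invokes a cited theorem that unital channels are $p$-norm contractive, whereas you derive the contraction directly from unitary invariance of Schatten norms together with the integral triangle inequality, which is a perfectly valid and self-contained alternative justification of that last step.
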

	\begin{proof}
		Let us define \(\Phi\) to be the orthogonal group twirling channel:
		\begin{equation}
			\Phi\defed\rho\mapsto\int_{O\in\orthogonal{d}}O^{\otimes t}\rho\left(O^\top\right)^{\otimes t}\,\mathrm{d}\mu(O)\,.
		\end{equation}
		We have
		\begin{subequations}
			\begin{align}
				\Phi\left(\meanrho{\mathcal{S}_d^t}\right) &= \Phi\left(\int_{\ket{\psi}\in\mathcal{S}_d}\left(\selfouter{\psi}\right)^{\otimes t}\,\mathrm{d}\lambda(\ket{\psi})\right)\\
				&= \int_{\ket{\psi}\in\mathcal{S}_d}\Phi\left(\left(\selfouter{\psi}\right)^{\otimes t}\right)\,\mathrm{d}\lambda(\ket{\psi})\\
				&= \int_{\ket{\psi}\in\mathcal{S}_d}\rrhodt\,\mathrm{d}\lambda(\ket{\psi})\\
				&= \rrhodt
			\end{align}
		\end{subequations}
		where on the third line we used the fact that we have we have \(\Phi\left(\left(\selfouter{\psi}\right)^{\otimes t}\right)=\rrhodt\) for any real-valued state \(\ket{\psi}\in\realvecs{d}\). Similarly, we have \(\Phi\left(\crho\right)=\crho\) as $\crho$ is invariant under $t$-copy unitary conjugation. Now, since \(\Phi\) is a unital map, it is contractive under the \(p\)-norm~\cite[Theorem~II.4]{Per+06}, which yields
		\begin{equation}
			\left\|\meanrho{\mathcal{S}_d^t}-\crho\right\|_p\geqslant\left\|\Phi\left(\meanrho{\mathcal{S}_d^t}\right)-\Phi\left(\crho\right)\right\|_p=\left\|\rrhodt-\crho\right\|_p
		\end{equation}
		which concludes the proof.
	\end{proof}

        The above lemma shows that \emph{any} real-valued approximate state $t$-design is intrinsically limited in how well in can approximate the first $t$-moments of the Haar distribution on the whole space. It also formally shows, what one would intuitively expect, real Haar random states are the real-valued state $t$-design with the smallest approximation parameter.

        Whilst the actual probability of distinguishing a real-valued ensemble of states and $\crho$ will likely be much higher than the lower bound implied by \Cref{lem:upper-bound-real-valued-ars}, it is interesting to note that an explicit projective measurement achieving the lower bound can easily by defined. If we let $\{\Pi_+, \id - \Pi_+\}$ be the optimal measurement for distinguishing \(\rrhodt\) and \(\crho\) then it turns out that 
        \begin{equation}
            \frac12 \tr{\Pi_+ \meanrho{\mathcal{S}_d^t} } + \frac12 \tr{(\id-\Pi_+) \crho} = \frac12 + \frac14 \| \rrhodt - \crho\|_1\,.
        \end{equation}
        That is, the measurement $\{\Pi_+, \id-\Pi_+\}$ achieves precisely this lower bound as shown in the following proposition.        
        \begin{proposition}
            \label{prop:distinguishing_pvm}
            Let \(d\geqslant2\) and \(t\geqslant1\) be two natural numbers and and let \(\mathcal{S}_d\) be any given set of \(d\)-dimensional real-valued pure quantum states equipped with a probability measure \(\lambda\). Let us denote
            \begin{equation}
                \meanrho{\mathcal{S}_d^t}\defed\int_{\ket{\psi}\in\mathcal{S}_d}(\selfouter{\psi})^{\otimes t}\,\mathrm{d}\lambda(\ket{\psi})\,.
            \end{equation}
            Let \(\Pi_+\) be the projector onto the eigenspaces of \(\rrhodt\) associated to an eigenvalue larger than \(\binom{d+t-1}{t}^{-1}\). We then have
            \begin{equation}
                \tr{\Pi_+\meanrho{\mathcal{S}_d^t}} = \tr{\Pi_+\rrhodt}\,.
            \end{equation}
        \end{proposition}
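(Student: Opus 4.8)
The plan is to exploit the fact that $\Pi_+$, being a spectral projector of $\rrhodt$, inherits $\rrhodt$'s commutation relation with the $t$-fold tensor powers of orthogonal matrices, and then to symmetrize over $\orthogonal{d}$. First I would observe that by \Cref{lem:commute-orthogonal} we have $O^{\otimes t}\rrhodt=\rrhodt O^{\otimes t}$ for every $O\in\orthogonal{d}$. Since $\Pi_+$ is obtained from $\rrhodt$ by functional calculus --- it is the indicator of the part of the spectrum lying strictly above $\binom{d+t-1}{t}^{-1}$ applied to $\rrhodt$ --- it commutes with everything that commutes with $\rrhodt$, so $O^{\otimes t}\Pi_+=\Pi_+ O^{\otimes t}$ for all $O$, equivalently $(O^\top)^{\otimes t}\Pi_+ O^{\otimes t}=\Pi_+$ since $O^\top O=\id$. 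In the language of the orthogonal twirling channel $\Phi$ introduced in the proof of \Cref{lem:upper-bound-real-valued-ars}, this says exactly that $\Phi(\Pi_+)=\Pi_+$.

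Next I would fix an arbitrary real state $\ket{\psi}\in\realvecs{d}$ and note that for any $O\in\orthogonal{d}$, $O\ket{\psi}$ is again a real state and, using the commutation above,
\begin{equation}
    \tr{\Pi_+\left(\selfouter{O\psi}\right)^{\otimes t}}=\tr{(O^\top)^{\otimes t}\Pi_+ O^{\otimes t}\left(\selfouter{\psi}\right)^{\otimes t}}=\tr{\Pi_+\left(\selfouter{\psi}\right)^{\otimes t}}\,.
\end{equation}
Averaging this identity over $O$ against the Haar measure, and writing $\ket{\psi}=O_\psi\ket{0}$ so that $\int_{O\in\orthogonal{d}}\left(\selfouter{O\psi}\right)^{\otimes t}\,\mathrm{d}\mu(O)=\rrhodt$ by right-invariance of the Haar measure, gives $\tr{\Pi_+\left(\selfouter{\psi}\right)^{\otimes t}}=\tr{\Pi_+\rrhodt}$ for \emph{every} real state $\ket{\psi}$. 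Integrating this over the probability measure $\lambda$, the left-hand side integrates to $\tr{\Pi_+\meanrho{\mathcal{S}_d^t}}$ while the right-hand side is a constant, which yields the claim.

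Equivalently, and more slickly, one can combine the pieces at the channel level: $\tr{\Pi_+\meanrho{\mathcal{S}_d^t}}=\tr{\Phi(\Pi_+)\meanrho{\mathcal{S}_d^t}}=\tr{\Pi_+\Phi(\meanrho{\mathcal{S}_d^t})}=\tr{\Pi_+\rrhodt}$, using $\Phi(\Pi_+)=\Pi_+$, the self-adjointness of $\Phi$ with respect to the Hilbert--Schmidt inner product (which follows from left/right-invariance of the Haar measure together with $(O^{\otimes t})^\dagger=(O^\top)^{\otimes t}$), and the identity $\Phi(\meanrho{\mathcal{S}_d^t})=\rrhodt$ already established in the proof of \Cref{lem:upper-bound-real-valued-ars}.

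I do not expect a genuine obstacle here: the only point needing a line of justification is that a spectral projector of $\rrhodt$ commutes with the $O^{\otimes t}$, which is the standard fact that functions of an operator commute with its commutant; everything else is bookkeeping with Haar invariance. The one mild subtlety is purely notational --- keeping track that $\Pi_+$ is a projector on all of $\left(\RR^d\right)^{\otimes t}$, with the kernel of $\rrhodt$ (an eigenspace of eigenvalue $0<\binom{d+t-1}{t}^{-1}$) excluded from $\Pi_+$ --- but this does not affect the argument, since $\meanrho{\mathcal{S}_d^t}$ is also supported on $\symsubspacedt{R}$ by \Cref{lem:on-real-symspace}.
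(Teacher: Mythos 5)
Your proof is correct, and it differs from the paper's in how the key commutation $[\Pi_+, O^{\otimes t}]=0$ is established. The paper writes $\Pi_+$ explicitly as a sum of projectors onto the invariant subspaces isomorphic to $q^k\harmonicd{t-2k}$ and then argues that each of those projectors commutes with every $O^{\otimes t}$ because the subspaces are $\orthogonal{d}$-invariant. You instead derive the commutation directly from \Cref{lem:commute-orthogonal} together with the standard fact that a spectral projector of a self-adjoint operator lies in the double commutant of that operator (equivalently: any operator commuting with $\rrhodt$ preserves its eigenspaces and hence commutes with any sum of eigenprojections). Your route is more elementary and self-contained, as it avoids re-invoking the harmonic-polynomial decomposition; the paper's route is more explicit about the structure of $\Pi_+$ and fits naturally after the representation-theoretic build-up of \Cref{section:main}. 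The remainder of your argument (showing $\tr{\Pi_+(\selfouter{\psi})^{\otimes t}}$ is independent of the real state $\ket{\psi}$ and then integrating against $\lambda$) is essentially the same in substance as the paper's, and the alternative channel-level version you give — using $\Phi(\Pi_+)=\Pi_+$, the Hilbert--Schmidt self-adjointness of $\Phi$, and $\Phi(\meanrho{\mathcal{S}_d^t})=\rrhodt$ — is a clean and correct repackaging. One small note: you cite \Cref{lem:on-real-symspace} for the support of $\meanrho{\mathcal{S}_d^t}$, but that lemma is stated for $\rrhodt$; the analogous statement for a general real ensemble holds by the same permutation-invariance argument, though it is not what the lemma literally says (and it is not actually needed for your proof to close).
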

        \begin{proof}
            Note that we have by linearity
            \begin{equation}
                \tr{\Pi_+\meanrho{\mathcal{S}_d^t}} = \int_{\ket{\psi}\in\mathcal{S}_d}\tr{\Pi_+\left(\selfouter{\psi}\right)^{\otimes t}}\,\mathrm{d}\lambda(\ket{\psi})\,.
            \end{equation}
            Now, recall that by construction, \(\Pi_+\) can be written as
            \begin{equation}
                \Pi_+=\sum_{k}\proj{q^k\harmonicd{t-2k}}
            \end{equation}
            where \(\proj{q^k\harmonicd{t-2k}}\) is the projector onto the subspace of \(\symsubspacedt{R}\) isomorphic to \(q^k\harmonicd{t-2k}\), and where \(k\) indexes the eigenspaces where \(\rrhodt\) has eigenvalues larger than \(\binom{d+t-1}{t}^{-1}\). Since the representation \(\orthogonal{d}\ni O\mapsto O^{\otimes t}\) is irreducible on these spaces, it means that \(\proj{q^k\harmonicd{t-2k}}\) and \(O^{\otimes t}\) commute for all \(O\in\orthogonal{d}\) and all indexes \(k\). As a consequence, \(\Pi_+\) commutes with every \(O^{\otimes t}\) for \(O\in\orthogonal{d}\). This means that for all \(\ket{\psi}\in\realvecs{d}\) we have
            \begin{subequations}
                \begin{align}
                    \tr{\Pi_+\left(\selfouter{\psi}\right)^{\otimes t}} &= \bra{\psi}^{\otimes t}\Pi_+\ket{\psi}^{\otimes t}\\
                    &= \bra{0}\left(O^\top\right)^{\otimes t}\Pi_+O^{\otimes t}\ket{0}\\
                    &= \bra{0}\Pi_+\ket{0}
                \end{align}
            \end{subequations}
            where \(O\) is any orthogonal matrix such that \(O\ket{0}=\ket{\psi}\), and where on the third line we used the fact that \(O^{\otimes t}\) and \(\Pi_+\) commute. In particular, this quantity is independent of the input state. By linearity, we thus get that for all \(\ket{\psi}\in\realvecs{d}\), we have
            \begin{equation}
                \tr{\Pi_+\left(\selfouter{\psi}\right)^{\otimes t}}=\tr{\Pi_+\rrhodt}
            \end{equation}
            which implies by linearity and the fact that \(\lambda\) is a probability measure that            \begin{equation}
                \tr{\Pi_+\meanrho{\mathcal{S}_d^t}} = \tr{\Pi_+\rrhodt}\,.
            \end{equation}
        \end{proof}
        Let us try to give some intuition as to why this result holds. In order to achieve the same probability of success, an adversary can apply the orthogonal twirling channel to the state they were given, and then apply the \(\left\{\Pi_+,\id-\Pi_+\right\}\) measurement. However, there is an additional property, as shown in the proof, that this measurement is invariant under orthogonal conjugation and hence the distinguishing probability remains the same whether or not the twirling channel is applied. 

        \subsubsection{On the compromise between efficient generation and good approximation}

        \Cref{lem:upper-bound-real-valued-ars} expresses the relatively intuitive idea that of all the probability measures on real-valued states, the Haar measure on the orthogonal group is the closest one to the Haar measure on the unitary group. In particular, no real-valued ARS can outperform the one resulting from sampling real-valued quantum states using the Haar measure on \(\orthogonal{d}\) in terms of approximation parameter.

        The practicality of an approximate state design is essentially a trade-off between the complexity of sampling the state
        and how well it approximates the first moments of the Haar distribution. A common trick to relax the constraint on complexity is to replace all instances of random functions in an ARS construction, e.g. \Cref{eq:prsfourier} by example, by quantum-secure pseudorandom ones. By definition, such a function can't be distinguished from a truly random one assuming bounded computational resources. The cost of preparing a quantum state in that case is then counted in terms of how many oracle calls to this quantum-secure pseudorandom function are needed. A brief overview of different existing constructions is shown in \Cref{tab:sota_prs}.

        \begin{table}[ht]
            \centering
            \small
            \begin{tabular}{|c|c|c|c|c|}
                \hline
                Short description & \begin{tabular}{c}Approximation\\parameter\end{tabular} & \begin{tabular}{c}Preparation\\cost\end{tabular} & Real? & Ref. \\
                \hline
                \begin{tabular}{c}Random roots of\\ unity phases \end{tabular} & \(\bigtheta{\frac{t^2}{2^n}}\) & \begin{tabular}{c}1  oracle call\end{tabular} & No & ~\cite{JLS18}\\
                \hline
                \begin{tabular}{c}Hamiltonian Phase States\\with \(m\) random angles\end{tabular} & \(\mathrm{e}^{2nt-\frac{m}{2t}}+\bigo{\frac{t^2}{2^n}}\) & \begin{tabular}{c}\(\lceil m/n \rceil\) layers of\\CNOTs and Zs\\ \end{tabular} & No & \cite{BHHP24}\\
                \hline
                \begin{tabular}{c}Random binary phases \end{tabular} & \(\bigtheta{\frac{t^2}{d}}\)\tablefootnote{The exact scaling is derived in~\Cref{app:trace-distance-binary-phases}, the previous proofs gave asymptotically optimal upper-bounds.} & \begin{tabular}{c}1 oracle call\end{tabular} & Yes & \cite{JLS18}\tablefootnote{Though the binary phase construction has been introduced in~\cite{JLS18}, its security has first been shown in~\cite{BS19}, and a simpler proof was later given in\cite{AQY22}.}\\
                \hline
                \begin{tabular}{c}Random binary phases\\applied
                in layers of blocks\\of size \(\xi\)\end{tabular} & \(\bigo{\frac{nt^2}{2^\xi\xi}}\) & \begin{tabular}{c}\(\bigo{\log(t)\log(\xi)}\)\\circuit depth\end{tabular} & Yes & \cite{CSBH25}\\
                \hline
                \begin{tabular}{c}Uniform superposition over \\
                random subset of size $K$ of \\
                computational basis\end{tabular} & \(\bigo{\frac{t^2}{d}+\frac{t}{\sqrt{K}}+\frac{tK}{d}}\) & \begin{tabular}{c}1 oracle call\end{tabular} & Yes & \cite{JMW24}\tablefootnote{This construction was also studied in~\cite{GB23} on qubits systems.}\\
                \hline
                \begin{tabular}{c}Random subset construction \\ with random binary phases\end{tabular} & \(\bigo{\frac{t^2}{K}}\) & \begin{tabular}{c}3 oracle calls\end{tabular} & Yes & \cite{Aar+23}\\
                \hline
                \begin{tabular}{c}PFC\tablefootnote{The PFC ensemble is comprised of operators that are the product of a permutation, a binary phase operator and a Clifford operator.} ensemble applied\\on a fiducial state \end{tabular} & \(\bigo{\frac{t}{\sqrt{d}}}\) & \begin{tabular}{c}\(\bigo{tn\Polylog(n)}\) \\ circuit depth\end{tabular} & No & \cite{MPSY24}\\
                \hline
                \begin{tabular}{c}\(\varepsilon\)-net with security\\parameter \(\lambda\) \end{tabular} & \(\bigo{\frac{t}{\mathrm{e}^{\lambda}}+\frac{\sqrt{t}+\lambda} {2^\lambda}+\left(\frac45\right)^\lambda}\) & \begin{tabular}{c}\(\bigtheta{n\lambda+\lambda^2}\)\\oracle calls\end{tabular} & No & \cite{BS20}\\
                \hline
                \begin{tabular}{c}Parallel Kac's walk with\\security parameter \(\lambda\)\end{tabular}& \(\bigo{\frac{\lambda nt}{\lambda^{\log(\lambda)}n^{\log(n)}}+\frac{t}{2^{\lambda n}}}\) & \begin{tabular}{c}\(\bigtheta{\lambda n}\)\\oracle calls\end{tabular} & No & \cite{LQSYZ24}\\
                \hline
            \end{tabular}
            \caption{Comparison of the approximation parameter and preparation cost of different ARS constructions. When available, the actual circuit depth is provided, and oracle calls are used otherwise, meaning that the actual depth depends on the implementation of the oracle.}
            \label{tab:sota_prs}
        \end{table}

        This table firstly shows that all of the real-valued constructions of ARS have an approximation parameter scaling like \(\Omega\!\left(\frac{t^2}{d}\right)\), which coincides with the results derived in \Cref{lem:upper-bound-real-valued-ars} and \Cref{cor:trace_distance}. Interestingly, they do so asymptotically optimally, in the sense that their approximation parameter scales quadratically with the dimension, matching the lower-bound derived from our result. Moreover, two constructions~\cite{BS20,LQSYZ24} have an approximation parameter that scales linearly with the number of copies and inversely linearly with the dimension. Both of which necessarily have high imaginarity (see \Cref{subsec:imaginarity}), as they mimic the Haar measure on the unitary group closely.

        Having a complex-valued construction is not sufficient to have such an approximation parameter as the other constructions~\cite{JLS18, BHHP24, MPSY24} demonstrate with their quadratic difference between $t$ and $d$. Finally, one can note that the constructions with the smallest approximation parameters are also those with the most costly state preparation procedures, as one may expect from the fact that preparing Haar random states requires circuits of exponential size~\cite{HP07}. Thus the search for constructions which are practical in both the approximation parameter and the preparation cost remains important.

    \subsection{Testing the imaginarity of a state}
        \label{subsec:imaginarity}
        A recent line of work studied the concept of \emph{imaginarity} of quantum states in the context of quantum resource theory~\cite{HG18, Wu+21, Wu+21b, Xue+21}. Even though it has been noted that this theory may have operational relevance~\cite{Wu+21}, its main motivation has originally been to study the role of complex Hilbert spaces in quantum theory. In particular, some works identified experiments demonstrating the insufficiency of real Hilbert spaces to describe quantum theory~\cite{Ren+21}. Though the existence of such work may suggest that testing whether a quantum state is real is easy, it has been shown in~\cite[Theorem~2]{HBK24} that such a task actually required a large number of copies of the input state. In the following, we will detail their method and improve upon their result. Along the way, we also provide an exact security analysis of the binary phase ARS construction which may be of independent interest.
        
        Consider the following task: a tester \(\mathcal{A}\) is given \(t\) copies of a pure state \(\ket{\psi}\) and must say whether the \emph{imaginarity} of \(\ket{\psi}\) is larger than a certain threshold \(\delta\) with constant probability. Following~\cite[Equation~3]{HBK24}, we define the imaginarity of a pure state \(\ket{\psi}\) as
        \begin{equation}
            \mathcal{I}(\ket{\psi})\defed1-\left|\inner{\psi}{\overline{\psi}}\right|^2.
        \end{equation}
        A formal definition of such a tester is given below.
        \begin{definition}[Imaginarity tester, adapted from {\cite[Definition~2]{HBK24}}] Let $\delta > \beta > 0$ and $t \in \mathbb{N}$. 
            An algorithm \(\mathcal{A}\) is called a $(\beta, \delta, t)$-imaginarity tester if for all states $\ket{\psi} \in \mathbb{C}^d$ it satisfies the following.
            \begin{enumerate}
                \item (Completeness) If \(\mathcal{I}(\ket{\psi})\leqslant\beta\), then \(\mathbb{P}\left[\mathcal{A}\left(\ket{\psi}^{\otimes t}\right)=1\right]\geqslant\frac23\).
                \item (Soundness) If \(\mathcal{I}(\ket{\psi})\geqslant\delta\), then \(\mathbb{P}\left[\mathcal{A}\left(\ket{\psi}^{\otimes t}\right)=1\right]\leqslant\frac13\).
            \end{enumerate}
        \end{definition}

        It was shown in~\cite[Lemma~2]{HBK24} that Haar random states have high imaginarity on average, namely \(1-\frac{2}{d+1}\), but distinguishing them from sets of real-valued states with non-negligible probability may still require \(\Omega\left(\sqrt{d}\right)\) copies. Since computing the imaginarity of a state is one way an adversary may try to distinguish sampling from a set of real-valued states and sampling from the Haar measure, it follows that computing the imaginarity on average must require at least as many copies as required to distinguish these two ensembles.

        To see this more explicitly, let us consider an imaginarity tester \(\mathcal{A}\) using \(t\) copies and let us consider a real-valued \(\bigo{\frac{t^x}{d}}\)-approximate state \(t\)-design \(\left\{\ket{\psi_f}\right\}_{f}\), with \(x>0\). We can use \(\mathcal{A}\) as a potential strategy for distinguishing between this approximate design and Haar random states. Namely, we use \(\mathcal{A}\) to test for the imaginarity of the states they've been given and if they have imaginarity larger than $\delta$, then they will conclude that they were given a Haar random state, and will otherwise conclude that they were given a state from the approximate design family. However, since the approximate design has an approximation parameter of \(\bigo{\frac{t^x}{d}}\), \(\mathcal{A}\) must necessarily have \(\Omega\left(d^{\frac1x}\right)\) copies to distinguish them with constant probability, thus proving that testing for imaginarity requires at least \(\Omega\left(d^{\frac1x}\right)\) copies. In particular, the smaller \(x\), the larger the number of copies.

        An interesting question is then what is the smallest \(x\) such that there exists a real-valued \(\bigo{\frac{t^x}{d}}\)-approximate \(t\)-design as this will give us the best bound on the number of copies required to test for imaginarity using this technique. In~\cite{HBK24} the authors used the $\varepsilon$-approximate state $t$-design which consists of the states
        \begin{equation}
            \ket{\psi_{f, S}}=\frac{1}{\sqrt{K}}\sum_{x\in S}(-1)^{f(x)}\,\ket{x}
        \end{equation}
        equipped with the uniform probability measure and with \(|S|=K\) being a fixed parameter and \(S\subseteq\{0, 1\}^n\), with \(n\) being the number of qubits. This construction was introduced and studied in~\cite{Aar+23}, where it was shown that \(\varepsilon=\bigo{\frac{t^2}{K}}\). In order to have the smallest \(\varepsilon\) possible, \citeauthor{HBK24} used \(K=2^n\), effectively reducing to the binary phase state approximate design which is comprised of states of the form
        \begin{equation}
            \ket{\psi_f}=\frac{1}{\sqrt{2^n}}\sum_x(-1)^{f(x)}\,\ket{x}
        \end{equation}
        also equipped with a uniform probability measure, which was introduced in \cite{JLS18} and studied in~\cite{BS19, AGQY22}. In particular, it was proved in~\cite{AGQY22} that this construction yields an \(\varepsilon\)-approximate state \(t\)-design with $ \varepsilon = \frac{t(3t-1)}{2^{n+1}}+\bigo{\frac{t^2}{2^{2n}}}$. However, by focusing on a particular construction,~\cite{HBK24} left open the possibility for better bounds. Using \Cref{cor:trace_distance}, we can now close this door and affirm that no better asymptotic lower-bound can be found using this proof technique. In fact, using \Cref{eq:trace-distance-converges-to-non-zero}, we can improve this lower-bound.

        We provide in \Cref{app:trace-distance-binary-phases} an exact security analysis for the binary phases ARS~\cite{JLS18}. In particular, we show in \Cref{lem:trace_distance_binary_phase_ars} that the trace distance between the density matrix \(\meanrho{\mathcal{B}_d^t}\) associated to this ARS and \(\crho\) is given by
        \begin{equation}
            \label{eq:trace_distance_binary_phase_ars}
            \frac12\left\|\meanrho{\mathcal{B}_d^t}-\crho\right\|_1 = 1-\frac{1}{\binom{d+t-1}{t}}\sum_{k=0}^{\left\lfloor\frac{t}{2}\right\rfloor}\binom{d}{t-2k}\,.
        \end{equation}
        From this it follows that if \(t\sim\alpha\sqrt{d}\) then the right-hand-side of \Cref{eq:trace_distance_binary_phase_ars} converges to \(1-\mathrm{e}^{-\alpha^2}\). In particular, in order for \(\mathcal{A}\) to satisfy the completeness property, they should be able to distinguish these two states with probability at least \(\frac23\). Since we know that the optimal probability of distinguishing these two states is \(\frac12+\frac12T\), where \(T\) is the aforementioned trace distance, it follows that one must have \(T\geqslant\frac13\), which gives us \(\alpha\geqslant\sqrt{\ln\left(\frac{3}{2}\right)d}\). As such, this study allows to derive the exact prefactor of \citeauthor{HBK24}'s bound, which was not known beforehand.

        But now, we can instead use the family of states defined by the Haar measure on the orthogonal group. As per \Cref{eq:trace-distance-converges-to-non-zero}, we know that if \(t\sim\alpha\sqrt{d}\), then the trace distance between the two relevant density matrices converges to \(1-\mathrm{e}^{-\frac{\alpha^2}{2}}\). By a similar argument, we can then conclude that the number of copies required by \(\mathcal{A}\) is at least \(t=\sqrt{2\ln\left(\frac32\right)d}+\smallo{\sqrt{d}}\). This thus improves upon \cite{HBK24}'s bound by a \(\sqrt{2}\) factor. Furthermore, we know that this lower-bound is asymptotically optimal as \(\delta\) goes to 0, as our goal in this case is to distinguish between \(\rrhodt\) and \(\crho\), which we know from \Cref{cor:trace_distance} can be done using \(t\sim\sqrt{2\ln\left(\frac32\right)d}\) copies.

        Finally, note the previous reasoning works in the limit of \(\delta\) being close to \(0\), since we want in that case to distinguish an ensemble of states that is close to being real-valued and the Haar random ensemble. \citeauthor{HBK24} proved that it was in fact valid for all \(\delta\) lower than \(1-\frac{n^2}{\sqrt{2^n}}\) (for \(n\geqslant80\)). By studying the distribution of imaginarity of Haar random states, we can increase the regime of \(\delta\) in which this result holds, be more precise in how \(\delta\) affects the number of copies and simplify \citeauthor{HBK24}'s proof, removing the reliance on L\'evy's lemma. Furthermore, our technique also works in any dimension, as opposed to only qubit-based systems.
        
        Note that this theorem also shows the optimality of this bound as \(\delta\) goes to 0, as the task at hand is in that case equivalent to distinguishing \(\rrhodt\) and \(\crho\). To the best of our knowledge, the only other known technique to perform such a task is the \(\bigo{d}\) algorithm of \citeauthor{HBK24}~\cite{HBK24} to compute the imaginarity. Our measurement thus represents a \(\bigo{\sqrt{d}}\) improvement.
        
        All of this is summarized and formally proved in the following proposition. 
        
        \begin{proposition}[Generalization of {\cite[Theorem~2]{HBK24}}]
            Any imaginarity tester of a quantum state \(\ket{\psi}\) requires at least \(t\sim\sqrt{2\ln\left(\frac{1}{\frac23+\delta^{\frac{d-1}{2}}}\right)d}\) copies of \(\ket{\psi}\) for any \(\delta<3^{-\frac{2}{d+1}}\) and \(\beta<\delta\).
        \end{proposition}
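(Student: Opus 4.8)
The plan is to turn any $(\beta,\delta,t)$-imaginarity tester into a distinguisher between $t$ copies of a real Haar random state and $t$ copies of a complex Haar random state, and then to invoke \Cref{cor:trace_distance}, in the asymptotic form \Cref{eq:trace-distance-converges-to-non-zero}, to read off how large $t$ must be. Given a tester $\mathcal A$, I would feed it either $t$ copies of a state sampled from the Haar measure on $\orthogonal d$ --- which has imaginarity $0\leqslant\beta$, so completeness forces $\mathcal A$ to accept with probability at least $\frac23$ --- or $t$ copies of a state sampled from the complex Haar measure conditioned on imaginarity at least $\delta$, on which soundness forces $\mathcal A$ to accept with probability at most $\frac13$. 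Then $\mathcal A$ distinguishes the two corresponding $t$-copy density matrices, $\rrhodt$ and the conditioned matrix $\sigma_\delta$, with success probability at least $\frac23$, and since the optimal such probability is $\frac12+\frac14\|\rrhodt-\sigma_\delta\|_1$, we obtain $\frac12\|\rrhodt-\sigma_\delta\|_1\geqslant\frac13$.

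The ingredient I would establish first is the distribution of the imaginarity of a Haar random state $\ket\psi\in\CC^d$. Writing $\ket\psi=\ket a+\mathrm i\ket b$ with $\ket a,\ket b$ real and using $\|a\|^2+\|b\|^2=1$, a short computation gives $\left|\inner\psi{\overline\psi}\right|^2=(\mu_1-\mu_2)^2$, where $\mu_1\geqslant\mu_2\geqslant0$ are the eigenvalues of the Gram matrix of $(\ket a,\ket b)$, and hence $\mathcal I(\ket\psi)=1-(\mu_1-\mu_2)^2=4\mu_1\mu_2$. Since $(a,b)$ is uniformly distributed on the unit sphere of $\RR^{2d}$, that Gram matrix is a normalised $2\times2$ Wishart matrix with $d$ degrees of freedom, whose eigenvalue density on $\{\mu_1+\mu_2=1\}$ is proportional to $|\mu_1-\mu_2|(\mu_1\mu_2)^{(d-3)/2}$; pushing this forward along $\mathcal I=4\mu_1\mu_2$ shows that $\mathcal I$ has density $\frac{d-1}{2}\,\mathcal I^{(d-3)/2}$ on $[0,1]$ --- a power law, whose mean $1-\frac{2}{d+1}$ recovers \cite[Lemma~2]{HBK24} --- so that
\begin{equation}
    \mathbb P_{\ket\psi\sim\mathrm{Haar}}\!\left[\mathcal I(\ket\psi)<\delta\right]=\delta^{\frac{d-1}{2}}\,.
\end{equation}

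To conclude, conditioning the complex Haar ensemble on $\{\mathcal I\geqslant\delta\}$, an event of probability $1-\delta^{(d-1)/2}$, changes its $t$-copy density matrix by at most $2\delta^{(d-1)/2}$ in trace norm, so the triangle inequality gives $\frac12\|\rrhodt-\crho\|_1\geqslant\frac12\|\rrhodt-\sigma_\delta\|_1-\delta^{(d-1)/2}\geqslant\frac13-\delta^{(d-1)/2}$. As the trace distance is non-decreasing in $t$ (tracing out a copy is a CPTP map), it suffices to examine $t$ near the threshold: writing $t\sim\alpha\sqrt d$ with $\alpha\leqslant\sqrt2$, \Cref{eq:trace-distance-converges-to-non-zero} gives $\frac12\|\rrhodt-\crho\|_1=1-\mathrm e^{-\alpha^2/2}+\smallo1$, and the inequality above then forces $1-\mathrm e^{-\alpha^2/2}\geqslant\frac13-\delta^{(d-1)/2}$, i.e. $\alpha\geqslant\sqrt{2\ln\!\left(\frac{1}{\frac23+\delta^{(d-1)/2}}\right)}$, which is exactly the asserted bound $t\geqslant\sqrt{2\ln\!\left(\frac{1}{\frac23+\delta^{(d-1)/2}}\right)d}+\smallo{\sqrt d}$; the hypotheses $\beta<\delta<3^{-2/(d+1)}$ are what keep $\alpha\leqslant\sqrt2$ so that the asymptotic formula applies and the bound stays meaningful. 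I expect the imaginarity-distribution step to be the main obstacle --- identifying the law of $\mathcal I(\ket\psi)$ via the Wishart connection and carrying the change of variables and normalisation through cleanly --- with a secondary technical point being to make the $\smallo1$ corrections in \Cref{eq:trace-distance-converges-to-non-zero} rigorous when turning a trace-distance threshold into a threshold on $t$.
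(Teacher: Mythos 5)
Your proof is correct and follows the same high-level architecture as the paper's: reduce the imaginarity tester to a distinguisher between \(\rrhodt\) and the complex-Haar ensemble conditioned on \(\mathcal I\geqslant\delta\); bound the distance from \(\crho\) to that conditioned ensemble by \(\mathbb P[\mathcal I<\delta]\); compute \(\mathbb P[\mathcal I<\delta]=\delta^{(d-1)/2}\); and invoke \Cref{eq:trace-distance-converges-to-non-zero} to turn the resulting trace-distance threshold into a threshold on \(t\). Two of your ingredients are obtained by genuinely different routes, though. For the imaginarity distribution, you diagonalise the \(2\times 2\) Gram matrix of the real and imaginary parts of \(\ket\psi\), recognise it as a normalised real Wishart matrix, and push the eigenvalue density through \(\mathcal I=4\mu_1\mu_2\); the paper's \Cref{lem:distribution-of-imaginarity} instead computes all moments of \(\mathcal I\) via Beta-distributed norm ratios and concludes by the Hausdorff moment problem. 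Your route gives the density explicitly in one pass and is arguably more transparent about where the power law comes from, whereas the paper's moment argument avoids the Wishart eigenvalue density and the change of variables. For the conditioning step, you prove the bound \(\tfrac12\|\crho-\sigma_\delta\|_1\leqslant\delta^{(d-1)/2}\) directly from the decomposition \(\crho=p\,\sigma_\delta+(1-p)\,\sigma_{<\delta}\), whereas the paper cites \cite[Equation~(S17)]{HBK24}; the two statements are equivalent. Finally, you add the (correct, and worth making explicit) observation that the trace distance is non-decreasing in \(t\) because discarding a copy is CPTP, which justifies reading off the threshold from the asymptotic regime \(t\sim\alpha\sqrt d\). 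So: same skeleton, a self-contained and somewhat more structural derivation of the imaginarity law, and a self-contained conditioning bound.
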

        \begin{proof}
            Let \(\mathcal{E}_{\mathcal{I}\geqslant\delta}\) be the ensemble of states having an imaginarity larger than or equal to \(\delta\) and let
            \begin{equation}
                \meanrho{\mathcal{E}^t_{\mathcal{I}\geqslant\delta}}\defed\frac{1}{\mu\left(\mathcal{E}_{\mathcal{I}\geqslant\delta}\right)}\int_{\ket{\psi}\in\mathcal{E}_{\mathcal{I}\geqslant\delta}}(\selfouter{\psi})^{\otimes t}\,\mathrm{d}\mu(\ket{\psi})
            \end{equation}
            be the average density matrix of this ensemble, with \(\mu\) being the Haar measure on \(\complexvecs{d}\). It was shown in~\cite[Equation~(S17)]{HBK24} that we have
            \begin{equation}
                \frac12\left\|\meanrho{\mathcal{E}^t_{\mathcal{I}\geqslant\delta}}-\crho\right\|_1\leqslant\mathbb{P}[\mathcal{I}(\ket{\psi})<\delta]\,.
            \end{equation}
        We then get by triangle inequality
        \begin{equation}
            \frac12\left\|\meanrho{\mathcal{E}_{\mathcal{I}\geqslant\delta}^t}-\rrhodt\right\|_1\leqslant\mathbb{P}\left[\mathcal{I}(\ket{\psi})\leqslant\delta\right]+\frac12\left\|\rrhodt-\crho\right\|_1\,.
        \end{equation}
        Now, we know that an algorithm that is able to test for imaginarity must be able to distinguish \(\meanrho{\mathcal{E}_{\mathcal{I}\geqslant\delta}^t}\) and \(\rrhodt\) with probability larger than \(\frac23\). But we also know that such an algorithm can only do so with probability \(\frac12+\frac14\left\|\meanrho{\mathcal{E}_{\mathcal{I}\geqslant\delta}^t}-\rrhodt\right\|_1\)\,. As such, in order for such an algorithm to exist, one must necessarily have
        \begin{equation}
            \frac12\left\|\meanrho{\mathcal{E}_{\mathcal{I}\geqslant\delta}^t}-\rrhodt\right\|_1\geqslant\frac13\,.
        \end{equation}
        This in turn implies
        \begin{equation}
            \mathbb{P}\left[\mathcal{I}(\ket{\psi})\leqslant\delta\right]+\frac12\left\|\rrhodt-\crho\right\|_1\geqslant\frac13\,.
        \end{equation}
        Now, we know from \Cref{lem:distribution-of-imaginarity} that
        \begin{equation}
            \mathbb{P}\left[\mathcal{I}(\ket{\psi})\leqslant\delta\right] = \delta^{\frac{d-1}{2}}\,.
        \end{equation}
        In particular, if we pick \(\delta<\frac{1}{3^{\frac{2}{d-1}}}\), this ensures
        \begin{equation}
            \mathbb{P}\left[\mathcal{I}(\ket{\psi})\leqslant\delta\right]<\frac13\,.
        \end{equation}
        One thus requires at least as many copies to test for imaginarity with such a \(\delta\) as one requires to have
        \begin{equation}
            \label{eq:requiredasymptoticsimaginarity}
            \frac12\left\|\rrhodt-\crho\right\|_1\geqslant\frac13-\mathbb{P}\left[\mathcal{I}(\ket{\psi})\leqslant\delta\right]\,.
        \end{equation}
        We then know from the asymptotics derived in \Cref{eq:trace-distance-converges-to-non-zero} that if one has
        \begin{equation}
            t\sim\sqrt{2\ln\left(\frac{1}{\frac23+\delta^{\frac{d-1}{2}}}\right)d}
        \end{equation}
        then \Cref{eq:requiredasymptoticsimaginarity} is satisfied, which provides a lower-bound on the number of copies required to test for imaginarity.
        \end{proof}
        Note that as \(\delta\) gets closer to \(3^{-\frac{2}{d+1}}\), the scaling of the lower-bound changes, indicating that it may not be useful in this regime. Indeed, if we take for instance \(\delta=\left(\mathrm{e}^{-\frac{1}{2d}}-\frac23\right)^{\frac{2}{d-1}}\), then the lower-bound simply becomes \(1+\smallo{1}\). This seems to indicate that improvements on this bound for high values of \(\delta\) are certainly possible. Note however that taking \(\delta=1-\Omega\!\left(\frac{1}{d}\right)\) keeps the same scaling, which improves upon \citeauthor{HBK24}'s previous \(1-\frac{\log_2^2(d)}{\sqrt{d}}\) bound for \(\delta\).

        \section{Conclusion}
        In this work, we proved a fundamental lower-bound on the approximation parameter of any real-valued approximate state $t$-design. We showed that this bound was tight and attained by the design resulting from sampling from the Haar measure on \(\orthogonal{d}\). 
        A notable implication of our work is a bound on the asymptotic approximation parameters of ARS generators, showing that having non-zero imaginarity is a requirement to have approximation parameters scaling linearly with the number of copies. As a second application, we also generalized a previous result of \cite{HBK24} on imaginarity testing to arbitrary dimensions and improved upon the number of copies required for imaginarity testing.

        On a technical level, our main result was the derivation of the complete spectral decomposition of $t$-copies of a real Haar random state. We derived this through a classical result on a representation of the orthogonal group on the space of homogeneous polynomials. In particular, we provided closed-form analytical expressions for the eigenvalues and multiplicities as well as an algorithm to compute the corresponding eigenvectors. We believe that this isomorphism between the symmetric subspace and homogeneous polynomials could prove useful in other contexts when exploring applications of real-valued Haar random states and we leave such an exploration to future work. The potential for future applications of the representation theory of harmonic polynomials to quantum information theory also renews interest in finding simple, explicit orthonormal bases of \(\harmonicd{t}\) with respect to the Bombieri inner product.

        However, our proofs prompt several further questions. Firstly, though we gave an explicit measurement that can be used to distinguish any real-valued approximate state design from a Haar random state, it is unclear if this measurement can be efficiently implemented. Possible directions to showing this would be to explore techniques like the Schur transform~\cite{HarrowPhD}. If the measurement is efficiently implementable, this implies that our results for ARS will also extend to pseudorandom quantum states~\cite{JLS18}, as we will have derived distinguishing lower bounds that are accessible to computationally bounded adversaries. In particular, this will also imply fundamental limitations on the performance of real-valued constructions of pseudorandom states.
        
        There are also several ways in which the study of the differences between the Haar measures on the orthogonal and unitary group could be extended. For instance, though we know that having oracular access to a random orthogonal matrix and having oracular access to a random unitary one are easily distinguishable~\cite{HBK24}, we may ask \emph{how} could we make them indistinguishable. This may for instance lead to stronger necessary conditions for a unitary ensemble to be a pseudorandom unitary ensemble~\cite{JLS18}. For instance, what happens if, in addition to a random orthogonal matrix, one applies a random unitary taken from a unitary \(2\)-design ensemble? That would raise the imaginarity of the resulting unitary, which makes~\citeauthor{HBK24}'s argument not work in this case. In fact we may even ask whether applying a diagonal unitary with uniformly random entries would be enough to get a good approximate unitary design.
        
        Similarly, in the state version of this question, how better can an approximate state design be if one applies the orthogonal twirling to a state having non-zero imaginarity? This line of study was initiated in~\cite{Sch24}, where it was shown that exact state designs couldn't be created in this way for \(t\geqslant4\), but the exact approximation parameter may be derived as a function of the imaginarity of the seed state. It would seem intuitive that having an imaginarity of \(1-\frac{2}{d+1}\) is optimal, and that having an imaginarity of 0 is the worst possible case. We believe that our techniques could help solving these questions, as the resulting density matrices will all be convex combinations of projectors onto the harmonic subspaces of \(\homogeneousdt\).

        Finally, of particular interest, would be a study of further necessary conditions and sufficient conditions for approximate state $t$-designs to have an approximation parameter scaling linearly in $t$. Though we've proved in this work that having complex amplitudes is necessary, we do not know how much imaginarity an approximate state design requires to have such an approximation parameter or what other properties are also necessary. A comprehensive understanding of such conditions would pave the way towards designing efficient and practical ARS.

        \section*{Acknowledgements}
    	The authors thank Saïd Ladjal, Bertrand Meyer and David Madore for fruitful discussions. The authors thank the anonymous referees of Quantum for their relevant remarks on our work. The authors acknowledge funding from FranceQCI, funded by the Digital Europe program, DIGITAL-2021-QCI-01, project no. 101091675 as well as funding from the European Union’s Horizon Europe research and innovation programme under the project “Quantum Secure Networks Partnership” (QSNP, grant agreement No. 101114043).
	
	\printbibliography
	
	\appendix

\crefalias{section}{appendix}
\crefalias{subsection}{appendix}

        \section{A very quick introduction to representation theory}
        \label{app:rep-theory}
        
        In this section, we introduce the representation theory concepts used in the main text. Let \(G\) be a group and let \(V\) be a vector space of finite dimension \(d\). A \emph{representation} of \(G\) in \(V\) is a group homomorphism \(\tau:G\to\gl{V}\), with \(\gl{V}\) being the general linear group associated to \(V\). That is, the mapping preserves the group operation, i.e.,
        \begin{equation}
            \tau(gh) = \tau(g)\tau(h)
        \end{equation}
        for all $g,h \in G$. A nontrivial subspace\footnote{A subspace \(W\subseteq V\) is called nontrivial if \(W\neq\{0\}\) and \(W\neq V\).} \(W\subset V\) is called \(G\)-invariant if for all \(\ket{w}\in W\) and all \(g\in G\) we have \(\tau(g)\ket{w}\in W\). If no nontrivial \(G\)-invariant subspaces exist, then the representation is called \emph{irreducible}. A fundamental result in representation theory is Schur's lemma, which we now state. Note that this is an edulcorated version of the original lemma that we've adapted to our needs.
        \begin{lemma}[Schur]
            \makeatletter\def\@currentlabelname{Schur's lemma}\makeatother
            \label{lem:schur}
            Let \(G\) be a group, \(V\) be a finite dimensional vector space and \(\tau:G\to\gl{V}\) be an irreducible representation of \(G\) in \(V\). Let \(\phi\) be an endomorphism of \(V\). If \(\phi\) and \(\tau(g)\) commute for all \(g\in G\), then \(\phi\) is either an automorphism or nil.
        \end{lemma}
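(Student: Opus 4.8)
The plan is to run the classical argument via the kernel and image of \(\phi\), which are forced to be \(G\)-invariant subspaces precisely because \(\phi\) commutes with every \(\tau(g)\).

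First I would verify that \(\ker\phi\) is \(G\)-invariant: if \(\ket{w}\in\ker\phi\) and \(g\in G\), then \(\phi(\tau(g)\ket{w})=\tau(g)\phi(\ket{w})=\tau(g)\cdot 0=0\), so \(\tau(g)\ket{w}\in\ker\phi\). Symmetrically, the image \(\operatorname{im}\phi\) is \(G\)-invariant: if \(\ket{v}=\phi(\ket{u})\) for some \(\ket{u}\in V\) and \(g\in G\), then \(\tau(g)\ket{v}=\tau(g)\phi(\ket{u})=\phi(\tau(g)\ket{u})\in\operatorname{im}\phi\). Since \(\tau\) is irreducible, the only \(G\)-invariant subspaces of \(V\) are \(\{0\}\) and \(V\), so \(\ker\phi\in\{\{0\},V\}\) and \(\operatorname{im}\phi\in\{\{0\},V\}\).

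Then I would conclude by a short case distinction. If \(\ker\phi=V\) (equivalently \(\operatorname{im}\phi=\{0\}\)) then \(\phi\) is the zero map, i.e. nil. Otherwise \(\ker\phi=\{0\}\), so \(\phi\) is injective; because \(V\) is finite dimensional, the rank--nullity theorem gives \(\operatorname{im}\phi=V\), hence \(\phi\) is also surjective and therefore an automorphism. This exhausts the possibilities and proves the claim.

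I do not expect a genuine obstacle here; the proof is essentially mechanical. The only points that deserve care are that the finite dimensionality of \(V\) is really needed for the last step (an injective endomorphism need not be surjective in infinite dimensions), and that this is the \emph{weak} form of Schur's lemma: strengthening ``automorphism'' to ``scalar multiple of the identity'', as is done in \Cref{cor:proportional-to-identity}, additionally requires the ground field to be algebraically closed, an assumption not made in this statement.
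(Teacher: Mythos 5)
Your proof is correct: it is the standard kernel/image argument for the weak form of Schur's lemma, and every step (both invariance claims, irreducibility forcing the trivial alternatives, and rank--nullity to upgrade injectivity to bijectivity in finite dimension) is accurate. The paper does not actually include a proof of this lemma --- it states it as a classical fact and only proves \Cref{cor:proportional-to-identity} --- so there is nothing to compare against; your argument is the textbook one and would serve perfectly well if the authors had wanted to include it. One small correction to your closing remark: \Cref{cor:proportional-to-identity} does \emph{not} rely on algebraic closure (the vector space there may be over $\RR$); it instead exploits that $\rho$ is Hermitian, hence has a real eigenvalue $\lambda$, and then applies your weak Schur to $\rho - \lambda\,\id_V$ to conclude it is nil. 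So the stronger ``scalar multiple of identity'' conclusion is reached there via Hermiticity rather than via algebraic closure of the field.
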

        Schur's lemma is commonly used in the quantum information field by the means of the following corollary, that allows to characterize a density matrix invariant by a group representation.
        \begin{corollary}
            \label{cor:proportional-to-identity}
            Let \(G\) be a group, \(V\) be a finite dimensional vector space over \(\RR\) or \(\CC\) and \(\tau:G\to\gl{V}\) be an irreducible representation of \(G\) in \(V\). Let \(\rho\in\Herm{V}\) be such that \(\rho\) commutes with \(\tau(g)\) for all \(g\in G\). Then there exists a \(\lambda\in\RR\) such that \(\rho=\lambda\,\id_V\).
        \end{corollary}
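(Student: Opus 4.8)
The plan is to apply \nameref{lem:schur} to a suitably shifted version of \(\rho\). First I would invoke the hypothesis that \(\rho\in\Herm{V}\): whether \(V\) is over \(\RR\) or \(\CC\), a Hermitian (resp.\ real symmetric) operator on a nonzero finite-dimensional space has at least one eigenvalue, and that eigenvalue is real. Fix such an eigenvalue and call it \(\lambda\in\RR\), and let \(\phi\defed\rho-\lambda\,\id_V\).

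Next I would check that \(\phi\) satisfies the hypotheses of \nameref{lem:schur}. It is an endomorphism of \(V\), and since \(\rho\) commutes with \(\tau(g)\) for all \(g\in G\) and \(\id_V\) commutes with everything, \(\phi\) commutes with \(\tau(g)\) for all \(g\in G\). Since \(\tau\) is irreducible, Schur's lemma tells us that \(\phi\) is either an automorphism of \(V\) or the zero map.

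Then I would rule out the automorphism case. By construction \(\lambda\) is an eigenvalue of \(\rho\), so the corresponding eigenspace \(\ker(\phi)=\ker(\rho-\lambda\,\id_V)\) is nonzero; hence \(\phi\) is not injective, and in finite dimension this means \(\phi\) is not an automorphism. The only remaining possibility from Schur's lemma is \(\phi=0\), that is \(\rho=\lambda\,\id_V\) with \(\lambda\in\RR\), as claimed.

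I do not anticipate a genuine obstacle here; the one point that deserves care is the very first step, namely justifying that \(\rho\) has a \emph{real} eigenvalue even when the underlying field is \(\RR\) — this is exactly where the Hermiticity hypothesis (as opposed to mere commutation) is used, since a general real matrix need not have any real eigenvalue, whereas a real symmetric one does.
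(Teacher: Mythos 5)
Your proposal matches the paper's proof essentially line for line: subtract a real eigenvalue, check the shifted operator still commutes with the representation, invoke Schur's lemma, and rule out the automorphism branch because the eigenspace is nonzero. Your closing remark correctly identifies the one subtle point — that Hermiticity is what guarantees a real eigenvalue exists even when the field is \(\RR\) — which is exactly the role it plays in the paper's argument.
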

        \begin{proof}
            Since \(\rho\) is Hermitian it admits at least one real eigenvalue \(\lambda\). Now, note that \(\rho-\lambda\,\id_V\) commutes with \(\tau(g)\) for all \(g\in G\). By \nameref{lem:schur}, \(\rho-\lambda\,\id_V\) is either an automorphism or is nil. Since it cannot be an automorphism by definition of \(\lambda\), it follows that \(\rho-\lambda\,\id_V\) is nil, which concludes the proof.
        \end{proof}

        \section{On the isomorphism between \texorpdfstring{\(\homogeneousdt\)}{the space of homogeneous polynomials} and \texorpdfstring{\(\symsubspacedt{R}\)}{the symmetric subspace}}
        The goal of this section is to translate the result of~\cite[Theorem~2.12]{CW68} from the space of homogeneous polynomials \(\homogeneousdt\) to the symmetric subspace \(\symsubspacedt{R}\). That is, we want to show that the representation of the orthogonal group on \(\homogeneousdt\) as considered in \cite{CW68} maps back to \(\orthogonal{d}\ni O\mapsto O^{\otimes t}\) when using the isomomorphism that we consider between these two spaces.
        
       Recall the isomorphism between $\homogeneousdt$ and $\symsubspacedt{R}$ is given by
       \begin{equation}
       X_0^{a_0}\cdots X_{d-1}^{a_{d-1}}\mapsto \projdtR\ket{0^{a_0},\cdots,(d-1)^{a_{d-1}}}\,.
    \end{equation}
        The inner product $\langle \cdot , \cdot \rangle_{\text{Bombieri}}$ on $\homogeneousdt$ is defined for two monomials $M, N$ as
        \begin{equation}
            \label{eq:bombieri_inner_product_definition}
            \langle M , N \rangle_{\text{Bombieri}} = 
            \begin{cases}
                0 & \quad \text{if } M\neq N \\
                \frac{1}{t!} \prod\limits_{i=0}^{d-1} a_i! & \quad\text{otherwise}
            \end{cases}
        \end{equation}
        where $M = X_0^{a_0}\cdots X_{d-1}^{a_{d-1}}$. For the remainder of this section, if \(P\in\homogeneousdt\) then \(\ket{P}\in\symsubspacedt{R}\) denotes its image by the isomorphism, the notation is motivated by the fact that this isomorphism preserves the inner product, as shown in the following Lemma.  
        \begin{lemma}
            \label{lem:same_inner_product}
            Let \(d\geqslant2\) and \(t\geqslant1\) be two natural numbers. 
             Let \(P\) and \(Q\) be two elements of \(\homogeneousdt\), then we have
            \begin{equation}
                \polynomialbraket{P}{Q}_{\text{Bombieri}} = \inner{P}{Q}\,.
            \end{equation}
            That is, the isomorphism described in \Cref{eq:isomorphism} preserves the inner product.
        \end{lemma}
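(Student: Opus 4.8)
The plan is to verify the identity on a basis and then extend by bilinearity. Since both $\langle \cdot, \cdot \rangle_{\text{Bombieri}}$ and $\langle \cdot \mid \cdot \rangle$ are bilinear (or sesquilinear, but we are working over $\RR$ here), it suffices to check the identity when $P$ and $Q$ are monomials, say $P = X_0^{a_0}\cdots X_{d-1}^{a_{d-1}}$ and $Q = X_0^{b_0}\cdots X_{d-1}^{b_{d-1}}$ with $\sum_i a_i = \sum_i b_i = t$. The left-hand side is then $0$ if the exponent vectors $a$ and $b$ differ, and $\frac{1}{t!}\prod_i a_i!$ if $a = b$.

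First I would compute the right-hand side $\inner{P}{Q}$, where $\ket{P} = \projdtR\ket{0^{a_0},\cdots,(d-1)^{a_{d-1}}}$ and similarly for $\ket{Q}$. Using that $\projdtR$ is a self-adjoint projector and recalling that on the standard basis it acts as $\ket{v} \mapsto \frac{1}{t!}\sum_{\pi \in \mathfrak{S}_t} P_\pi \ket{v}$ (up to normalization conventions — I would be careful here, as the paper's description uses the unnormalized symmetrizer $\ket{v}\mapsto \sum_\pi P_\pi\ket{v}$, so I should fix the correct normalization so that $\projdtR$ is genuinely a projector), we get
\begin{equation}
    \inner{P}{Q} = \braopket{x_a}{\projdtR}{x_b} = \frac{1}{t!}\sum_{\pi \in \mathfrak{S}_t} \inner{x_a}{P_\pi x_b}
\end{equation}
where $\ket{x_a} = \ket{0^{a_0},\cdots,(d-1)^{a_{d-1}}}$ denotes the chosen ordered standard basis representative. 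Each term $\inner{x_a}{P_\pi x_b}$ is either $0$ or $1$: it equals $1$ exactly when $P_\pi$ maps the string $x_b$ to the string $x_a$. This is possible only if $a = b$ (the two strings must be permutations of each other), and in that case the number of permutations $\pi$ achieving it is the size of the stabilizer of the string, namely $\prod_i a_i!$. Hence $\inner{P}{Q} = \frac{1}{t!}\prod_i a_i!$ when $a = b$ and $0$ otherwise, matching the Bombieri inner product exactly.

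The main obstacle — really the only subtlety — is bookkeeping the normalization of the projector $\projdtR$ and making sure the counting of permutations is done consistently: whether we symmetrize on the left, on the right, or both, and how the $\frac{1}{t!}$ factors are distributed. Concretely, writing $S = \sum_{\pi} P_\pi$ for the unnormalized symmetrizer, one has $S^2 = t!\,S$, so the true projector is $\projdtR = \frac{1}{t!}S$, and then $\braopket{x_a}{\projdtR}{x_b} = \frac{1}{t!}\braopket{x_a}{S}{x_b}$ as used above; alternatively one may write $\ket{P} = \frac{1}{c_a}S\ket{x_a}$ for a normalization constant and track $c_a$, but since the claim is that the isomorphism sends unit-Bombieri-norm polynomials to unit-norm states, the cleanest route is to just compute the raw inner product $\braopket{x_a}{\projdtR}{x_b}$ and observe it equals $\polynomialbraket{P}{Q}_{\text{Bombieri}}$ directly, with no normalization constants left over. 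Once the monomial case is settled, bilinearity finishes the proof immediately.
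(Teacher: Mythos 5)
Your proof is correct and follows essentially the same approach as the paper: reduce to the monomial case by bilinearity, then count the permutations that stabilize the multi-index vector. Your use of the idempotence and self-adjointness of $\projdtR$ to collapse $\inner{P}{Q}$ to a single sum $\frac{1}{t!}\sum_{\pi}\inner{x_a}{P_\pi x_b}$ is a modest streamlining over the paper's double-sum $\frac{1}{(t!)^2}\sum_{\sigma,\pi}\braopket{x_a}{A_\sigma A_\pi}{x_b}$, but the core counting argument (the stabilizer of the canonical string has size $\prod_i a_i!$) is identical in both.
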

        \begin{proof}
            By linearity it is sufficient to show that the inner products are equal when $P$ and $Q$ are monomials. Then let $P = X_0^{a_0}\cdots X_{d-1}^{a_{d-1}}$ and \(Q=X_0^{b_0}\cdots X_{d-1}^{b_{d-1}}\). Then 
            \begin{equation}
                \ket{P} = \frac{1}{t!}\sum_{\sigma\in\mathfrak{S}_t}A_\sigma\ket{0^{a_0},\cdots,(d-1)^{a_{d-1}}}
            \end{equation}
            and 
            \begin{equation}
                \ket{Q} = \frac{1}{t!}\sum_{\pi\in\mathfrak{S}_t}A_\pi\ket{0^{b_0},\cdots,(d-1)^{b_{d-1}}}
            \end{equation}
            where we've denoted
            \begin{equation}
                A_\pi(\ket{v_1} \otimes \dots \otimes \ket{v_t}) =\ket{v_{\pi^{-1}(1)}} \otimes \dots \otimes \ket{v_{\pi^{-1}(t)}}
            \end{equation}
            for any permutation \(\pi\in\mathfrak{S}_t\) and any basis state \(\ket{v_1,\cdots,v_t}\) of \(\left(\CC^{d}\right)^{\otimes t}\). We thus have
            \begin{equation}
                \inner{P}{Q} = \frac{1}{(t!)^2}\sum_{\sigma\in\mathfrak{S}_t}\sum_{\pi\in\mathfrak{S}_t}\braopket{0^{a_0},\cdots,(d-1)^{a_{d-1}}}{A_\sigma A_\pi}{0^{b_0},\cdots,(d-1)^{b_{d-1}}}\,.
            \end{equation}
            Note that whenever there is an index \(i\) such that \(a_i\neq b_i\), then the expectation value in the summation vanishes, as it can be seen as the inner product between two different computational basis states. This is consistent with the definition of the Bombieri inner product, which should be $0$ whenever the two monomials are not equal.
            
            Let us thus assume that \(a_i=b_i\) for all indices \(i\). The expectation value in the sums in now either equal to \(0\) or \(1\), depending on whether \(\sigma\) and \(\pi\) send \(\left(0^{a_0},\cdots,(d-1)^{a_{d-1}}\right)\) to the same tuple. Let us fix \(\sigma\in\mathfrak{S}_t\) and let us count the number of \(\pi\in\mathfrak{S}_t\) such that \(\sigma\left(0^{a_0},\cdots,(d-1)^{a_{d-1}}\right)=\pi\left(0^{a_0},\cdots,(d-1)^{a_{d-1}}\right)\). To perform this counting, note that for such a permutation, we can swap the placements of the different \(0\) elements and still end up with a permutation having this property. There are \(a_0!\) such permutations of the \(0\) elements. Similarly, there are \(a_1!\) permutations of the \(1\) elements. Choosing a valid \(\pi\) is equivalent to starting from \(\sigma\) and choosing a permutation of the \(i\) elements that leaves the result invariant for all \(i\) elements. All in all, this amounts to \(\prod_ia_i!\) permutations. As such, we have
            \begin{equation}
                \inner{P}{Q}=\frac{1}{(t!)^2}\sum_{\sigma\in\mathfrak{S}_t}\prod_{i=0}^{d-1}a_i!=\frac{1}{t!}\prod_{i=0}^{d-1}a_i!
            \end{equation}
            which is consistent with the Bombieri inner product definition.
        \end{proof}
        Since the symmetric subspace is intrinsically linked to the tensor product, it is natural to wonder how the tensor product is translated through the isomorphism between \(\homogeneousdt\) and \(\symsubspacedt{R}\). To that end, we start by showing the following Lemma that relates together projctors onto symmetric subspaces of different sizes.
        \begin{lemma}
            \label{lem:superprojector}
            Let \(d\geqslant2\), \(t\geqslant1\) and \(k\geqslant1\) be three natural numbers. Then
            \begin{equation}
                \projdR{t+k}\left(\projdR{t}\otimes\projdR{k}\right)=\projdR{t+k}\,.
            \end{equation}
        \end{lemma}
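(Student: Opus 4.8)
The plan is to work at the level of subspaces rather than manipulating the projectors directly, using the elementary fact that for two orthogonal projectors $A$ and $B$ one has $AB=B$ if and only if $\operatorname{range}(B)\subseteq\operatorname{range}(A)$. I would apply this with $A=\projdR{t}\otimes\projdR{k}$ and $B=\projdR{t+k}$, after identifying $(\RR^d)^{\otimes(t+k)}$ with $(\RR^d)^{\otimes t}\otimes(\RR^d)^{\otimes k}$ by grouping the first $t$ tensor legs and the last $k$ tensor legs. Under this identification $A$ is the orthogonal projector onto $\symsubspaced{t}{R}\otimes\symsubspaced{k}{R}$, which is exactly the subspace of vectors of $(\RR^d)^{\otimes(t+k)}$ left invariant by $P_\pi$ for every $\pi$ in the subgroup $\mathfrak{S}_t\times\mathfrak{S}_k\leqslant\mathfrak{S}_{t+k}$ preserving the partition $\{1,\dots,t\}\sqcup\{t+1,\dots,t+k\}$, while $B$ projects onto $\symsubspaced{t+k}{R}$.

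The key step is the containment $\symsubspaced{t+k}{R}\subseteq\symsubspaced{t}{R}\otimes\symsubspaced{k}{R}$, which is immediate: a vector $\ket{v}$ fixed by $P_\pi$ for all $\pi\in\mathfrak{S}_{t+k}$ is in particular fixed by $P_\pi$ for all $\pi\in\mathfrak{S}_t\times\mathfrak{S}_k$, hence lies in $\operatorname{range}(A)$. By the observation above this yields
\begin{equation}
    \left(\projdR{t}\otimes\projdR{k}\right)\projdR{t+k}=\projdR{t+k}\,.
\end{equation}

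It then remains to transpose this identity. Since $\projdR{t}$, $\projdR{k}$ and $\projdR{t+k}$ are orthogonal projectors they are self-adjoint, and so is $\projdR{t}\otimes\projdR{k}$; taking the adjoint of the displayed equation and using $(XY)^\dagger=Y^\dagger X^\dagger$ gives
\begin{equation}
    \projdR{t+k}\left(\projdR{t}\otimes\projdR{k}\right)=\projdR{t+k}\,,
\end{equation}
which is the claim. I do not expect a real obstacle here; the only point demanding a little care is the bookkeeping identifying $\operatorname{range}\!\left(\projdR{t}\otimes\projdR{k}\right)$ with the space of $\mathfrak{S}_t\times\mathfrak{S}_k$-invariant tensors, which holds because both $\projdR{t}\otimes\projdR{k}$ and $\frac{1}{t!\,k!}\sum_{\pi\in\mathfrak{S}_t\times\mathfrak{S}_k}P_\pi$ are the orthogonal projector onto that same subspace.
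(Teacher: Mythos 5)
Your proof is correct and takes essentially the same route as the paper: both establish the subspace containment $\symsubspaced{t+k}{R}\subseteq\symsubspaced{t}{R}\otimes\symsubspaced{k}{R}$ by observing that $\mathfrak{S}_t\times\mathfrak{S}_k$ sits inside $\mathfrak{S}_{t+k}$, then translate this into the projector identity. Your version is marginally more explicit about the order of the factors, passing through $(\projdR{t}\otimes\projdR{k})\projdR{t+k}=\projdR{t+k}$ and taking adjoints, whereas the paper states the containment and then simply asserts the projector relation follows; both are fine.
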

        \begin{proof}
            Note that \(\projdR{t}\otimes\projdR{k}\) is the projector onto \(\symsubspaced{t}{R}\otimes\symsubspaced{k}{R}\). As such, if we show that \(\symsubspaced{t+k}{R}\) is a subspace of \(\symsubspaced{t}{R}\otimes\symsubspaced{k}{R}\), the result would follow.

            Note that \(\symsubspaced{t}{R}\otimes\symsubspaced{k}{R}\) is the space of states that are invariant under operators of the form \(P_\pi\otimes P_{\sigma}\), with \(\pi\in\mathfrak{S}_t\) and \(\sigma\in\mathfrak{S}_k\). As a recall, if \(\tau\in\mathfrak{S}_n\) for some natural number \(n\), then \(P_\tau\) is defined on the basis states of \(\CC^{\otimes n}\) as
            \begin{equation}
                P_\tau\ket{x_1,\cdots,x_n} = \ket{x_{\tau^{-1}(1)},\cdots,x_{\tau^{-1}(n)}}\,.
            \end{equation}
            But now, note that we can see applying a permutation \(\pi\) on the first \(t\) registers and a permutation \(\sigma\) on the last \(k\) ones as a permutation on the whole \(t+k\) registers. In particular, every state of \(\symsubspaced{t+k}{R}\) is also left invariant by all operators of the form \(P_\pi\otimes P_\sigma\) for \(\pi\in\mathfrak{S}_t\) and \(\sigma\in\mathfrak{S}_k\), which shows that
            \begin{equation}
                \symsubspaced{t+k}{R}\subseteq\symsubspaced{t}{R}\otimes\symsubspaced{k}{R}\,.
            \end{equation}
            In particular, the projectors onto these spaces satisfy the desired relation.
        \end{proof}
        We can now leverage this Lemma to prove that, up to an additional symmetrization using the projector onto the symmetric subspace, the tensor product of states in \(\symsubspacedt{R}\) relates to the product of polynomials in \(\homogeneousdt\).
        \begin{corollary}
            \label{cor:representation-of-product}
            Let \(d\geqslant2\), \(t\geqslant1\) and \(k\geqslant1\) be three natural numbers. Let \(P\in\homogeneousdt\) and \(Q\in\homogeneousd{k}\). The image of \(PQ\) by the isomorphism described in \Cref{eq:isomorphism} is \(\projdR{t+k}(\ket{P}\otimes\ket{Q})\).
        \end{corollary}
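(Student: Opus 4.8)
The plan is to reduce to the case of monomials by bilinearity of the polynomial product and linearity of the isomorphism, and then to verify the identity directly on monomials using \Cref{lem:superprojector}. Concretely, write $P = X_0^{a_0}\cdots X_{d-1}^{a_{d-1}}$ and $Q = X_0^{b_0}\cdots X_{d-1}^{b_{d-1}}$ with $\sum_i a_i = t$ and $\sum_i b_i = k$. By \Cref{eq:isomorphism} we have $\ket{P} = \projdtR\ket{0^{a_0},\dots,(d-1)^{a_{d-1}}}$ and $\ket{Q} = \projdR{k}\ket{0^{b_0},\dots,(d-1)^{b_{d-1}}}$, so that $\ket{P}\otimes\ket{Q} = \left(\projdtR\otimes\projdR{k}\right)\left(\ket{0^{a_0},\dots,(d-1)^{a_{d-1}}}\otimes\ket{0^{b_0},\dots,(d-1)^{b_{d-1}}}\right)$.

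Next I would apply $\projdR{t+k}$ to this expression and invoke \Cref{lem:superprojector}, which gives $\projdR{t+k}\left(\projdtR\otimes\projdR{k}\right) = \projdR{t+k}$. This collapses the right-hand side to $\projdR{t+k}$ applied to the single computational basis vector $\ket{0^{a_0},\dots,(d-1)^{a_{d-1}}}\otimes\ket{0^{b_0},\dots,(d-1)^{b_{d-1}}}$ of $\left(\RR^d\right)^{\otimes(t+k)}$, a basis vector in which each symbol $i\in[d]$ occurs exactly $a_i+b_i$ times.

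Then I would use that $\projdR{t+k}$ commutes with, and is left-invariant by, every $P_\pi$ with $\pi\in\mathfrak{S}_{t+k}$: since the basis vector above differs from the sorted vector $\ket{0^{a_0+b_0},\dots,(d-1)^{a_{d-1}+b_{d-1}}}$ only by such a permutation, applying $\projdR{t+k}$ to either produces the same vector. But $\projdR{t+k}\ket{0^{a_0+b_0},\dots,(d-1)^{a_{d-1}+b_{d-1}}}$ is by definition the image of $PQ = X_0^{a_0+b_0}\cdots X_{d-1}^{a_{d-1}+b_{d-1}}$ under \Cref{eq:isomorphism}, which settles the monomial case. Extending by bilinearity over the monomial bases of $\homogeneousdt$ and $\homogeneousd{k}$ then yields the statement in general.

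I do not expect a serious obstacle here; the only point needing a little care is the bookkeeping in the last step, namely that $\ket{0^{a_0},\dots,(d-1)^{a_{d-1}}}\otimes\ket{0^{b_0},\dots,(d-1)^{b_{d-1}}}$ is not literally the sorted monomial basis vector but merely a reordering of it. This is harmless precisely because the symmetrizing projector $\projdR{t+k}$ annihilates the difference, so no explicit combinatorial factor appears.
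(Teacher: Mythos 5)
Your proposal is correct and follows essentially the same route as the paper's proof: reduce to monomials by bilinearity, factor the tensor of projectors, invoke \Cref{lem:superprojector} to collapse $\projdR{t+k}(\projdR{t}\otimes\projdR{k})$ to $\projdR{t+k}$, and then use the permutation-invariance of $\projdR{t+k}$ to reorder the concatenated basis vector into sorted form. The only cosmetic difference is that the paper carries out these steps in slightly more explicit notation, but the logical structure and the key ingredients are identical.
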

        \begin{proof}
            Let \(\left\{P_i\right\}_i\) and \(\left\{Q_j\right\}_j\) be two orthogonal bases of \(\homogeneousdt\) and \(\homogeneousd{k}\) respectively, and let us write
            \begin{subequations}
                \begin{equation}
                    P = \sum_i\alpha_i\,P_i
                \end{equation}
                and
                \begin{equation}
                    Q = \sum_j\mu_j\,Q_j\,.
                \end{equation}
            \end{subequations}
            We have on the one hand
            \begin{equation}
                PQ = \sum_{i,j}\lambda_i\mu_j\,P_iQ_j
            \end{equation}
            and on the other hand
            \begin{equation}
                \projdR{t+k}(\ket{P}\otimes\ket{Q}) = \sum_{i,j}\lambda_i\mu_j\,\projdR{t+k}\ket{P_i, Q_j}\,.
            \end{equation}
            We thus only have to show the result for monomials. Let \(X_0^{a_0}\cdots X_{d-1}^{a_{d-1}}\in\homogeneousdt\) and \(X_0^{b_0}\cdots X_{d-1}^{b_{d-1}}\in\homogeneousd{k}\) be two monomials. We first have by the definition of the isomorphism between the symmetric subspace and the space of homogeneous polynomials
            \begin{equation}
                \begin{split}
                    &\projdR{t+k}\left(\ket{X_0^{a_0}\cdots X_{d-1}^{a_{d-1}}}\otimes\ket{X_0^{b_0}\cdots X_{d-1}^{b_{d-1}}}\right)\\={}& \projdR{t+k}\left(\projdR{t}\ket{0^{a_0},\cdots,(d-1)^{a_{d-1}}}\otimes\projdR{k}\ket{0^{b_0},\cdots,(d-1)^{b_{d-1}}}\right)\,.
                \end{split}
            \end{equation}
            We can now factor out the different tensor products, which gives us
            \begin{equation}
                \begin{split}
                    &\projdR{t+k}\left(\ket{X_0^{a_0}\cdots X_{d-1}^{a_{d-1}}}\otimes\ket{X_0^{b_0}\cdots X_{d-1}^{b_{d-1}}}\right)\\={}&\projdR{t+k}\left(\projdR{t}\otimes\projdR{k}\right)\ket{0^{a_0},\cdots,(d-1)^{a_{d-1}},0^{b_0},\cdots,(d-1)^{b_{d-1}}}\,.
                \end{split}
            \end{equation}
            We can now use \Cref{lem:superprojector} to remove the projector into \(\symsubspacedt{R}\otimes\symsubspaced{k}{R}\), which gives us
            \begin{equation}
                \begin{split}
                    &\projdR{t+k}\left(\ket{X_0^{a_0}\cdots X_{d-1}^{a_{d-1}}}\otimes\ket{X_0^{b_0}\cdots X_{d-1}^{b_{d-1}}}\right)\\
                    ={}&\projdR{t+k}\ket{0^{a_0},\cdots,(d-1)^{a_{d-1}},0^{b_0},\cdots,(d-1)^{b_{d-1}}}\,.
                \end{split}
            \end{equation}
            Finally, since \(\projdR{t+k}\) is invariant under all permutations of registers, we can shift around the different registers to get
            \begin{equation}
                \projdR{t+k}\left(\ket{X_0^{a_0}\cdots X_{d-1}^{a_{d-1}}}\otimes\ket{X_0^{b_0}\cdots X_{d-1}^{b_{d-1}}}\right)=\projdR{t+k}\ket{0^{a_0+b_0},\cdots,(d-1)^{a_{d-1}+b_{d-1}}}
            \end{equation}
            which is equal to \(\ket{X_0^{a_0+b_0}\cdots X_{d-1}^{a_{d-1}+b_{d-1}}}\) by definition.
        \end{proof}
        We will now slightly generalize this statement to products of polynomials with more than two terms.
        \begin{corollary}
            \label{cor:representation-of-n-product}
            Let \(d\geqslant2\) and \(N\geqslant2\) be two natural numbers. Let \(k_1\geqslant1,\cdots,k_N\geqslant1\) also be natural numbers, and let us denote \(K=\sum_{i=1}^Nk_i\). For \(i\in[N]\), let \(P_i\in\homogeneousd{k_i}\). The image of \(\prod_{i=0}^{N-1}P_i\) by the isomorphism described in \Cref{eq:isomorphism} is \(\projdR{K}\bigotimes_{i=0}^{N-1}\ket{P_i}\).
        \end{corollary}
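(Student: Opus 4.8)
The plan is to prove the statement by induction on $N$, with the base case $N = 2$ being exactly \Cref{cor:representation-of-product}. For the inductive step, assume the claim holds for any $N - 1$ homogeneous polynomials. Set $K' \defed \sum_{i=0}^{N-2} k_i = K - k_{N-1}$ and $Q \defed \prod_{i=0}^{N-2} P_i \in \homogeneousd{K'}$. By the induction hypothesis, the image of $Q$ under the isomorphism described in \Cref{eq:isomorphism} is $\ket{Q} = \projdR{K'} \bigotimes_{i=0}^{N-2} \ket{P_i}$, an element of the symmetric subspace $\symsubspaced{K'}{R}$.

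Next I would apply \Cref{cor:representation-of-product} to the pair $(Q, P_{N-1})$, viewing $Q$ as a homogeneous polynomial of degree $K'$: this gives that the image of $\prod_{i=0}^{N-1} P_i = Q\, P_{N-1}$ equals $\projdR{K}(\ket{Q} \otimes \ket{P_{N-1}})$. It then remains to identify this with $\projdR{K} \bigotimes_{i=0}^{N-1} \ket{P_i}$. Substituting the expression for $\ket{Q}$ and using that $\ket{P_{N-1}} = \projdR{k_{N-1}} \ket{P_{N-1}}$ since $\ket{P_{N-1}}$ already lies in the symmetric subspace, one obtains
\[
\projdR{K}\left(\ket{Q} \otimes \ket{P_{N-1}}\right) = \projdR{K}\left(\projdR{K'} \otimes \projdR{k_{N-1}}\right)\left(\bigotimes_{i=0}^{N-2} \ket{P_i} \otimes \ket{P_{N-1}}\right),
\]
and \Cref{lem:superprojector}, applied with $t = K'$ and $k = k_{N-1}$, collapses the middle factor, yielding $\projdR{K} \bigotimes_{i=0}^{N-1} \ket{P_i}$ as desired.

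I do not expect a genuine obstacle here: the argument is essentially bookkeeping built on \Cref{cor:representation-of-product} and \Cref{lem:superprojector}. The only points requiring a little care are that the intermediate object $\ket{Q}$ must be correctly identified as the image of a polynomial of degree $K'$ (so that \Cref{cor:representation-of-product} applies verbatim, which uses $K' \geqslant N-1 \geqslant 1$ and $k_{N-1} \geqslant 1$), and that the tensor factors are grouped so that \Cref{lem:superprojector} can be invoked on the correct registers. An alternative, non-inductive route would be to reduce to monomials by multilinearity as in the proof of \Cref{cor:representation-of-product}, prove once and for all that $\projdR{K}\bigl(\projdR{k_0} \otimes \cdots \otimes \projdR{k_{N-1}}\bigr) = \projdR{K}$ (which follows from the same observation underlying \Cref{lem:superprojector}, namely that block-wise permutations of the $K$ registers are particular elements of $\mathfrak{S}_K$), and then rearrange registers inside $\projdR{K}$; the inductive route is preferable as it reuses the existing lemmas with essentially no extra work.
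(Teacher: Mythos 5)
Your proof is correct and follows essentially the same approach as the paper: induction on $N$ with base case \Cref{cor:representation-of-product}, combining the inductive hypothesis with one more application of \Cref{cor:representation-of-product} and then collapsing the nested symmetrizers via \Cref{lem:superprojector}. The only cosmetic difference is that you go from $N-1$ to $N$ rather than $N$ to $N+1$, and you make explicit the harmless substitution $\ket{P_{N-1}} = \projdR{k_{N-1}}\ket{P_{N-1}}$, which the paper leaves implicit.
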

        \begin{proof}
            Let us prove the result by induction on \(N\). The case \(N=2\) is true as per \Cref{cor:representation-of-product}. Let us assume that this statement is true for some \(N\geqslant2\), and let us show that it remains true for \(N+1\).

            By assumption, the image of \(\prod_{i=0}^{N-1}P_i\) is \(\projdR{K-k_{N+1}}\bigotimes_{i=0}^{N-1}\ket{P_i}\). By \Cref{cor:representation-of-product}, we thus have that the image of \(\prod_{i=0}^{N}P_i\) by the isomorphism is
            \begin{equation}
                \projdR{K}\left(\projdR{K-k_{N+1}}\bigotimes_{i=0}^{N-1}\ket{P_i}\otimes\ket{P_{N}}\right) = \projdR{K}\left(\projdR{K-k_{N+1}}\otimes\projdR{k_{N+1}}\right)\bigotimes_{i=0}^{N}\ket{P_i}
            \end{equation}
            which then gives us, using \Cref{lem:superprojector}, that the image of \(\prod_{i=0}^{N}P_i\) is
            \begin{equation}
                \projdR{K}\bigotimes_{i=0}^{N}\ket{P_i}
            \end{equation}
            which concludes the proof.
        \end{proof}
        We are now ready to prove the main result of this section. The only thing left to do is to introduce the representation of \(\orthogonal{d}\) in \(\homogeneousdt\) used in~\cite{CW68}.

        As a recall, a representation \(f\) of \(\orthogonal{d}\) takes as input an orthogonal matrix \(O\in\orthogonal{d}\) and returns an invertible endormorphism \(f(O)\) of \(\homogeneousdt\). In~\cite{CW68}, the representation \(f\) of \(\orthogonal{d}\) that is considered acts as
        \begin{equation}
            f(O)(P) = P\left(O^\top X\right)
        \end{equation}
        for all \(O\in\orthogonal{d}\) and all \(P\in\homogeneousdt\). Here \(P\left(O^\top X\right)\) is better understood when seeing \(P\) as a function of a real vector \(X\in\RR^d\): the image of \(P\) by \(f(O)\) is the polynomial that results from applying \(P\) onto the vector \(O^\top X\) instead of \(X\).

        As an example, let us take the case \(d=t=2\), and let us consider the orthogonal matrix \[O=\frac{1}{\sqrt{2}}\begin{pmatrix}1&1\\-1&1\end{pmatrix}\,.\]
        As such, we have
        \begin{equation}
            O^\top X=\frac{1}{\sqrt{2}}\begin{pmatrix}1&-1\\1&1\end{pmatrix}\begin{pmatrix}
                X_0\\X_1
            \end{pmatrix}=\frac{1}{\sqrt{2}}\begin{pmatrix}
                X_0-X_1\\X_0+X_1
            \end{pmatrix}
        \end{equation}
        \(P\left(O^\top X\right)\) is then the homogeneous polynomial that results from applying the original \(P\in\homogeneous{2}{2}\) to this new vector. We know that any such \(P\) can be written as
        \[P=a\,X_0^2+b\,X_0X_1+cX\,_1^2\]
        with \(a, b, c\in\RR\). We then have
        \begin{subequations}
            \begin{align}
                P\left(O^\top X\right) &= a\left[\frac{1}{\sqrt{2}}\left(X_0-X_1\right)\right]^2+b\left[\frac{1}{\sqrt{2}}\left(X_0-X_1\right)\frac{1}{\sqrt{2}}\left(X_0+X_1\right)\right]+c\left[\frac{1}{\sqrt{2}}\left(X_0+X_1\right)\right]^2\\
                &= \frac12\left(a+b+c\right)\,X_0^2+(c-a)\,X_0X_1+\frac12\left(a-b+c\right)\,X_1^2\,.
            \end{align}
        \end{subequations}
        Note that for any two orthogonal matrices \(O_1\) and \(O_2\), we have \(f\left(O_1\right)f\left(O_2\right)=f\left(O_1O_2\right)\). Furthermore, \(f\left(O_1\right)\) is an invertible endomorphism of \(\homogeneousdt\), meaning that \(f\left(O_1\right)\in\gl{\homogeneousdt}\) as expected.

        Thus, the only thing left to be shown is that the image of \(P\left(O^\top X\right)\) by the isomorphism is \(O^{\otimes t}\ket{P}\).
        \begin{lemma}
            \label{lem:representation-od}
            Let \(d\geqslant2\) and \(t\geqslant1\) be two natural numbers. Let \(P\in\homogeneousdt\) and \(O\in\orthogonal{d}\). The image of  \(P\left(O^\top X\right)\) by the isomorphism described in \Cref{eq:isomorphism} is \(O^{\otimes t}\ket{P}\).
        \end{lemma}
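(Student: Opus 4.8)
The plan is to reduce the statement to the case of monomials and then bootstrap from the degree-one case using the multiplicativity of the isomorphism recorded in \Cref{cor:representation-of-n-product}. Since both the isomorphism of \Cref{eq:isomorphism} and the substitution map $P\mapsto P(O^\top X)$ are $\RR$-linear, it suffices to prove the identity when $P$ is a monomial $M=X_0^{a_0}\cdots X_{d-1}^{a_{d-1}}$ with $\sum_i a_i=t$.

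The first step is the degree-one computation. For a linear form $L=\sum_i c_iX_i$ we have $\ket{L}=\sum_i c_i\ket{i}$ because $\projdR{1}$ is the identity on $\RR^d$, whereas $L(O^\top X)=\sum_i c_i(O^\top X)_i=\sum_{i,j}c_iO_{ji}X_j$, whose image under the isomorphism is $\sum_{i,j}c_iO_{ji}\ket{j}=O\sum_i c_i\ket{i}=O\ket{L}$. In particular this settles the lemma when $t=1$, and it also gives the identity $\ket{X_{i}(O^\top X)}=O\ket{i}$ for each coordinate form $X_i$.

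The second step handles $t\geqslant2$. Write $M$ as a product of $t$ coordinate forms, $M=\prod_{\ell=1}^t X_{i_\ell}$, and apply \Cref{cor:representation-of-n-product} twice: once to $M$ itself, and once to the product $M(O^\top X)=\prod_{\ell=1}^t X_{i_\ell}(O^\top X)$ of the transformed (still degree-one) forms. Together with the degree-one identity this yields
\begin{align*}
\ket{M(O^\top X)}&=\projdtR\bigotimes_{\ell=1}^t\ket{X_{i_\ell}(O^\top X)}\\
&=\projdtR\bigotimes_{\ell=1}^t O\ket{i_\ell}=\projdtR\,O^{\otimes t}\bigotimes_{\ell=1}^t\ket{i_\ell}\,.
\end{align*}

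The final step is to observe that $O^{\otimes t}$ commutes with every permutation operator $P_\pi$, hence with the symmetrizer $\projdtR=\tfrac{1}{t!}\sum_\pi P_\pi$; this lets us pull $O^{\otimes t}$ outside the projector and recognize $\projdtR\bigotimes_\ell\ket{i_\ell}=\ket{M}$ by \Cref{cor:representation-of-n-product} once more, giving $\ket{M(O^\top X)}=O^{\otimes t}\ket{M}$ and finishing the proof. The one place demanding care is the transpose bookkeeping: one must check that precomposing $P$ with $X\mapsto O^\top X$ produces exactly the matrix-element pattern $\sum_j O_{ji}$ that reassembles into $O\ket{i}$ rather than $O^\top\ket{i}$. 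A minor technical point is that \Cref{cor:representation-of-n-product} is stated for $N\geqslant2$ factors, so the degenerate case $t=1$ has to be covered directly by the degree-one computation above.
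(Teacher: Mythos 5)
Your proof is correct and follows essentially the same route as the paper: reduce to monomials by linearity, factor the monomial, apply \Cref{cor:representation-of-n-product} to pass the isomorphism through the product, and use the commutation of $O^{\otimes t}$ with $\projdtR$ to identify the result as $O^{\otimes t}\ket{M}$. The only cosmetic difference is the grouping of the factorization: you split $M$ into $t$ degree-one coordinate forms (and prove the degree-one case cleanly up front, which also handles $t=1$ directly), whereas the paper groups $Q(O^\top X)$ into $d$ factors $\bigl(\sum_i \braopket{i}{O}{j}X_i\bigr)^{a_j}$ and then tensorizes; both reduce to the same application of the multiplicativity corollary.
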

        \begin{proof}
            Let \(\left\{P_i\right\}_i\) be an orthonormal basis of \(\homogeneousdt\). Let us write \(P\) in this basis, which gives us
            \begin{equation}
                P=\sum_i\lambda_i\,P_i\,.
            \end{equation}
            We then have
            \begin{equation}
                P\left(O^\top X\right) = \sum_i\lambda_i\,P_i\left(O^\top X\right)\,.
            \end{equation}
            We thus only have to show the result on monomials. Let \(Q=X_0^{a_0}\cdots X_{d-1}^{a_{d-1}}\in\homogeneousdt\) be a monomial. We have by definition
            \begin{equation}
                Q\left(O^\top X\right) = \left(\sum_{i=0}^{d-1}\braopket{i}{O}{0}\,X_{i}\right)^{a_0}\cdots\left(\sum_{i=0}^{d-1}\braopket{i}{O}{d-1}\,X_{i}\right)^{a_{d-1}}=\prod_{j=0}^{d-1}\left(\sum_{i=0}^{d-1}\braopket{i}{O}{j}\,X_i\right)^{a_j}\,.
            \end{equation}
            Since this is a product of polynomials, we can use \Cref{cor:representation-of-n-product} to assert that
            \begin{equation}
                \ket{Q\left(O^\top X\right)} = \projdtR\bigotimes_{j=0}^{d-1}\left(\sum_{i=0}^{d-1}\braopket{i}{O}{j}\,\ket{i}\right)^{\otimes a_{j}}\,.
            \end{equation}
            Indeed, the image of any \(P^a\) by the isomorphism is \(\ket{P}^{\otimes a}\), which can also be seen as a consequence of \Cref{cor:representation-of-n-product}. Now, note that we have
            \begin{equation}
                \sum_{i=0}^{d-1}\braopket{i}{O}{j}\,\ket{i} = O\ket{j}
            \end{equation}
            for any index \(j\in[d]\). This gives us
            \begin{equation}
                \ket{Q\left(O^\top X\right)} = \projdtR\bigotimes_{j=0}^{d-1}\left(O\ket{j}\right)^{\otimes a_{j}}\,.
            \end{equation}
            We can now factor the orthogonal matrices out of the sum like so
            \begin{equation}
                \ket{Q\left(O^\top X\right)} = \projdtR O^{\otimes t}\bigotimes_{j=0}^{d-1}\ket{j}^{\otimes a_{j}}\,.
            \end{equation}
            Now, recall that \(\projdR{t}=\projdtC\) commute with any unitary matrix tensored \(t\) times. As such we have
            \begin{equation}
                \ket{Q\left(O^\top X\right)} = O^{\otimes t}\projdtR\bigotimes_{j=0}^{d-1}\ket{j}^{\otimes a_{j}}\,.
            \end{equation}
            Finally, by definition of the isomorphism, we have
            \begin{equation}
                \projdtR\bigotimes_{j=0}^{d-1}\ket{j}^{\otimes a_{j}} = \ket{Q}
            \end{equation}
            which concludes the proof.
        \end{proof}    

	\section{Proofs from the main text}
        In this section we'll prove all the results that have been deferred from the main text. Since \nameref{lem:schur} will give us the spectral decomposition of \(\rrhodt\) on \(\symsubspacedt{R}\), we first show that \(\rrhodt\) is nil elsewhere.
        
        \label{app:main_text_proofs}
        \subsection{Proof of Lemma~\ref{lem:on-real-symspace}}
        \onrealsymspace*
        \begin{proof}
        For a permutation \(\pi\in\mathfrak{S}_t\), let us recall the definition of \(P_\pi\) on the basis states:
        \begin{equation}
                \forall\left(x_1,\cdots,x_t\right)\in[d]^t,P_\pi\ket{x_1,\cdots,x_t}=\ket{x_{\pi^{-1}(1)},\cdots,x_{\pi^{-1}(t)}}\,.
            \end{equation}
        In particular, note that for all \(\pi\in\mathfrak{S}_t\) and all \(\ket{\psi}\in\RR^d\) we have
        \begin{equation}
                P_\pi\ket{\psi}^{\otimes t}=\ket{\psi}^{\otimes t}\,.
            \end{equation}
        For all \(\pi\in\mathfrak{S}_t\), we then have
        \begin{subequations}
                \begin{align}
                        P_\pi\rrhodt &= P_\pi\int_{O\in\orthogonal{d}}\left(O\selfouter{0}O^\dagger\right)^{\otimes t}\,\mathrm{d}\mu(O)\\
                        &= \int_{O\in\orthogonal{d}}P_\pi\left(O\ket{0}\right)^{\otimes t}\left(\bra{0}O^\dagger\right)^{\otimes t}\,\mathrm{d}\mu(O)\\
                        &= \int_{O\in\orthogonal{d}}\left(O\ket{0}\right)^{\otimes t}\left(\bra{0}O^\dagger\right)^{\otimes t}\,\mathrm{d}\mu(O)\\
                        &= \rrhodt\,.
                    \end{align}
            \end{subequations}
        And similarly we have \(\rrhodt P_\pi=\rrhodt\). As shown in \cite[Proposition~1]{Har13}, we have
        \begin{equation}
                \projdtR=\frac{1}{t!}\sum_{\pi\in\mathfrak{S}_t}P_\pi\,,
            \end{equation}
        which combined with the above implies that $\projdtR \rrhodt = \rrhodt = \rrhodt \projdtR$ and so
        \begin{equation}
            \rrhodt = \projdtR \rrhodt \projdtR\,.
        \end{equation}
        However the right-hand-side is an operator whose support is, by construction, contained within $\symsubspacedt{R}$.
    \end{proof}
    \subsection{Proof of Lemma~\ref{lem:commute-orthogonal}}
    Now, in order to use \nameref{lem:schur} and in particular \Cref{cor:proportional-to-identity}, we show that \(\rrhodt\) commutes with every element of a given representation, here \(\orthogonal{d}\ni O\mapsto O^{\otimes t}\).
    \commuteorthogonal*
    \begin{proof}
        Let \(O\in\orthogonal{d}\). We have
        \begin{subequations}
        \begin{align}
                O^{\otimes t}\rrhodt&=\int_{P\in\orthogonal{d}}(OP\ket{0})^{\otimes t}\left(\bra{0}P^\dagger\right)^{\otimes t}\,\mathrm{d}\mu(P)\\
                &=\int_{Q\in\orthogonal{d}}(Q\ket{0})^{\otimes t}\left(\bra{0}Q^\dagger O\right)^{\otimes t}\,\mathrm{d}\mu\left(O^\dagger Q\right)\\
                &=\int_{Q\in\orthogonal{d}}(Q\ket{0})^{\otimes t}\left(\bra{0}Q^\dagger O\right)^{\otimes t}\,\mathrm{d}\mu\left(Q\right)\\
                &=\int_{Q\in\orthogonal{d}}(Q\ket{0})^{\otimes t}\left(\bra{0}Q^\dagger\right)^{\otimes t}\,\mathrm{d}\mu\left(Q\right)O^{\otimes t}\\
                &=\rrhodt O^{\otimes t}
            \end{align}
        \end{subequations}
        where on the second line we substituted \(Q=OP\) and on the third line we used the translation invariance of the Haar measure, namely that for any subset \(\mathcal{S}\) of \(\orthogonal{d}\) and any orthogonal matrix \(O\) we have \(\mu(O\mathcal{S})=\mu(\mathcal{S})\).
    \end{proof}
    \subsection{Proof of Lemma~\ref{lem:rhodtexpression}}
        Now that \Cref{cor:proportional-to-identity} and \cite[Theorem~2.12]{CW68} gave us the eigenvectors of \(\rrhodt\), we want to find a more convenient expression of \(\rrhodt\) to leverage this result. The following Theorem gives the exact expression of all of \(\rrhodt\)'s coefficients.
	\label{app:proof_expression_rho}
	\rhodtexpression*
	\begin{proof}
            Since \(O\) is sampled from the Haar measure, we can write
            \begin{equation}
                O\ket{0} = \sum_{x_1=0}^{d-1}\frac{W_{x_1}}{\|W\|}\ket{x_1}
            \end{equation}
            with \(W\in\RR^d\) being distributed according to a standard centered multivariate normal distribution \(\mathcal{N}\left(0,\id_d\right)\), as per \cite[Theorem~1]{Mez07}. Note that we have
            \begin{equation}
                (O\ket{0})^{\otimes t}=\sum_{x_1=0}^{d-1}\cdots\sum_{x_t=0}^{d-1}\prod_{k=1}^t\frac{W_{x_k}}{\|W\|}\,\ket{x_1,\cdots,x_t}
            \end{equation}
            which we can write as
            \begin{equation}
                (O\ket{0})^{\otimes t}=\sum_{x\in[d]^t}\prod_{k=1}^t\frac{W_{x_k}}{\|W\|}\,\ket{x}\,.
            \end{equation}
            We thus have
            \begin{equation}
                \left(O\selfouter{0}O^\top\right)^{\otimes t} = \sum_{x,y\in[d]^t}\prod_{k=1}^t\frac{W_{x_k}}{\|W\|}\prod_{i=1}^t\frac{W_{y_i}}{\|W\|}\,\ketbra{x}{y}
            \end{equation}
            which we can rewrite as
            \begin{equation}
                \left(O\selfouter{0}O^\top\right)^{\otimes t} = \sum_{x,y\in[d]^t}\prod_{k=1}^t\frac{W_{x_k}W_{y_k}}{\|W\|^2}\,\ketbra{x}{y}\,.
            \end{equation}
            This then gives us
		\begin{align}
			\rrhodt &= \underset{O\in\orthogonal{d}}{\mathbb{E}}\left[\left(O\selfouter{0}O^\top\right)^{\otimes t}\right]\\
			&= \mathbb{E}\left[\sum_{x,y\in[d]^t}\prod_{k=1}^{t}\frac{W_{x_k}}{\|W\|}\frac{W_{y_k}}{\|W\|}\ketbra{x}{y}\right]\,.
		\end{align}
		We then have by linearity and by definition of \(n_i\)
		\begin{equation}
			\rrhodt = \sum_{x,y\in[d]^t}\mathbb{E}\left[\prod_{i=0}^{d-1}\frac{W_i^{n_i(x, y)}}{\|W\|^{n_i(x, y)}}\right]\ketbra{x}{y}\,.
		\end{equation}
            We then have the desired result via \Cref{lem:expectationY}.
	\end{proof}
	\subsection{Proof of Theorem~\ref{thm:eigenvalues}}
	\label{app:proof_eigenvalues}
        Now that we have an expression for the coefficients of \(\rrhodt\) and that we know an expression of an eigenvector \(\ket{\psi_\lambda}\) for each of its unique eigenvalues, we can solve for \(\rrhodt\ket{\psi_\lambda}=\lambda\ket{\psi_\lambda}\) to find said eigenvalues. This yields the following theorem.
	\eigenvalues*
	\begin{proof}
		\label{proof:eigenvalues-real-haar-random}
		For the sake of conciseness, let us define
		\begin{equation}
			\normrrho\defed\sum_{x\sim y}\prod_{i=1}^d\left[n_i(x,y)-1\right]!!\ketbra{x}{y}\,,
		\end{equation}
            and so $\rrhodt = \frac{(d-2)!!}{(d+2t-2)!!}\,\normrrho$
		Let \(k_0\in\left[1+\left\lfloor\frac{t}{2}\right\rfloor\right]\). Since we know the multiplicity of the eigenvalue associated to \(\mathcal{H}_d^{t-2k_0}\) via \Cref{eq:dim-subspaces}, the only thing we have to do is to compute this eigenvalue. Let us take \(\alpha\) so that we have
		\begin{equation}
                \label{eq:belongstoht-2k0}
			\sum_{\ell_0+\cdots+\ell_{d-1}=t-2k_0}\alpha_{\ell_0,\cdots,\ell_{d-1}}X_0^{\ell_0}\cdots X_{d-1}^{\ell_{d-1}}\in\mathcal{H}_d^{t-2k_0}
		\end{equation}
		We now want to multiply this polynomial by \(q^{k_0}\) to construct our eigenvector, which gives us
		\begin{subequations}
			\begin{align}
				&\left(\sum_{j=0}^{d-1}X_j^2\right)^{k_0}\sum_{\ell_0+\cdots+\ell_{d-1}=t-2k_0}\alpha_{\ell_0,\cdots,\ell_{d-1}}X_0^{\ell_0}\cdots X_{d-1}^{\ell_{d-1}}\\
				={}&\sum_{k_{d-1}\leqslant\cdots\leqslant k_1=0}^{k_0}\left(\prod_{j=0}^{d-2}\binom{k_j}{k_{j+1}}X_j^{2k_j-2k_{j+1}}\right)X_{d-1}^{2k_{d-1}}\sum_{\ell_0+\cdots+\ell_{d-1}=t-2k_0}\alpha_{\ell_0,\cdots,\ell_{d-1}}X_0^{\ell_0}\cdots X_{d-1}^{\ell_{d-1}}\\
				={}&\sum_{k_{d-1}\leqslant\cdots\leqslant k_1=0}^{k_0}\left(\prod_{j=0}^{d-2}\binom{k_j}{k_{j+1}}\right)\sum_{\ell_0+\cdots+\ell_{d-1}=t-2k_0}\alpha_{\ell_0,\cdots,\ell_{d-1}}X_{d-1}^{\ell_{d-1}+2k_{d-1}}\prod_{j=0}^{d-2}X_j^{\ell_j+2k_j-2k_{j+1}}\,.
			\end{align}
		\end{subequations}
		Now, for indexes \(\ell_0,\cdots,\ell_{d-1}\) that sum up to \(t\), let us denote the (not normalized) state
		\begin{subequations}
			\begin{align}
				\ket{\varphi_{\ell_0,\cdots,\ell_{d-1}}}&=\sum_{\sigma\in\mathfrak{S}_t}\ket{\sigma\left(0^{\ell_0},\cdots,(d-1)^{\ell_{d-1}}\right)}\\
				\label{eq:phimult}
				&= \left(\prod_{j=0}^{d-1}\ell_j!\right)\sum_{z\in\mathcal{C}\left(0^{\ell_0},\cdots,(d-1)^{\ell_{d-2}}\right)}\ket{z}
			\end{align}
		\end{subequations}
		with \(\mathcal{C}\left(0^\ell_0,\cdots,(d-1)^{\ell_{d-1}}\right)\) being the equivalence class of \(\left(0^{\ell_0},\cdots,(d-1)^{\ell_{d-1}}\right)\), with the equivalence relation testing whether two strings are equal up to a permutation. It will be useful to mention that
		\begin{equation}
			\label{eq:sizeequicperm}
			\left|(\mathcal{C}\left(0^{\ell_0},\cdots,(d-1)^{\ell_{d-1}}\right)\right|=\frac{t!}{\prod\limits_{j=0}^{d-1}\ell_i!}\,.
		\end{equation}
		Thus, the following vector \(\ket{\lambda}\) is an eigenvector of \(\hat{\rho}_{\mathcal{R}_d^t}\)
		\begin{equation}
			\sum_{k_{d-1}\leqslant\cdots\leqslant k_1=0}^{k_0}\left(\prod_{j=0}^{d-2}\binom{k_j}{k_{j+1}}\right)\sum_{\ell_0+\cdots+\ell_{d-1}=t-2k_0}\alpha_{\ell_0,\cdots,\ell_{d-1}}\ket{\varphi_{\ell_0+2k_0-2k_1,\cdots,\ell_{d-2}+2k_{d-2}-2k_{d-1},\ell_{d-1}+2k_{d-1}}}\,.
		\end{equation}
		To get the associated eigenvalue, it's thus enough to multiply \(\hat{\rho}_{\mathcal{R}_d^t}\) by this vector, which gives us
		\begin{subequations}
			\begin{equation}
                    \begin{split}
    				\hat{\rho}_{\mathcal{R}_d^t}\ket{\lambda}=\Biggl[\sum_{x\sim y}&\prod_{\ell=0}^{d-1}\left[n_\ell(x,y)-1\right]!!\ketbra{x}{y}\sum_{k_{d-1}\leqslant\cdots\leqslant k_1=0}^{k_0}\left(\prod_{j=0}^{d-2}\binom{k_j}{k_{j+1}}\right)\\&\sum_{\ell_0+\cdots+\ell_{d-1}=t-2k_0}\alpha_{\ell_0,\cdots,\ell_{d-1}}\ket{\varphi_{\ell_0+2k_0-2k_1,\cdots,\ell_{d-2}+2k_{d-2}-2k_{d-1},\ell_{d-1}+2k_{d-1}}}\Biggl]
                    \end{split}
			\end{equation}
			The inner product between \(\ket{y}\) and \(\ket{\varphi_{\ell_0+2k_0-2k_1,\cdots,\ell_{d-2}+2k_{d-2}-2k_{d-1},\ell_{d-1}+2k_{d-1}}}\) keeps only the vectors $\ket{y}$ that are in the right equivalence class. Summing over this equivalence class and multiplying by the scalar product then gives a \(t!\) factor, as per \Cref{eq:phimult,eq:sizeequicperm}.
			
			We also have to consider the \(x\) that are in relation with such \(y\). In order to do so, the number of \(i\) in such a \(x\) must have the same parity as that of \(\ell_i\). Let \(b_i=\ell_i\mod2\) be the parity of \(\ell_i\) for a given \(i\). We can describe such an \(x\) with a tuple \(m_0,\cdots,m_{d-1}\), with \(2m_i+b_i\) being the number of \(i\) in \(x\). Summing over all the possible \(x\), we see that we sum over the same basis states as in \(\ket{\varphi_{2m_0+b_0,\cdots,2m_{d-1}+b_{d-1}}}\). All in all, this gives
			\begin{equation}
                    \begin{split}
                        \hat{\rho}_{\mathcal{R}_d^t}\ket{\lambda}=\Biggl[&t!\sum_{\ell_0+\cdots+\ell_{d-1}=t-2k_0}\alpha_{\ell_0,\cdots,\ell_{d-1}}\sum_{\substack{k_{d-1}\leqslant\cdots\leqslant k_1=0\\m_0+\cdots+m_{d-1}=\frac{t-p(\ell)}{2}\\m_i\geqslant0}}^{k_0}\prod_{j=0}^{d-2}\binom{k_j}{k_{j+1}}\\&\frac{\left(\ell_{d-1}+2k_{d-1}+2m_{d-1}+b_{d-1}-1\right)!!\prod\limits_{j=0}^{d-2}\left(\ell_j+2k_j-2k_{j+1}+2m_j+b_j-1\right)!!}{\prod\limits_{j=0}^{d-1}\left(2m_j+b_j\right)!}\\&\ket{\varphi_{2m_0+b_0,\cdots,2m_{d-1}+b_{d-1}}}\Biggr]
                    \end{split}
                \end{equation}
		\end{subequations}
		with \(p(\ell)=\sum_ib_i\) being the number of odd elements in \(\ell\). Now, since \(\ket{\lambda}\) is an eigenvector of \(\hat{\rho}_{\mathcal{R}_d^t}\), we have for all \(\ket{\psi}\) that isn't orthogonal with \(\ket{\lambda}\) that
		\begin{equation}
			\lambda=\frac{\braopket{\psi}{\hat{\rho}_{\mathcal{R}_d^t}}{\lambda}}{\inner{\psi}{\lambda}}\,.
		\end{equation}
		In particular, we can choose
		\begin{equation}
			\ket{\psi}=\sum_{\substack{b\in\{0, 1\}^{d-1}\\h(b)\leqslant t-2k_0}}\ket{\varphi_{b_0,\cdots,b_{d-2},t-h(b)}}
		\end{equation}
		with \(h\) being the Hamming weight. This forces \(m_{d-1}=\frac{t-p(\ell)}{2}\) and \(m_i=0\) for \(i<d-1\) in the above summation. Note that we then have
		\begin{equation}
			\inner{\psi}{\lambda}=\sum_{\substack{b\in\{0, 1\}^{d-1}\\h(b)\leqslant t-2k_0}}\alpha_{b_0,\cdots,b_{d-2},t-2k_0-h(b)}t!(t-h(b))!
		\end{equation}
		All in all, this gives us:
		\begin{subequations}
			\begin{align}
                    \begin{split}
				\lambda = \Biggl[&\frac{(t!)^2}{\inner{\psi}{\lambda}}\sum_{\ell_0+\cdots+\ell_{d-1}=t-2k_0}\alpha_{\ell_0,\cdots,\ell_{d-1}}\sum_{k_{d-1}\leqslant\cdots\leqslant k_1=0}^{k_0}\prod_{j=0}^{d-2}\binom{k_j}{k_{j+1}}\\&\frac{\left(\ell_{d-1}+2k_{d-1}+t-p(\ell)+b_{d-1}-1\right)!!\prod\limits_{j=0}^{d-2}\left(\ell_j+2k_j-2k_{j+1}+b_j-1\right)!!}{\left(t-p(\ell)+b_{d-1}\right)!}\left(t-p(\ell)+b_{d-1}\right)!\Biggr]
                \end{split}\\
                \begin{split}
				=\Biggl[&\frac{(t!)^2}{\inner{\psi}{\lambda}}\sum_{\ell_0+\cdots+\ell_{d-1}=t-2k_0}\alpha_{\ell_0,\cdots,\ell_{d-1}}\sum_{k_{d-1}\leqslant\cdots\leqslant k_1=0}^{k_0}\left(\prod_{j=0}^{d-2}\binom{k_j}{k_{j+1}}\left(\ell_j+2k_j-2k_{j+1}+b_j-1\right)!!\right)\\&\left(\ell_{d-1}+2k_{d-1}+t-p(\ell)+b_{d-1}-1\right)!!\Biggr]
                \end{split}
			\end{align}
		\end{subequations}
		Note that this relation holds for any \(\alpha\) such that \Cref{eq:belongstoht-2k0} holds.
		
		Let us assume \(d\geqslant3\) for now. For \(i\in \left[\left\lfloor\frac{t}{2}\right\rfloor-k_0+1\right]\), let us take \(\alpha_{0,\cdots,0,2i,t-2k_0-2i} = (-1)^i\binom{t-2k_0}{2i}\) and nil on all other indexes. This satisfies \Cref{eq:belongstoht-2k0}. We have in this case \(\inner{\psi}{\lambda}=(t!)^2\). After successive applications of \Cref{lem:doublefactorialbinomialformula} and \Cref{cor:multinome_double_factorial}, the expression of \(\lambda\) then becomes
		\begin{equation}
			\label{eq:expression_with_unknown_sum}
			\lambda = \frac{\left(2t+d-2\right)!!}{\left(2t-2k_0+d-2\right)!!}\sum_{i=0}^{\left\lfloor\frac{t}{2}\right\rfloor-k_0}(-1)^{i}\binom{t-2k_0}{2i}(2i-1)!!\left(2t-2k_0-2i-1\right)!!\,.
		\end{equation}
		For \(d=2\), the computations read:
		\begin{subequations}
			\begin{align}
				\lambda &= \sum_{i=0}^{\left\lfloor\frac{t}{2}\right\rfloor-k_0}(-1)^{i}\binom{t-2k_0}{2i}\sum_{k_1=0}^{k_0}\binom{k_0}{k_1}\left(2i+2k_0-2k_1-1\right)!!\left(2t-2k_0-2i+2k_1-1\right)!!\\
				&=\sum_{i=0}^{\left\lfloor\frac{t}{2}\right\rfloor-k_0}(-1)^{i}\binom{t-2k_0}{2i}\frac{\left(2t-2k_0-2i-1\right)!!\left(2i-1\right)!!}{\left(2t-2k_0\right)!!}\left(2t\right)!!
			\end{align}
		\end{subequations}
		via \Cref{lem:doublefactorialbinomialformula}, which agrees with \Cref{eq:expression_with_unknown_sum} for \(d=2\). Thus, we now only have to evaluate the rightmost sum. Note that this sum is independent of \(d\). Thus, if we can compute \(\lambda\) for \(t\) and \(k_0\) being arbitrary large, then we would have computed the value of this sum, which would conclude the proof.
		
		To do so, let us take \(d=t-2k_0\) and take \(\alpha_{1,\cdots,1}=1\) and be nil otherwise. It once again satisfies \Cref{eq:belongstoht-2k0}, and we have in that case \(\inner{\psi}{\lambda}=t!\left(2k_0+1\right)!\). We thus have
		\begin{subequations}
			\begin{align}
				\lambda &=\frac{t!}{\left(2k_0+1\right)!}\sum_{k_{d-1}\leqslant\cdots\leqslant k_{1}=0}^{k_{0}}\left(\prod_{j=0}^{d-2}\binom{k_j}{k_{j+1}}\left(2k_j-2k_{j+1}+1\right)!!\right)\left(2k_{d-1}+2k_0+1\right)!!\\
				&=\frac{t!\left(4k_0+1+3(d-1)\right)!!\left(2k_0+1\right)!!}{\left(2k_0+1\right)!\left(2k_0+1+3(d-1)\right)!!}\\
				&= \frac{t!\left(4k_0+1+3\left(t-2k_0-1\right)\right)!!}{\left(2k_0\right)!!\left(2k_0+1+3\left(t-2k_0-1\right)\right)!!}\\
				&= \frac{t!\left(3t-2k_0-2\right)!!}{\left(2k_0\right)!!\left(3t-4k_0-2\right)!!}
			\end{align}
		\end{subequations}
		where we used \Cref{cor:multinome_double_factorial} on the second line. But now, we know from \Cref{eq:expression_with_unknown_sum} that we have
		\begin{subequations}
			\begin{align}
                \begin{split}
    				&\frac{\left(3t-2k_0-2\right)!!}{\left(3t-4k_0-2\right)!!}\sum_{i=0}^{\left\lfloor\frac{t}{2}\right\rfloor-k_0}(-1)^{i}\binom{t-2k_0}{2i}(2i-1)!!\left(2t-2k_0-2i-1\right)!!\\
                    ={}&\frac{t!\left(3t-2k_0-2\right)!!}{\left(2k_0\right)!!\left(3t-4k_0-2\right)!!}
                \end{split}\\
				\iff&\sum_{i=0}^{\left\lfloor\frac{t}{2}\right\rfloor-k_0}(-1)^{i}\binom{t-2k_0}{2i}(2i-1)!!\left(2t-2k_0-2i-1\right)!!=\frac{t!}{\left(2k_0\right)!!}\,.
			\end{align}
		\end{subequations}
		This finally allows us to conclude that
		\begin{equation}
			\lambda = \frac{t!\left(2t+d-2\right)!!}{\left(2k_0\right)!!\left(2t-2k_0+d-2\right)!!}\,.
		\end{equation}
	\end{proof}
    \subsection{Generating the eigenvectors of \texorpdfstring{\(\rrhodt\)}{the density matrix resulting from sampling t copies of a d-dimensional real-valued quantum state according to the Haar measure}}
    \label{app:example_spetrctal_decomposition}
        In this section, we will describe a method to generate the eigenvectors of \(\rrhodt\) for arbitrary \(d\) and \(t\). First of all, let us recall that \(\rrhodt\) is an operator on \(\symsubspacedt{R}\), and that \(\symsubspacedt{R}\) is isomorphic to the space \(\homogeneousdt\) of homogeneous polynomials of \(d\) variables of degree \(t\), with the isomorphism being described in~\Cref{eq:isomorphism}. We also recall that \(\homogeneousdt\) can be decomposed into
        \begin{equation}
            \homogeneousdt=\harmonicd{t}\oplus q\harmonicd{t-2}\oplus q^2\harmonicd{t-4}\oplus\cdots
        \end{equation}
        where we've denoted
        \begin{equation}
            q\defed\sum_{i=0}^{d-1}X_i^2
        \end{equation}
        and with \(\harmonicd{t}\) being the space of harmonic homogeneous polynomials of \(d\) variables and of degree \(t\). \Cref{thm:eigenvalues} states that \(\rrhodt\) is proportional to the identity on each of these subspaces. In order to generate its eigenvectors, we thus have to generate an orthonormal basis of \(q^k\harmonicd{t-2k}\) and then apply the isomorphism to translate these polynomials into vectors in \(\symsubspacedt{R}\). All in all, in order to generate the eigenvectors of \(\rrhodt\), one can follow~\Cref{alg:eigenvectors}.
        \begin{algorithm}[ht]
        	\caption{Generation of the eigenvectors of \(\rrhodt\)}\label{alg:eigenvectors}
        	\begin{algorithmic}
        		\Require $d\geqslant2$, $t\geqslant1$
        		\State $V \gets \varnothing$
        		\For{$k\in\left[1+\left\lfloor\frac{t}{2}\right\rfloor\right]$}
                    \State $B\gets \texttt{HarmonicBasis[}d\texttt{,}t-2k\texttt{]}$
                    \For{$P\in B$}
                        \State $P\gets \sqrt{\frac{t!(2t-4k+d-2)!!}{(t-2k)!(2k)!!(2t-2k+d-2)!!}}\,q^kP$\Comment{We have to renormalize \(q^kP\)}
                    \EndFor
                    \State $V \gets V\cup B$
                \EndFor
                \State \textbf{return} $\texttt{Isomorphism[}\homogeneousdt\to\symsubspacedt{R}\texttt{,}V\texttt{]}$
        	\end{algorithmic}
        \end{algorithm}
        
        In order to run~\Cref{alg:eigenvectors}, the only unspecified part is how to generate an orthonormal basis of \(\harmonicd{t}\) for arbitrary \(d\) and \(t\). For \(d=2\), such a basis has been given in~\Cref{eq:orthonormal_basis_h2t}. We thus move on to describing an algorithm to generate such a basis in the case \(d\geqslant3\).

        \subsubsection{Presentation of the algorithm}
        First of all, we will need a method to generate a basis of \(\harmonicd{t}\). To that end, we will use the following theorem.
        \begin{theorem}[Adapted from~{\cite[Theorem~5.25]{ABW13}}]
            Let \(d>2\) and \(t\) be two natural numbers. The set
            \begin{equation}
                \left\{K\left[\frac{\partial^{\alpha_0}}{\partial X_0^{\alpha_0}}\cdots\frac{\partial^{\alpha_{d-1}}}{\partial X_{d-1}^{\alpha_{d-1}}}q^{1-\frac{d}{2}}\right]\middle|\alpha_{d-1}\in\{0, 1\}\land \alpha\in\mathbb{N}^d\land \sum_{k=0}^{d-1}\alpha_k=t\right\}
            \end{equation}
            is a basis of \(\harmonicd{t}\), with \(K\) being the Kelvin transform, which is defined as
            \begin{equation}
                K[f](X) \defed q^{1-\frac{d}{2}}f\left(\frac{X_0}{q},\cdots,\frac{X_{d-1}}{q}\right)
            \end{equation}
            and where we've reused the notation
            \begin{equation}
                q\defed\sum_{k=0}^{d-1}X_k^2\,.
            \end{equation}
        \end{theorem}

        Using the Faà di Bruno formula~\cite[Proposition~1]{Har06}, differentiating \(q^{1-\frac{d}{2}}\) with respect to the multi-index \(\alpha\) yields
        \begin{equation}
            \frac{\partial^{\alpha_0}}{\partial X_0^{\alpha_0}}\cdots\frac{\partial^{\alpha_{d-1}}}{\partial X_{d-1}^{\alpha_{d-1}}}q^{1-\frac{d}{2}} = \sum_{\pi\in\Pi}\left(-\frac12\right)^{|\pi|}\frac{(d+2|\pi|-4)!!}{(d-4)!!}q^{-\frac{d}{2}-|\pi|+1}\prod_{B\in\pi}\frac{\partial^{|B|}}{\prod\limits_{j\in B}\partial Y_j}q
        \end{equation}
        where \(\Pi\) is the set of partitions of the set \(\left\{1,\cdots, t\right\}\) and where we've defined \(Y_1=\cdots=Y_{\alpha_0}=X_0\), \(Y_{\alpha_0+1}=\cdots=Y_{\alpha_0+\alpha_1}=X_1\), \(\dots\) Recall that a partition of \(\left\{1,\cdots,t\right\}\) is a set of disjoints, non-empty sets such that their union is \(\left\{1,\cdots,t\right\}\). We want to exhibit the non-zero terms in this expression. Note that any third derivative of \(q\) is nil, and that the only second derivatives of \(q\) that aren't nil are those that differentiate with respect to the same variable twice. The only remaining partitions are thus those that contain no set of more than 2 elements, and each set with 2 elements has its elements belonging to the same subset of \(\left\{1,\cdots,t\right\}\) that corresponds to a given \(\alpha_i\). For instance, if \(\alpha_0=4\), then partitions containing \(\left\{1, 4\right\}\) or \(\left\{1, 2\right\}\) for instance would yield a non-zero term, but \(\left\{1, 5\right\}\) will not, since that would correspond to differentiate \(q\) with respect to \(X_0X_1\).

        One way to see the remaining partitions is to see it as a union over the partitions of the sets corresponding to the \(\alpha_i\), with these partitions having no terms with more than 2 elements. That is, we have
        \begin{equation}
            \frac{\partial^{\alpha_0}}{\partial X_0^{\alpha_0}}\cdots\frac{\partial^{\alpha_{d-1}}}{\partial X_{d-1}^{\alpha_{d-1}}}q^{1-\frac{d}{2}} = \sum_{\pi_0\in\Pi_0}\cdots\sum_{\pi_{d-1}\in\Pi_{d-1}}(-1)^{|\pi|}\frac{\left(d+2|\pi|-4\right)!!}{(d-4)!!}q^{-\frac{d}{2}-|\pi|+1}\prod_{i=0}^{d-1}X_i^{\alpha_i-2k_i}
        \end{equation}
        with \(\Pi_i\) being the set of partitions of \(\left\{1,\cdots,\alpha_i\right\}\), \(\pi=\bigcup\limits_{i=0}^{d-1}\pi_i\) and \(k_i\) being the number of sets of size \(2\) in \(\pi_i\) (that is, \(k_i=\alpha_i-\left|\pi_i\right|\)). Applying the Kelvin transform to this term then yields
        \begin{equation}
            K\left[\frac{\partial^{\alpha_0}}{\partial X_0^{\alpha_0}}\cdots\frac{\partial^{\alpha_{d-1}}}{\partial X_{d-1}^{\alpha_{d-1}}}q^{1-\frac{d}{2}}\right]=\sum_{\pi_0\in\Pi_0}\cdots\sum_{\pi_{d-1}\in\Pi_{d-1}}(-1)^{|\pi|}\frac{\left(d+2|\pi|-4\right)!!}{(d-4)!!}q^{t-|\pi|}\prod_{i=0}^{d-1}X_i^{\alpha_i-2k_i}\,.
        \end{equation}
        Since \(\alpha_i-2k_i=2\left|\pi_i\right|-\alpha_i\), note that the summand only depends on the size of the partitions. Note that for a fixed \(k_i\), the number of partitions of \(\left\{1,\cdots,\alpha_i\right\}\) having \(k_i\) elements of size \(2\) and no element of size larger than \(2\) is \(\frac{\alpha_i!}{\left(\alpha_i-2k_i\right)!\left(2k_i\right)!!}\). As such, we can regroup the terms with identical monomials to get
        \begin{equation}
            \label{eq:formula_basis_harmonic_d_t}
            \begin{split}
                &K\left[\frac{\partial^{\alpha_0}}{\partial X_0^{\alpha_0}}\cdots\frac{\partial^{\alpha_{d-1}}}{\partial X_{d-1}^{\alpha_{d-1}}}q^{1-\frac{d}{2}}\right]\\
                ={}&\sum_{k_0=0}^{\left\lfloor\frac{\alpha_0}{2}\right\rfloor}\cdots\sum_{k_{d-1}=0}^{\left\lfloor\frac{\alpha_{d-1}}{2}\right\rfloor}(-1)^{t-\sum\limits_{i=0}^{d-1}k_i}\frac{\left(d+2t-2\sum\limits_{i=0}^{d-1}k_i-4\right)!!}{(d-4)!!}q^{\sum\limits_{i=0}^{d-1}k_i}\prod_{i=0}^{d-1}\frac{\alpha_i!}{\left(\alpha_i-2k_i\right)!\left(2k_i\right)!!}X_i^{\alpha_i-2k_i}\,.
            \end{split}
        \end{equation}
        We are now able to write an algorithm to generate an orthonormal basis of \(\harmonicd{t}\). This algorithm is presented on~\Cref{alg:harmonic_basis}.
        \begin{algorithm}[ht]
        	\caption{Generation of an orthonormal basis of \(\harmonicd{t}\)}\label{alg:harmonic_basis}
        	\begin{algorithmic}
        		\Require $d\geqslant3$, $t\geqslant1$
        		\State $B \gets \varnothing$
        		\For{$\alpha\in\left\{\beta\in\mathbb{N}^d\middle|\beta_{d-1}\in\{0, 1\}\land\sum\limits_{k=0}^{d-1}\beta_k=t\right\}$}\Comment{We iterate over the valid multi-indices}
                    \State $P\gets 0$
                    \For{$\left(k_0,\cdots,k_{d-2}\right)\in\left[\left\lfloor\frac{\alpha_0}{2}\right\rfloor+1\right]\times\cdots\times\left[\left\lfloor\frac{\alpha_{d-2}}{2}\right\rfloor+1\right]$}
                        \State $k_{d-1}\gets0$\Comment{Since $\alpha_{d-1}\in\{0, 1\}$, we necessarily have \(k_{d-1}=0\)}
                        \State $K\gets\sum\limits_{i=0}^{d-1}k_i$
                        \State $P\gets P + (-1)^{t-K}\frac{\left(d+2t-2K-4\right)!!}{(d-4)!!}q^{K}\prod\limits_{i=0}^{d-1}\frac{\alpha_i!}{\left(\alpha_i-2k_i\right)!\left(2k_i\right)!!}X_i^{\alpha_i-2k_i}$
                    \EndFor
                    \State $B \gets B\cup \{P\}$
                \EndFor
                \State \textbf{return} $\text{Orthonormalize}[B]$
        	\end{algorithmic}
        \end{algorithm}

        Let us now study the complexity of this algorithm. Computing \(K\) is done in \(\bigo{d}\) time. The double factorials can be computed in \(\bigo{1}\) time via their approximation using the \(\Gamma\) function. The product is computed in \(\bigo{d}\) time and \(q^K\) can be computed in at most \(d^2\log_2(t)\) time using fast exponentiation. As such, the entire body of the inner loop runs in \(\bigo{d^2\ln(t)}\) time. For a given multi-index \(\alpha\), this means that the inner loop runs in
        \begin{equation}
            \bigo{d^2\ln(t)\prod_{i=0}^{d-1}\left(\left\lfloor\frac{\alpha_i}{2}\right\rfloor+1\right)}
        \end{equation}
        time. We have
        \begin{equation}
            \prod_{i=0}^{d-1}\left(\left\lfloor\frac{\alpha_i}{2}\right\rfloor+1\right)\leqslant\prod_{i=0}^{d-1}\left(\frac{\alpha_i}{2}+1\right)\leqslant\left(\frac{\frac{t}{2}+d-1}{d-1}\right)^{d-1}=\mathrm{e}^{(d-1)\ln\left(1+\frac{t}{2(d-1)}\right)}\leqslant\mathrm{e}^{\frac{t}{2}}
        \end{equation}
        where the first inequality comes from \(\lfloor x\rfloor\leqslant x\), the second is obtained via the AM-GM inequality, and the last one uses \(\ln(1+x)\leqslant x\). As such, the inner loop runs in at most \(\bigo{d^2\ln(t)\mathrm{e}^{\frac{t}{2}}}\) time. Summing over the valid multi-indices, we get that generating the basis is done in
        \begin{equation}
            \bigo{\dim\left(\harmonicd{t}\right)d^2\ln(t)\mathrm{e}^{\frac{t}{2}}}
        \end{equation}
        time.

        The Gram-Schmidt orthonormalization is also costly: the inner product takes linear time in the sizes of the polynomials (which we've upper-bounded by \(\mathrm{e}^{\frac{t}{2}}\)) and linear time in the sizes of the monomials, and \(\bigo{\dim\left(\harmonicd{t}\right)^2}\) operations must be performed in total. Thus, the Gram-Schmidt process runs in at most
        \begin{equation}
            \bigo{\dim\left(\harmonicd{t}\right)^2d\mathrm{e}^{\frac{t}{2}}}
        \end{equation}
        time, which supersedes the previous loop and thus gives the final complexity of the algorithm.

        We note that this method for generating orthonormal bases of \(\harmonicd{t}\) is implemented in the \texttt{HFT.m} Mathematica package in the \texttt{basisH} function\footnote{It's not clear that the implementation in this package will have the same complexity as the one presented here, since our method merges the differentiation and the Kelvin transform steps, instead of doing them sequentially.}~\cite{HFT}.

        \subsubsection{First example: computing the spectral decomposition of \texorpdfstring{\(\rrho{3}{3}\)}{the density matrix associated with sampling 3 copies of a 3-dimensional quantum state according to the Orthogonal Haar measure} using our algorithm}
        Let us see how to compute the eigenvectors of \(\rrho{3}{3}\) using this method. We start with the first eigenspace, which is \(\harmonic{3}{3}\). We thus want to use~\Cref{alg:harmonic_basis} in order to generate an orthonormal basis of \(\harmonic{3}{3}\).

        We start with \(\alpha=(3, 0, 0)\). The inner loop will go through \((0, 0)\) and then \((1, 0)\), which gives us the following polynomial:
        \begin{equation}
            P_{(3, 0, 0)} = -15\,X_0^3+9\,X_0q = -6\,X_0^3+9\,X_0^2X_1+9\,X_0^2X_2\,.
        \end{equation}

        To illustrate, if we now pick \(\alpha=(1, 1, 1)\), then the inner loop performs a single iteration at \((0, 0)\) and gives
        \begin{equation}
            P_{(1, 1, 1)} = -15\,X_0X_1X_2\,.
        \end{equation}
        
        By iterating this process over the 7 possible multi-indices \(\alpha\), we get the following basis of \(\harmonic{3}{3}\)
        \begin{equation}
            \begin{Bmatrix}
                -6\,X_0^3+9\,X_0X_1^2+9\,X_0X_2^2,\\
                9\,X_0^2X_1-6\,X_1^3+9\,X_1X_2^2,\\
                 -12\,X_0^2X_1+3\,X_1^3+3\,X_1X_2^2,\\
                 -12\,X_0^2X_2+3\,X_1^2X_2+3\,X_2^3,\\
                 3\,X_0^3-12\,X_0X_1^2+3\,X_0X_2^2,\\
                 -12\,X_1^2X_2+3\,X_0^2X_2+3\,X_2^3,\\
                 -15\,X_0X_1X_2
            \end{Bmatrix}\,.
        \end{equation}
        We can then orthonormalize this basis with respect to the Bombieri inner product using the Gram-Schmidt process, which gives us the following orthonormal basis
        \begin{equation}
            \begin{Bmatrix}
                \frac{2\,X_0^3-3\,X_0X_1^2-3\,X_0X_2^2}{\sqrt{10}},\\
                \frac{3\,X_0^2X_1-2\,X_1^3+3\,X_1X_2^2}{\sqrt{10}},\\
                 \frac{3\,X_0^2X_1-3\,X_1X_2^2}{\sqrt{6}},\\
                 \frac{12\,X_0^2X_2-3\,X_1^2X_2-3\,X_2^3}{\sqrt{60}},\\
                 \frac{3\,X_0X_1^2-3\,X_0X_2^2}{\sqrt{6}}\\
                 \frac{3\,X_1^2X_2-X_2^3}{2},\\
                 \sqrt{6}\,X_0X_1X_2
            \end{Bmatrix}\,.
        \end{equation}
        Applying the isomorphism described in~\Cref{eq:isomorphism}, we have that the \(7\)-dimensional eigenspace of \(\rrho{3}{3}\) associated to the eigenvalue \(\frac{2}{35}\) is generated by
        \begin{equation}
            \begin{Bmatrix}
                \frac{2\,\ket{000}-\ket{011}-\ket{101}-\ket{110}-\ket{022}-\ket{202}-\ket{220}}{\sqrt{10}},\\
                \frac{\ket{001}+\ket{010}+\ket{100}-2\,\ket{111}+\ket{122}+\ket{212}+\ket{221}}{\sqrt{10}},\\
                 \frac{\ket{001}+\ket{010}+\ket{100}-\ket{122}-\ket{212}-\ket{221}}{\sqrt{6}},\\
                 \frac{4\,\ket{002}+4\,\ket{020}+4\,\ket{200}-\ket{112}-\ket{121}-\ket{211}-3\,\ket{222}}{\sqrt{60}},\\
                 \frac{\ket{011}+\ket{101}+\ket{110}-\ket{022}-\ket{202}-\ket{220}}{\sqrt{6}},\\
                 \frac{\ket{112}+\ket{121}+\ket{211}-\ket{222}}{2},\\
                 \frac{\ket{012}+\ket{021}+\ket{102}+\ket{120}+\ket{201}+\ket{210}}{\sqrt{6}}
            \end{Bmatrix}\,.
        \end{equation}
        Similarly, a basis of \(q\harmonic{3}{1}\) is simply \(\left\{qX_0,qX_1,qX_2\right\}\). By renormalizing it and applying the isomorphism to it, we get that the \(3\)-dimensional eigenspace of \(\rrho{3}{3}\) associated to the eigenvalue \(\frac{1}{5}\) is generated by
        \begin{equation}
            \begin{Bmatrix}
                \frac{3\,\ket{000}+\ket{011}+\ket{101}+\ket{110}+\ket{022}+\ket{202}+\ket{220}}{\sqrt{15}},\\
            \frac{3\,\ket{111}+\ket{100}+\ket{010}+\ket{001}+\ket{122}+\ket{212}+\ket{221}}{\sqrt{15}},\\
            \frac{3\,\ket{222}+\ket{200}+\ket{020}+\ket{002}+\ket{211}+\ket{212}+\ket{112}}{\sqrt{15}}
            \end{Bmatrix}\,.
        \end{equation}

        \subsubsection{Second example: computing the spectral decomposition of \texorpdfstring{\(\rrho{3}{2}\)}{the density matrix associated with sampling 3 copies of a 3-dimensional quantum state according to the Orthogonal Haar measure} using an orthogonal basis of spherical harmonics}
        Another technique we could have used is starting from an orthogonal\footnote{Note that there are several conventions for the inner product used to define orthogonality for spherical harmonics, but they are all proportional to the Bombieri inner product. This preserves orthogonality, but one may have to renormalize.} basis of spherical harmonics. For instance, such a basis for \(d=3\) and \(t=2\) is given by
        \begin{equation}
            \begin{Bmatrix}
                \sin^2(\theta)\mathrm{e}^{-2\mathrm{i}\varphi},\\
                \sin(\theta)\cos(\theta)\mathrm{e}^{-\mathrm{i}\varphi},\\
                3\,\cos^2(\theta)-1,\\
                \sin(\theta)\cos(\theta)\mathrm{e}^{\mathrm{i}\varphi},\\
                \sin^2(\theta)\mathrm{e}^{2\mathrm{i}\varphi}
            \end{Bmatrix}\,.
        \end{equation}
        Spherical harmonics are usually expressed in terms of spherical coordinates. We thus want to translate them to cartesian coordinates. We thus take
        \begin{equation}
            \begin{array}{ll}
            \cos(\varphi) = \frac{X_0}{\sqrt{X_0^2+X_1^2}}&\sin(\varphi) = \frac{X_1}{\sqrt{X_0^2+X_1^2}}\\
            \cos(\theta) = \frac{X_2}{\sqrt{X_0^2+X_1^2+X_2^2}}&\sin(\theta)=\sqrt{\frac{X_0^2+X_1^2}{X_0^2+X_1^2+X_2^2}}
            \end{array}
        \end{equation}
        which gives us
        \begin{equation}
            \begin{Bmatrix}
                \frac{X_0^2-X_1^2-2\mathrm{i}\,X_0X_1}{X_0^2+X_1^2+X_2^2},\\
                \frac{X_2\left(X_0-\mathrm{i}\,X_1\right)}{X_0^2+X_1^2+X_2^2},\\
                \frac{2\,X_2^2-X_0^2-X_1^2}{X_0^2+X_1^2+X_2^2},\\
                \frac{X_2\left(X_0+\mathrm{i}\,X_1\right)}{X_0^2+X_1^2+X_2^2},\\
                \frac{X_0^2-X_1^2+2\mathrm{i}\,X_0X_1}{X_0^2+X_1^2+X_2^2}
            \end{Bmatrix}\,.
        \end{equation}
        Now, if \(\left\{Y_i\right\}_i\) is an orthogonal basis of spherical harmonics, then \(\left\{q^{\frac{t}{2}}Y_i\left(\frac{X}{\sqrt{q}}\right)\right\}_i\) is an orthogonal basis of \(\harmonicd{t}\)~\cite{DX13}. This gives us
        \begin{equation}
            \begin{Bmatrix}
                X_0^2-X_1^2-2\mathrm{i}\,X_0X_1,\\
                X_2\left(X_0-\mathrm{i}\,X_1\right),\\
                2\,X_2^2-X_0^2-X_1^2,\\
                X_2\left(X_0+\mathrm{i}\,X_1\right),\\
                X_0^2-X_1^2+2\mathrm{i}\,X_0X_1
            \end{Bmatrix}\,.
        \end{equation}
        Note that this is an orthogonal basis of \(\harmonic{3}{2}\) viewed as a \(\mathbb{C}\)-vector space. Renormalizing then gives us
        \begin{equation}
            \begin{Bmatrix}
                \frac{X_0^2-X_1^2-2\mathrm{i}\,X_0X_1}{2},\\
                X_0X_2-\mathrm{i}\,X_1X_2,\\
                \frac{2\,X_2^2-X_0^2-X_1^2}{\sqrt{6}},\\
                X_0X_2+\mathrm{i}\,X_1X_2,\\
                \frac{X_0^2-X_1^2+2\mathrm{i}\,X_0X_1}{2}
            \end{Bmatrix}\,.
        \end{equation}
        We can then apply the isomorphism to get the set of vectors that span the first eigenspace of \(\rrho{3}{2}\), that is
        \begin{equation}
            \begin{Bmatrix}
                \frac{\ket{00}-\ket{11}-\mathrm{i}\,\ket{01}-\mathrm{i}\,\ket{10}}{2},\\
                \frac{\ket{02}+\ket{20}-\mathrm{i}\,\ket{12}-\mathrm{i}\,\ket{21}}{2},\\
                \frac{2\,\ket{22}-\ket{00}-\ket{11}}{\sqrt{6}},\\
                \frac{\ket{02}+\ket{20}+\mathrm{i}\,\ket{12}+\mathrm{i}\,\ket{21}}{2},\\
                \frac{\ket{00}-\ket{11}+\mathrm{i}\,\ket{01}+\mathrm{i}\,\ket{10}}{2}
            \end{Bmatrix}\,.
        \end{equation}
        In general, one can use the orthogonal basis for spherical harmonics derived in~\cite[Theorem~5.1]{DX13} and transform it into a basis of \(\harmonicd{t}\) using the aforementioned transformation.
    
	\subsection{Proof of Corollary~\ref{cor:trace_distance}}
	\label{app:trace_distance_proof}
        Using the expression for the eigenvalues of \(\rrhodt\) along with their multiplicity, it follows that the trace distance between \(\rrhodt\) and \(\crho\) can be expressed as
        \begin{equation}
            \frac12\left\|\rrhodt-\crho\right\|_1=\sum_k\alpha_k\left|\lambda_k-\binom{d+t-1}{t}^{-1}\right|\,.
        \end{equation}
        The following Corollary shows that under some constraint on \(d\) and \(t\), the trace distance admits a simpler form.
        \tracedistance*
	\begin{proof}
		Let us denote \(\lambda_k\) the \(k\)-th lowest eigenvalue of \(\rrhodt\):
		\begin{equation}
			\lambda_k = \frac{t!(d-2)!!}{(2k)!!(d+2t-2k-2)!!}
		\end{equation}
		and let us denote \(\alpha_k\) its associated multiplicity. Since \(\crho\) can be diagonalized in the same basis as \(\rrhodt\) and has a single non-zero eigenvalue, we have
		\begin{equation}
			\frac12\left\|\rrhodt-\crho\right\|_1=\sum_{k=0}^{K-1}\alpha_k\left(\frac{1}{\binom{d+t-1}{t}}-\lambda_k\right)
		\end{equation}
		with \(K\) being the smallest number such that for all \(k\geqslant K\), we have that \(\lambda_k\geqslant\frac{1}{\binom{d+t-1}{t}}\). If \Cref{eq:trace_distance_condition} holds, then it means that \(K=1\). The trace distance is thus equal to
		\begin{subequations}
			\begin{align}
				&\left[\binom{d+t-1}{d-1}-\binom{d+t-3}{d-1}\right]\left[\frac{1}{\binom{d+t-1}{t}}-\frac{t!(d-2)!!}{(d+2t-2)!!}\right]\\
				={}&\binom{d+t-1}{d-1}\left[1-\frac{t(t-1)}{(d+t-1)(d+t-2)}\right]\left[\frac{1}{\binom{d+t-1}{d-1}}-\frac{t!(d-2)!!}{(d+2t-2)!!}\right]\\
				={}&\left[1-\frac{t(t-1)}{(d+t-1)(d+t-2)}\right]\left[1-\frac{t!(d-2)!!(d+t-1)!}{(d-1)!t!(d+2t-2)!!}\right]\\
				={}&\left[1-\frac{t(t-1)}{(d+t-1)(d+t-2)}\right]\left[1-\frac{(d+t-1)!}{(d-1)!!(d+2t-2)!!}\right]
			\end{align}
		\end{subequations}
		which recovers \Cref{eq:trace_distance} as expected.
	\end{proof}
    \subsubsection{Proof of the asymptotics}
        In this subsection, we want to study the asymptotic regime of \Cref{eq:trace_distance_condition}, which we restate here for convenience
        \[
            \frac{t!(d-2)!!}{2(d+2t-4)!!}\geqslant\binom{d+t-1}{t}^{-1}
        \]
        for \(d\) and \(t\) being two natural numbers.

        We will show that under a suitable condition on \(t\) and \(d\), this inequality holds for sufficiently large \(d\). We will also compute the asymptotics of the trace distance between \(\rrhodt\) and \(\crho\).

        All in all, we show the following Lemma.
        \begin{lemma}
            \label{lem:asymptotics}
            For \(d\) and \(t\) being natural numbers, if \(t<\frac{5+\sqrt{9+8d\ln\left(\frac{d}{2}\right)}}{2}\), then
            \begin{equation}
                \label{eq:trace_distance_condition_restated}
                \frac{t!(d-2)!!}{2(d+2t-4)!!}\geqslant\binom{d+t-1}{t}^{-1}
            \end{equation}
            holds for a sufficiently large \(d\). Furthermore, we have under this condition that
            \begin{equation}
                \frac12\left\|\rrhodt-\crho\right\|_1=\left(1+\bigtheta{\frac{t^2}{d^2}}\right)\left(1-\mathrm{e}^{-\frac{t(t-1)}{2d}+\bigtheta{\frac{t^3}{d^2}}}\right)\,.
            \end{equation}
        \end{lemma}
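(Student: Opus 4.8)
The plan is to first eliminate the double factorials appearing in \eqref{eq:trace_distance} and \eqref{eq:trace_distance_condition_restated}. Using $(d-1)!=(d-1)!!\,(d-2)!!$ together with $(d+2t-2)!!=(d-2)!!\prod_{j=0}^{t-1}(d+2j)$, one gets the clean identity $(d-1)!!\,(d+2t-2)!!=(d-1)!\prod_{j=0}^{t-1}(d+2j)$. Hence the second factor of the closed form of \Cref{cor:trace_distance} is exactly $1-R$, where
\begin{equation}
    R\defed\frac{(d+t-1)!}{(d-1)!\prod_{j=0}^{t-1}(d+2j)}=\prod_{j=1}^{t-1}\frac{d+j}{d+2j}\,,
\end{equation}
and the same identities recast the constraint \eqref{eq:trace_distance_condition_restated} as the equivalent inequality $(d+2t-2)\,R\geqslant2$, i.e.\ $\ln R+\ln\frac{d+2t-2}{2}\geqslant0$.

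The core computation is an asymptotic expansion of $\ln R=-\sum_{j=1}^{t-1}\ln\left(1+\frac{j}{d+j}\right)$. The hypothesis $t<\tfrac{5+\sqrt{9+8d\ln(d/2)}}{2}$ unpacks, upon squaring (the cases $t\leqslant2$ being trivial), to the equivalent statement $(t-1)(t-4)<2d\ln(d/2)$, which already forces $t=\bigo{\sqrt{d\ln d}}=\smallo{d}$ for large $d$; thus $j/d\to0$ uniformly for $1\leqslant j\leqslant t-1$ and a Taylor expansion is legitimate. Expanding $\ln(1+x)$ and $\frac{j}{d+j}$ to third order and summing $\sum j$, $\sum j^2$, $\sum j^3$ gives
\begin{equation}
    \ln R=-\frac{t(t-1)}{2d}+\frac{(t-1)t(2t-1)}{4d^2}+\bigo{\frac{t^4}{d^3}}=-\frac{t(t-1)}{2d}+\bigtheta{\frac{t^3}{d^2}}\,,
\end{equation}
where the $d^{-2}$ correction $\frac{(t-1)t(2t-1)}{4d^2}$ is nonnegative, equals $\frac{t^3}{2d^2}(1+\smallo{1})$ when $t\to\infty$, and dominates the $\bigo{t^4/d^3}$ remainder because $t=\smallo{d}$. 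This already yields the second statement of the lemma: the first factor of \eqref{eq:trace_distance} is $1-\frac{t(t-1)}{(d+t-1)(d+t-2)}=1+\bigtheta{\frac{t^2}{d^2}}$ (expand $(d+t-1)(d+t-2)=d^2(1+\bigo{t/d})$), the second is $1-R=1-\mathrm{e}^{-t(t-1)/(2d)+\bigtheta{t^3/d^2}}$, and multiplying the two proves the claimed form.

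For the first statement I would substitute the expansion of $\ln R$ into $\ln R+\ln\frac{d+2t-2}{2}\geqslant0$ and use $\ln\frac{d+2t-2}{2}=\ln(d/2)+\frac{2(t-1)}{d}-\bigo{t^2/d^2}$. The $1/d$-order terms combine, since $-\frac{t(t-1)}{2d}+\frac{2(t-1)}{d}=-\frac{(t-1)(t-4)}{2d}$, so the quantity to be shown nonnegative is
\begin{equation}
    \ln(d/2)-\frac{(t-1)(t-4)}{2d}+\frac{(t-1)t(2t-1)}{4d^2}-\bigo{\frac{t^2}{d^2}+\frac{t^4}{d^3}}\,.
\end{equation}
The hypothesis says precisely that $\ln(d/2)-\frac{(t-1)(t-4)}{2d}>0$. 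I would then split according to a fixed constant $T_0$ depending only on the implicit constants in the error terms: if $t\leqslant T_0$, then $\frac{(t-1)(t-4)}{2d}=\bigo{1/d}$ and the diverging term $\ln(d/2)$ swamps the $\bigo{1/d^2}$ corrections; if $t>T_0$, then $t\to\infty$ and the nonnegative term $\frac{(t-1)t(2t-1)}{4d^2}\geqslant\frac{t^3}{8d^2}$ dominates both $\bigo{t^2/d^2}$ and $\bigo{t^4/d^3}$ (using again $t=\smallo{d}$), so the displayed quantity is at least $\ln(d/2)-\frac{(t-1)(t-4)}{2d}>0$. In either regime the quantity is nonnegative for $d$ large enough, which is \eqref{eq:trace_distance_condition_restated}.

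The main obstacle is this last case analysis. Because the hypothesis is tight, the margin $\ln(d/2)-\frac{(t-1)(t-4)}{2d}$ may tend to $0$, so one cannot simply discard the $\bigtheta{t^3/d^2}$ and $\bigo{t^2/d^2}$ terms: one must check that the $d^{-2}$ coefficient of $\ln R$ has the right sign and beats the error terms uniformly as $t$ approaches the boundary. Retaining the sign of the cubic contribution to $\ln R$, rather than absorbing it into an $\bigo{\cdot}$, is what makes the argument close; everything else is routine bookkeeping with factorial and Taylor estimates.
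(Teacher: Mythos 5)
Your proposal is correct and takes a genuinely different route from the paper. The paper establishes the inequality by applying Stirling's approximation separately to $(d+t-1)!$, $(d-1)!!$ and $(d+2t-4)!!$, handling the parity of $d$ by cases and collecting terms. You instead cancel the double factorials algebraically via $(d-1)!!\,(d+2t-2)!!=(d-1)!\prod_{j=0}^{t-1}(d+2j)$, reducing the key ratio to the finite product $R=\prod_{j=1}^{t-1}\frac{d+j}{d+2j}$, whose logarithm is a clean sum of $-\ln\left(1+\frac{j}{d+j}\right)$. Both routes reach the same leading-order expansion $\ln R+\ln\frac{d+2t-2}{2}=\ln(d/2)-\frac{(t-1)(t-4)}{2d}+\bigtheta{t^3/d^2}$, but yours is shorter, avoids Stirling altogether, and sidesteps the even/odd casework.

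More importantly, you have spotted and repaired a genuine gap in the paper's proof. After arriving at the expansion, the paper asserts that for $d$ sufficiently large, $\ln(d/2)-\frac{(t-1)(t-4)}{2d}+\smallo{1}\geqslant0$ holds ``whenever'' $\ln(d/2)-\frac{(t-1)(t-4)}{2d}>0$. But the $\smallo{1}$ term is a priori of unknown sign, and the strict inequality in the hypothesis allows the margin $\ln(d/2)-\frac{(t-1)(t-4)}{2d}$ to shrink to zero together with the $\bigtheta{t^3/d^2}$ correction as $t$ approaches the boundary $\approx\sqrt{2d\ln(d/2)}$ --- precisely the regime of interest in \Cref{eq:trace-distance-converges-to-non-zero}. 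Your argument closes this: by retaining the explicit coefficient $\frac{(t-1)t(2t-1)}{4d^2}$ in the expansion of $\ln R$, you show that the dominant $d^{-2}$ correction is nonnegative and, for $t\to\infty$, of the same order $\bigtheta{t^3/d^2}$ as the error terms it must absorb; together with the regime $t\leqslant T_0$ where $\ln(d/2)$ trivially dominates, the $T_0$-split makes ``sufficiently large $d$'' uniform over the admissible range of $t$. This sign check is precisely what the paper's argument is missing, and without it the paper's conclusion does not follow from what precedes it.

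Your treatment of the asymptotic form of $\tfrac12\|\rrhodt-\crho\|_1$ itself is essentially the same as the paper's --- expand the first factor of \eqref{eq:trace_distance} and exponentiate the expansion of $\ln R$ --- so no comment is needed there.
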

        \begin{proof}
            First of all, note that \Cref{eq:trace_distance_condition_restated} is equivalent to
            \begin{equation}
    		\label{eq:equivalent_condition_restated}
    		\ln\left(\frac{(d+t-1)!}{(d-1)!!(d+2t-4)!!}\right)\geqslant\ln(2)\,.
    	\end{equation}
            Let us recall that Stirling's approximation can be written as
		\begin{equation}
			\ln(n!)=\left(n+\frac12\right)\ln(n)-n+\frac12\ln(2\pi)+\frac{1}{12n}+\bigtheta{\frac{1}{n^3}}\,.
		\end{equation}
		We thus have
            \begin{equation}
                  \ln((d+t-1)!)=\left(d+t-\frac12\right)\ln(d+t-1)-d-t+1+\frac12\ln(2\pi)+\frac{1}{12(d+t-1)}+\bigtheta{\frac{1}{(d+t)^3}}\,.
            \end{equation}
            Using the Taylor expansion of \(x\mapsto\frac{1}{1+x}\) and the fact that \(t=\smallo{d}\), we get
            \begin{equation}
                \ln((d+t-1)!) =\left(d+t-\frac12\right)\ln(d+t-1)-d-t+1+\frac12\ln(2\pi)+\frac{1}{12d}+\bigtheta{\frac{t}{d^2}}\,.
            \end{equation}
            We can then use the Taylor expansion of \(x\mapsto\ln(1+x)\) to get
            \begin{equation}
                \begin{split}
    			&\ln((d+t-1)!)\\={}&\left(d+t-\frac12\right)\left[\ln(d)+\frac{t-1}{d}-\frac{(t-1)^2}{2d^2}+\bigtheta{\frac{t^3}{d^3}}\right]-d-t+1+\frac12\ln(2\pi)+\frac{1}{12d}+\bigo{\frac{t}{d^2}}
			\end{split}
            \end{equation}
            which we can simplify to
            \begin{equation}
                \ln((d+t-1)!) = \left(d+t-\frac12\right)\ln(d)-d+\frac12\ln(2\pi)+\frac{t(t-1)}{2d}+\frac{1}{12d}+\bigtheta{\frac{t^3}{d^2}}\,.
            \end{equation}
		Similarly, we can apply Stirling's approximation to the double factorial, which is stated as
		\begin{subequations}
			\begin{align}
				\ln((2n)!!)&= n\ln(2)+\left(n+\frac12\right)\ln(n)-n+\frac12\ln(2\pi)+\frac{1}{12n}+\bigtheta{\frac{1}{n^3}}\\
				\ln((2n-1)!!) &= n\ln(n)+\left(n+\frac12\right)\ln(2)-n-\frac{1}{24n}+\bigtheta{\frac{1}{n^3}}\,.
			\end{align}
		\end{subequations}
		We thus have if \(d\) is even
            \begin{equation}
                \ln((d+2t-4)!!) + \ln((d-1)!!) = \ln\left(\left(2\left(\frac{d}{2}+t-2\right)\right)!!\right) + \ln\left(\left(2\frac{d}{2}-1\right)!!\right)\,.
            \end{equation}
            We can now use Stirling's approximation and gather the terms from these expressions to get
            \begin{equation}
                \begin{split}
					&\ln((d+2t-4)!!) + \ln((d-1)!!)\\={}& \left(\frac{d}{2}+t-\frac32\right)\ln\left(d+2t-4\right)-d-t+2+\frac12\ln(2\pi)+\frac{1}{6\left(d+2t-4\right)}+{}\\&\frac{d}{2}\ln\left(d\right)-\frac{1}{12d}+\bigtheta{\frac{1}{d^3}}\,.
				\end{split}
            \end{equation}
            We can then use the Taylor expansions of \(x\mapsto\ln(1+x)\) and \(x\mapsto\frac{1}{1+x}\) to get
            \begin{equation}
                \label{eq:asymptotics_d_even}
                \ln((d+2t-4)!!) + \ln((d-1)!!) = \left(d+t-\frac32\right)\ln(d)-d+\frac12\ln(2\pi)+\frac{t^2-3t+\frac{25}{12}}{d}+\bigtheta{\frac{t^3}{d^2}}\,.
            \end{equation}
		This then yields in this case
		\begin{equation}
			\ln((d+t-1)!)-\ln((d+2t-4)!!)-\ln((d-1)!!) = \ln(d)-\frac{(t-4)(t-1)}{2d}+\bigtheta{\frac{t^3}{d^2}}\,.
		\end{equation}
  
		On the other hand, if \(d\) is odd we have
            \begin{equation}
                \ln((d+2t-4)!!) + \ln((d-1)!!) = \ln\left(\left(2\left(\frac{d-1}{2}+t-1\right)-1\right)!!\right) + \ln\left(\left(2\frac{d-1}{2}\right)!!\right)\,.
            \end{equation}
            We once again use Stirling's approximation and gather the terms to get
            \begin{equation}
                \begin{split}
                    \ln((d+2t-4)!!) + \ln((d-1)!!)
                    ={}&\left(\frac{d-1}{2}+t-1\right)\ln\left(d+2t-3\right)-d-t+2-\frac{1}{12(d+2t-3)}+{}\\&\frac{d}{2}\ln(d-1)+\frac12\ln(2\pi)+\frac{1}{6(d-1)}+\bigtheta{\frac{1}{d^3}}\,.
                \end{split}
            \end{equation}
            Using the Taylor expansion of \(x\mapsto\ln(1-x)\), \(x\mapsto\ln(1+x)\) and \(x\mapsto\frac{1}{1+x}\) then gives us
            \begin{equation}
                \ln((d+2t-4)!!) + \ln((d-1)!!) = \left(d+t-\frac32\right)\ln(d)-d+\frac12\ln(2\pi)+\frac{t^2-3t+\frac{25}{12}}{d}+\bigtheta{\frac{t^3}{d^2}}\,.
            \end{equation}
		Note that this expression is the same as the one in \Cref{eq:asymptotics_d_even}, found in the case where \(d\) is even.
		
		Thus, in all cases, we have
		\begin{equation}
                \label{eq:asymptotics_ln}
			\ln\left(\frac{(d+t-1)!}{(d-1)!!(d+2t-4)!!}\right)=\ln(d)-\frac{(t-4)(t-1)}{2d}+\bigtheta{\frac{t^3}{d^2}}\,.
		\end{equation}
            Thanks to \Cref{eq:asymptotics_ln}, we know that if \(t=\smallo{d^{\frac23}}\), then \Cref{eq:equivalent_condition_restated} can be restated as
            \begin{equation}
                \ln(d)-\frac{(t-1)(t-4)}{2d}+\smallo{1}\geqslant\ln(2)\,.
            \end{equation}
            In particular, if \(d\) is sufficiently large, then this inequality holds whenever
            \begin{subequations}
                \begin{align}
                    &\ln(d)-\frac{(t-1)(t-4)}{2d}>\ln(2)\\
                    \iff&t<\frac{5+\sqrt{9+8d\ln\left(\frac{d}{2}\right)}}{2}\,.
                \end{align}
            \end{subequations}
            
		Furthermore, we know from \Cref{cor:trace_distance} that if this inequality holds, then we have
        \[\frac12\left\|\rrhodt-\crho\right\|_1=\left(1-\frac{t(t-1)}{(d+t-1)(d+t-2)}\right)\left(1-\frac{(d+t-1)!}{(d-1)!!(d+2t-2)!!}\right)\,.\]
        We can then reuse \Cref{eq:asymptotics_ln} to assert that
        \begin{equation}
            \ln\left(\frac{(d+t-1)!}{(d-1)!!(d+2t-2)!!}\right)=\ln(d)-\frac{(t-4)(t-1)}{2d}+\bigtheta{\frac{t^3}{d^2}}-\ln(d+2t-2)\,.
        \end{equation}
        We can then use the Taylor expansion of \(x\mapsto\ln(1+x)\) to get
        \begin{equation}
            \ln\left(\frac{(d+t-1)!}{(d-1)!!(d+2t-2)!!}\right)=\ln(d)-\frac{(t-4)(t-1)}{2d}+\bigo{\frac{t^3}{d^2}}-\ln(d)-\frac{2t-2}{d}+\bigtheta{\frac{t^2}{d^2}}
        \end{equation}
        which can be simplified to
        \begin{equation}
            \ln\left(\frac{(d+t-1)!}{(d-1)!!(d+2t-2)!!}\right) = -\frac{t(t-1)}{2d}+\bigtheta{\frac{t^3}{d^2}}\,.
        \end{equation}
        Finally, applying \(\exp\) to this expression yields
        \begin{equation}
            \frac{(d+t-1)!}{(d-1)!!(d+2t-2)!!} = \mathrm{e}^{-\frac{t(t-1)}{2d}+\bigtheta{\frac{t^3}{d^2}}}\,.
        \end{equation}
        All in all, if \(t<\frac{5+\sqrt{9+8d\ln\left(\frac{d}{2}\right)}}{2}\), then we have
        \begin{equation}
            \frac12\left\|\rrhodt-\crho\right\|_1 = \left(1+\bigtheta{\frac{t^2}{d^2}}\right)\left(1-\mathrm{e}^{-\frac{t(t-1)}{2d}+\bigtheta{\frac{t^3}{d^2}}}\right)\,.
        \end{equation}
        \end{proof}

        \subsection{Exact security analysis of the binary phases ARS}
        \label{app:trace-distance-binary-phases}
        In order to give a more precise asymptotics of \cite{HBK24}'s argument on the number of copies required to test for the imaginarity of a state, we need to know for which regime does the trace distance between the density matrix resulting from sampling \(t\) copies of a binary phase state and the one resulting from sampling \(t\) copies of a Haar random state converges to a given value. In order to do so, we will compute exactly the trace distance between these two density matrices.

        Note that our result only holds whenever \(t\leqslant d\). Fortunately, as we'll see, this covers every interesting regime for pseudorandomness, since this trace distance converges to \(1\) whenever \(t=\omega\!\left(\sqrt{d}\right)\). Extending this result to \(t>d\) using the same technique is certainly possible, but this assumption greatly simplifies the analysis of the problem.

        \begin{lemma}
            \label{lem:trace_distance_binary_phase_ars}
            Let \(d\geqslant2\) and \(t\leqslant d\) be two natural numbers. Let \(\mathcal{F}\) be the set of functions from \([d]\) to \(\{0, 1\}\). Let \(\meanrho{\mathcal{B}_d^t}\) be the density matrix resulting from sampling \(t\) copies of a binary phase state, that is
            \begin{equation}
                \meanrho{\mathcal{B}_d^t} \defed \frac{1}{2^{d}}\sum_{f\in\mathcal{F}}(\selfouter{\psi_f})^{\otimes t}
            \end{equation}
            where \(\ket{\psi_f}\) is defined as
            \begin{equation}
                \ket{\psi_f} \defed \frac{1}{\sqrt{d}}\sum_x(-1)^{f(x)}\ket{x}
            \end{equation}
            for \(f\) being a binary function on \([d]\). We have
            \begin{equation}
                \label{eq:trace_distance_binary_phase_states}
                \frac12\left\|\meanrho{\mathcal{B}_d^t}-\crho\right\|_1 = 1-\frac{1}{\binom{d+t-1}{t}}\sum_{k=0}^{\left\lfloor\frac{t}{2}\right\rfloor}\binom{d}{t-2k}\,.
            \end{equation}
        \end{lemma}
        \begin{proof}
         By a similar argument to the one that was used in \cite[Subsection~4.2.3]{BS19}, we have
        \begin{equation}
            \meanrho{\mathcal{B}_d^t} = \frac{1}{d^t}\sum_{x\sim y}\ketbra{x}{y}
        \end{equation}
        with \(\sim\) being an equivalence relation on \([d]^t\) which is defined as \(x\sim y\) if and only if no element \(z\in[d]\) appears an odd number of times in the concatenation \(x\parallel y\) of \(x\) and \(y\). We thus know via \Cref{lem:eigendecomposition-equivalence-relation} that we can find the eigenvalues of \(\meanrho{\mathcal{B}_d^t}\) and their multiplicities by studying the equivalence classes of \(\sim\). Indeed, in this case, this Lemma asserts that the eigenvalue of \(\meanrho{\mathcal{B}_d^t}\) are exactly the sizes of the equivalence classes of \(\sim\) divided by \(\frac{1}{d^t}\). Let us thus characterize these equivalence classes.

        Suppose an element \(z\in[d]\) appears an odd number of times in \(x\in[d]^t\). Then in order for \(y\in[d]^t\) to be in relation with  \(x\), \(y\) must also contain \(z\) an odd number of times. Putting things differently, two elements \(x\) and \(y\) are in relation if and only if they have in common their elements that appear an odd number of times in them. Thus, an equivalence class is uniquely defined by a set \(\left\{z_1,\cdots,z_k\right\}\in[d]^k\) which defines the elements of \([d]\) that appear an odd number of times in the $t$-tuple. Note that \(k\) necessarily has the same parity as \(t\). Indeed, if \(t\) is odd and \(k\) is even, then there will necessarily be another element \(z_{k+1}\) that must appear in \(x\) and \(y\) an odd number of times to complete the \(t\)-tuple. Similarly, if \(t\) is even, then so must be \(k\).

        Let us write \(t\) as \(t=2T+b\) with \(b\in\{0, 1\}\) being its parity and let us consider an equivalence class uniquely defined by a set of \(2k+b\) elements of \([d]\). We are then interested in the size of this equivalence class. Note that this size does not depend on the individual values of \(z_1,\cdots,z_{2k+b}\) but only on \(k\). In particular, it means that two distinct equivalence classes defined by sets of the same size \(2k+b\) have the same size. It means that if we denote \(\lambda_{k}\) the associated eigenvalue, it has multiplicity at least \(\binom{d}{2k+b}\), since there are \(\binom{d}{2k+b}\) ways of choosing the \(2k+b\) elements of \([d]\) that must have an odd cardinality.

        Furthermore, note that the larger \(k\) and the lower the size of the equivalence class. Indeed, consider an equivalence class defined by a set containing \(2k+2+b\) elements \(\left\{z_1,\cdots,z_{2k+2+b}\right\}\). Consider a \(t\)-tuple within this equivalence class and replace each \(z_{2k+1+b}\) and \(z_{2k+2+b}\) by any element \(z'\in[d]\). Note that this results in an element of the equivalence class defined by \(\left\{z_1,\cdots,z_{2k+b}\right\}\). Indeed, since both \(z_{2k+1+b}\) and \(z_{2k+2+b}\) appear an odd number of times, the sum of their cardinality is necessarily even. As such, for each element in an equivalence class defined by \(2k+b+2\) elements we can find strictly more than one element in the equivalence class defined by \(\left\{z_1,\cdots,z_{2k+b}\right\}\). In particular, the smallest equivalence class is the one defined by having the maximum number of odd cardinality elements, namely \(t\) (recall that we assumed \(t\leqslant d\)). The size of this equivalence class is then the number of permutations of these elements, namely \(t!\). As such, the smallest positive eigenvalue of \(\meanrho{\mathcal{B}_d^t}\) is \(\frac{t!}{d^t}\).

        Now, note that we have
        \begin{equation}
        \rk{\meanrho{\mathcal{B}_d^t}}=\sum_{k=0}^{\left\lfloor\frac{t}{2}\right\rfloor}\binom{d}{t-2k}
        \end{equation}
        which we find by summing the multiplicities of the eigenvalues. In particular, when seen as an operator on \(\symsubspacedt{C}\), \(\meanrho{\mathcal{B}_d^t}\) is not full-rank. As such, the lowest eigenvalue of \(\meanrho{\mathcal{B}_d^t}\) is 0, and its second lowest is \(\frac{t!}{d^t}\). But now, note that we have
        \begin{equation}
            \frac{(d+t-1)!}{(d-1)!}\geqslant d^t
        \end{equation}
        which can be seen by rewriting the left-hand side term as a product of \(t\) terms larger than \(d\). In particular, this means that we have
        \begin{equation}
            \frac{1}{\binom{d+t-1}{t}}\leqslant\frac{t!}{d^t}\,.
        \end{equation}
        Since the second lowest eigenvalue of \(\meanrho{\mathcal{B}_d^t}\) is always larger than the unique eigenvalue of \(\crho\), their trace distance admits a simple form, namely
        \begin{equation}
            \frac12\left\|\meanrho{\mathcal{B}_d^t}-\crho\right\|_1 = \frac{\binom{d+t-1}{t}-\rk{\meanrho{\mathcal{B}_d^t}}}{\binom{d+t-1}{t}}
        \end{equation}
        with \(\binom{d+t-1}{t}-\rk{\meanrho{\mathcal{B}_d^t}}\) being the multiplicity of the 0 eigenvalue of \(\meanrho{\mathcal{B}_d^t}\) viewed as an operator on \(\symsubspacedt{C}\).
        \end{proof}
        Now that we have this expression for the trace distance, we are able to give out its asymptotics. In particular, we have
        \begin{equation}
            \frac12\left\|\meanrho{\mathcal{B}_d^t}-\crho\right\|_1 = \frac{t(t-1)}{d}+\bigtheta{\frac{t^2}{d^4}}\,.
        \end{equation}
        Furthermore, we can use Stirling's approximation to get that if \(t\sim\alpha\sqrt{d}\), then
        \begin{equation}
            \frac12\left\|\meanrho{\mathcal{B}_d^t}-\crho\right\|_1 = 1-\mathrm{e}^{-\alpha^2}+\smallo{1}\,.
        \end{equation}    
    
	\section{Additional lemmas and proofs}
        \begin{lemma}
            \label{lem:expectationY}
            Let \(d\geqslant2\) be a natural number. Let \(n_0,\cdots,n_{d-1}\) be natural numbers and let us denote their sum by \(N\). Let \(W\in\RR^d\) be distributed according to a standard centered multivariate normal law \(\mathcal{N}\left(0,\id_d\right)\). We have
            \begin{equation}
                \expect{\prod_{i=0}^{d-1}\frac{W_i^{n_i}}{\|W\|^{n_i}}} = \begin{cases}
                    \frac{(d-2)!!}{(d+N-2)!!}\prod\limits_{i=0}^{d-1}\left(n_i-1\right)!!&\text{ if all }n_i\text{ are even,}\\
                    0&\text{otherwise.}
                \end{cases}
            \end{equation}
        \end{lemma}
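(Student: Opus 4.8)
The plan is to use the spherical symmetry of the distribution of $W$. Since $\mathcal{N}(0,\id_d)$ is rotation invariant, the direction $U\defed W/\|W\|$ is uniformly distributed on the unit sphere $S^{d-1}\subset\RR^d$ and, crucially, is \emph{independent} of the radius $\|W\|$. As $\sum_i n_i=N$, we have $\prod_i W_i^{n_i}=\|W\|^N\prod_i U_i^{n_i}$, so by this independence
\begin{equation}
    \expect{\prod_{i=0}^{d-1}W_i^{n_i}}=\expect{\|W\|^N}\,\expect{\prod_{i=0}^{d-1}U_i^{n_i}}\,.
\end{equation}
The quantity we want is exactly $\expect{\prod_i U_i^{n_i}}$, so it suffices to compute the (easier) Gaussian moment $\expect{\prod_i W_i^{n_i}}$ and the radial moment $\expect{\|W\|^N}$ and then divide.

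For the numerator, the coordinates $W_0,\dots,W_{d-1}$ are mutually independent standard Gaussians, hence $\expect{\prod_i W_i^{n_i}}=\prod_i\expect{W_i^{n_i}}$, and the elementary one-dimensional Gaussian moment formula gives $\expect{W_i^{n_i}}=(n_i-1)!!$ when $n_i$ is even and $0$ when $n_i$ is odd (the convention $(-1)!!=1$ also covers $n_i=0$). Thus the numerator equals $\prod_i(n_i-1)!!$ if all the $n_i$ are even, and is $0$ otherwise. Since $\expect{\|W\|^N}$ is a strictly positive real number, this already establishes the second branch of the claim: the left-hand side vanishes as soon as some $n_i$ is odd.

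For the denominator we pass to polar coordinates. Writing the Gaussian integral in spherical coordinates and substituting $s=\rho^2/2$ in the radial part --- equivalently, using that $\|W\|^2\sim\chi^2_d$ --- one obtains $\expect{\|W\|^N}=2^{N/2}\,\Gamma\!\left(\tfrac{N+d}{2}\right)\big/\Gamma\!\left(\tfrac{d}{2}\right)$. In the remaining case all $n_i$ are even, so $N=2m$ is even, and then $\Gamma(m+\tfrac d2)/\Gamma(\tfrac d2)=\prod_{j=0}^{m-1}(\tfrac d2+j)$, giving $\expect{\|W\|^N}=\prod_{j=0}^{m-1}(d+2j)$. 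A one-line induction on $m$ (splitting on the parity of $d$ and using $(-1)!!=0!!=1$) identifies this product with $(d+N-2)!!/(d-2)!!$. Plugging the numerator and denominator back into the displayed factorization yields $\expect{\prod_i(W_i/\|W\|)^{n_i}}=\frac{(d-2)!!}{(d+N-2)!!}\prod_i(n_i-1)!!$, which is the claimed value.

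The only points requiring care are the two structural facts invoked at the start: that $W/\|W\|$ is uniform on $S^{d-1}$ and independent of $\|W\|$ --- a standard consequence of rotation invariance, since conditioning on $\|W\|=\rho$ leaves a rotation-invariant law on the sphere of radius $\rho$ --- and the bookkeeping identity $2^{N/2}\Gamma(\tfrac{N+d}{2})/\Gamma(\tfrac d2)=(d+N-2)!!/(d-2)!!$ for even $N$. Both are elementary, and the rest of the argument is routine.
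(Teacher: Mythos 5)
Your proof is correct and follows essentially the same route as the paper: exploit the independence of $\|W\|$ and $W/\|W\|$ for a standard Gaussian vector, compute $\expect{\prod_i W_i^{n_i}}$ by coordinate-wise independence, compute $\expect{\|W\|^N}$ via the $\chi^2$ distribution, and divide. If anything your phrasing is slightly cleaner, since you correctly invoke $\|W\|^2\sim\chi^2(d)$ whereas the paper's wording (``$\|W\|^N$ follows a $\chi^2(N/2)$ law'') is a minor misstatement that does not affect the resulting formula.
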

        \begin{proof}
            We'll use the well-known fact that the magnitude \(\|W\|\) and the direction \(\frac{W}{\|W\|}\) of a gaussian vector are independent random variables. We then have on the one hand
            \begin{equation}
                \expect{\prod_{i=0}^{d-1}W_i^{n_i}} = \prod_{i=0}^{d-1}\expect{W_i^{n_i}}
            \end{equation}
            since the \(\left\{W_i\right\}_{i}\) are independent. We then have that if one of the \(n_i\) is odd, this expectation is equal to \(0\). Otherwise we have
            \begin{equation}
                \expect{\prod_{i=0}^{d-1}W_i^{n_i}} = \prod_{i=0}^{d-1}\left(n_i-1\right)!!\,.
            \end{equation}
            On the other hand, we have
            \begin{equation}
                \expect{\prod_{i=0}^{d-1}W_i^{n_i}} = \expect{\|W\|^N\prod_{i=0}^{d-1}\frac{W_i^{n_i}}{\|W\|^n_i}} = \expect{\|W\|^N}\expect{\prod_{i=0}^{d-1}\frac{W_i^{n_i}}{\|W\|^n_i}}\,.
            \end{equation}
            Thus, we know that the expectation that we want to compute will be nil if at least one of the \(n_i\) is odd. Let us thus assume that all the \(n_i\) are even. This means that \(N\) is also even, which means that \(\|W\|^N\) follows a \(\chi^2\left(\frac{N}{2}\right)\) law. This gives us
            \begin{equation}
                \expect{\|W\|^{N}} = \frac{\left(d+N-2\right)!!}{\left(d-2\right)!!}
            \end{equation}
            from which the result follows.
        \end{proof}
	\begin{lemma}
		\label{lem:doublefactorialbinomialformula}
		Let \((n, p, q)\in\mathbb{N}\). We have
		\begin{equation}
			\sum_{k=0}^n\binom{n}{k}(2n-2k-1+q)!!(2k-1+p)!!=\frac{(p-1)!!(q-1)!!}{(p+q)!!}(2n+p+q)!!\,.
		\end{equation}
	\end{lemma}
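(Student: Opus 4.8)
The plan is to prove the identity by induction on $n$, keeping $p$ and $q$ free natural numbers throughout, and relying only on Pascal's rule together with the elementary relation $(m+2)!!=(m+2)\,m!!$; no generating functions are needed, although a generating-function proof comparing coefficients in $(1-x)^{-(p+1)/2}(1-x)^{-(q+1)/2}=(1-x)^{-(p+q+2)/2}$ (i.e. the Chu--Vandermonde identity for Pochhammer symbols, after writing $(2m-1+p)!!=(p-1)!!\,2^{m}\prod_{j=1}^{m}\!\big(\tfrac{p+1}{2}+j-1\big)$) would work equally well.

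For the base case $n=0$ both sides equal $(p-1)!!\,(q-1)!!$, using $(2\cdot 0+p+q)!!=(p+q)!!$. For the inductive step, assume the identity holds for a fixed $n$ and \emph{every} pair $p,q\in\mathbb{N}$, and prove it for $n+1$. Writing $\binom{n+1}{k}=\binom{n}{k}+\binom{n}{k-1}$ splits the left-hand side into two sums. In the first, $\sum_{k}\binom{n}{k}(2(n+1-k)-1+q)!!(2k-1+p)!!$, one has $2(n+1-k)-1+q=2(n-k)-1+(q+2)$, so this is exactly the $n$-instance of the identity with $q$ replaced by $q+2$; in the second, the substitution $k\mapsto k+1$ gives $\sum_{k}\binom{n}{k}(2(n-k)-1+q)!!(2k-1+(p+2))!!$, the $n$-instance with $p$ replaced by $p+2$. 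Applying the induction hypothesis to both and factoring out $\frac{(2n+p+q+2)!!}{(p+q+2)!!}$ leaves the scalar $(p-1)!!(q+1)!!+(p+1)!!(q-1)!!$, which by $(p+1)!!=(p+1)(p-1)!!$ and $(q+1)!!=(q+1)(q-1)!!$ equals $(p-1)!!(q-1)!!\big((q+1)+(p+1)\big)=(p-1)!!(q-1)!!\,(p+q+2)$. Combining this with $(p+q+2)!!=(p+q+2)(p+q)!!$ and $(2n+p+q+2)!!=(2(n+1)+p+q)!!$ produces precisely $\frac{(p-1)!!(q-1)!!}{(p+q)!!}(2(n+1)+p+q)!!$, completing the induction.

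The only genuinely delicate points are bookkeeping: one should check that the boundary terms $k=0$ and $k=n+1$ are correctly absorbed when splitting via Pascal's rule (using $\binom{n}{-1}=\binom{n}{n+1}=0$), that the index shift in the second sum does not change its range effectively, and that the double-factorial identities $(m+2)!!=(m+2)\,m!!$ and $(r+1)!!=(r+1)(r-1)!!$ remain valid under the convention $(-1)!!=0!!=1$ when $p$, $q$, or $p+q$ vanishes. None of these is a real obstacle, so the main ``hard'' part is simply organizing the reduction cleanly; the generating-function route sidesteps the case-checking entirely but at the cost of introducing half-integer Pochhammer symbols.
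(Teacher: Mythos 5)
Your proof is correct and follows essentially the same route as the paper: induction on $n$ with $p,q$ kept free, Pascal's rule splitting the sum into the $n$-instances with $q\mapsto q+2$ and (after an index shift) $p\mapsto p+2$, and the same final simplification $(p-1)!!(q+1)!!+(p+1)!!(q-1)!!=(p-1)!!(q-1)!!(p+q+2)$. The paper even writes the recursion in exactly the form $u_{n+1,p,q}=u_{n,p,q+2}+u_{n,p+2,q}$ that you derive.
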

	\begin{proof}
		We prove this by induction on \(n\). We denote \(u_{n,p,q}\) this sum. First, note that this is true for \(n=0\). Let us thus assume that this assertion is true for \(n\in\NN\) and all \(p, q\in\NN\) and let us show that it also holds for \(n+1\) and all \(p,q\in\NN\). We have
        \begin{equation}
            u_{n+1,p,q} = \sum_{k=0}^{n+1}\binom{n+1}{k}(2n-2k+1+q)!!(2k-1+p)!!\,.
        \end{equation}

        Let us first remove the first and last terms out of the sum to get
        \begin{equation}
            u_{n+1,p,q}=(2n+1+q)!!(p-1)!! + (q-1)!!(2n+1+p)!! + \sum_{k=1}^{n}\binom{n+1}{k}(2n-2k+1+q)!!(2k-1+p)!!\,.
        \end{equation}
        We can then split \(\binom{n+1}{k}\) into \(\binom{n}{k}+\binom{n}{k-1}\) to get
        \begin{equation}
            \begin{split}
                u_{n+1,p,q}={}&(2n+1+q)!!(p-1)!! + \sum_{k=1}^{n}\binom{n}{k}(2n-2k+1+q)!!(2k-1+p)!! +{}\\&(q-1)!!(2n+1+p)!! + \sum_{k=1}^{n}\binom{n}{k-1}(2n-2k+1+q)!!(2k-1+p)!!\,.
            \end{split}
        \end{equation}
        We then change the index of the second sum to get
        \begin{equation}
            \begin{split}
                u_{n+1,p,q}={}&(2n+1+q)!!(p-1)!! + \sum_{k=1}^{n}\binom{n}{k}(2n-2k+1+q)!!(2k-1+p)!! +{}\\&(q-1)!!(2n+1+p)!! + \sum_{k=0}^{n-1}\binom{n}{k}(2n-2k-1+q)!!(2k+1+p)!!\,.
            \end{split}
        \end{equation}
        We are now able to merge into these two sums the terms that we previously left out to get
        \begin{equation}
            u_{n+1,p,q}=\sum_{k=0}^{n}\binom{n}{k}(2n-2k+1+q)!!(2k-1+p)!! +\sum_{k=0}^{n}\binom{n}{k}(2n-2k-1+q)!!(2k+1+p)!!
        \end{equation}
        which we identify as
        \begin{equation}
            u_{n+1,p,q} = u_{n,p,q+2} + u_{n,p+2,q}\,.
        \end{equation}
        We thus have by our induction assumption
        \begin{equation}
            u_{n+1,p,q} = \frac{(p-1)!!(q+1)!!}{(p+q+2)!!}(2n+p+q+2)!! + \frac{(p+1)!!(q-1)!!}{(p+q+2)!!}(2n+p+q+2)!!\,.
        \end{equation}
        We then factorize this expression to get
        \begin{equation}
            u_{n+1,p,q} = \frac{(2n+p+q+2)!!}{(p+q+2)!!}(p-1)!!(q-1)!!(p+1+q+1)
        \end{equation}
        which we can finally simplify to
        \begin{equation}
            u_{n+1,p,q} = \frac{(2(n+1)+p+q)!!}{(p+q)!!}(p-1)!!(q-1)!!
        \end{equation}
        which concludes the proof.
	\end{proof}
	\begin{corollary}
		\label{cor:multinome_double_factorial}
		Let \(d\geqslant2\), \(k_0\in\mathbb{N}\), \(p\in\mathbb{N}\) and \(q\in\mathbb{N}\) be four natural numbers. We have
		\begin{equation}
			\begin{aligned}
				&\sum_{k_{d-1}\leqslant\cdots\leqslant k_1=0}^{k_0}\left[\prod_{j=0}^{d-2}\binom{k_j}{k_{j+1}}\left(2k_j-2k_{j+1}+q-1\right)!!\right]\left(2k_{d-1}+p-1\right)!!\\
				={}&\frac{\left(2k_0+p+(d-1)(q+1)-1\right)!!(p-1)!!\left[(q-1)!!\right]^{d-1}}{(p+(d-1)(q+1)-1)!!}\,.
			\end{aligned}
		\end{equation}
	\end{corollary}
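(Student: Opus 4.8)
The plan is to prove the identity by induction on $d$, using \Cref{lem:doublefactorialbinomialformula} both as the base case and as the engine driving the inductive step.

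For the base case $d=2$, the left-hand side is $\sum_{k_1=0}^{k_0}\binom{k_0}{k_1}(2k_0-2k_1+q-1)!!\,(2k_1+p-1)!!$, which is precisely the sum in \Cref{lem:doublefactorialbinomialformula} with $n=k_0$. That lemma evaluates it to $\frac{(p-1)!!(q-1)!!}{(p+q)!!}(2k_0+p+q)!!$, and expanding the claimed right-hand side for $d=2$ (where $(d-1)(q+1)=q+1$) gives the same expression.

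For the inductive step, I would assume the identity for some $d\geqslant2$ and all $k_0,p,q$, and consider the $(d+1)$-fold sum. The key move is to peel off the \emph{innermost} summation variable $k_d$, which ranges over $0\leqslant k_d\leqslant k_{d-1}$ and occurs only in the factor $\binom{k_{d-1}}{k_d}(2k_{d-1}-2k_d+q-1)!!\,(2k_d+p-1)!!$ (the $j=d-1$ term of the product together with the trailing double factorial). Summing over $k_d$ by \Cref{lem:doublefactorialbinomialformula} with $n=k_{d-1}$ replaces this factor by $\frac{(p-1)!!(q-1)!!}{(p+q)!!}(2k_{d-1}+p+q)!!$. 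Pulling that constant out front, the remaining sum over $k_1,\dots,k_{d-1}$ is exactly the left-hand side of the statement with $d$ levels and with the parameter $p$ replaced by $p'=p+q+1$, since $(2k_{d-1}+p+q)!!=(2k_{d-1}+p'-1)!!$. Applying the induction hypothesis, using $p'+(d-1)(q+1)=p+d(q+1)$, and cancelling the factor $(p+q)!!=(p'-1)!!$ against the $(p+q)!!$ in the denominator of the pulled-out constant yields precisely the claimed closed form for $d+1$.

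I do not expect a genuine obstacle here: the argument is essentially index bookkeeping. The only point requiring care is the choice to strip the innermost variable rather than the outermost one, since this is what keeps the residual sum in exactly the shape of the statement (stripping $k_1$ would force a shift of binomials via $\binom{k_0}{k_1}\binom{k_1}{k_2}=\binom{k_0}{k_2}\binom{k_0-k_2}{k_1-k_2}$ and a reindexing), together with tracking how $p$ increments to $p+q+1$ at each level; the double-factorial arithmetic is then routine.
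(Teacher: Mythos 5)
Your proposal is correct and matches the paper's proof in both structure and the key reduction: induction on $d$ with the base case supplied by \Cref{lem:doublefactorialbinomialformula}, and the inductive step carried out by summing the innermost index $k_d$ via that same lemma, which extracts the constant $\frac{(p-1)!!(q-1)!!}{(p+q)!!}$ and shifts $p\mapsto p+q+1$ so the remaining sum is exactly the $d$-level case. The side remark about why stripping $k_1$ would be messier is sound but not needed.
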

	\begin{proof}
		We prove this by induction on \(d\). The case \(d=2\) reduces to the previous lemma and is thus true. Let us assume that this proposition is true for some \(d\), and let's prove that it's also true for \(d+1\). Let us denote
        \begin{equation}
            u_{d, k_0,p,q} = \sum_{k_{d-1}\leqslant\cdots\leqslant k_1=0}^{k_0}\left[\prod_{j=0}^{d-2}\binom{k_j}{k_{j+1}}\left(2k_j-2k_{j+1}+q-1\right)!!\right]\left(2k_{d-1}+p-1\right)!!
        \end{equation}
        for all \(d\geqslant2\), \(k_0\in\NN\), \(p\in\NN\) and \(q\in\NN\). In particular, we have
        \begin{equation}
            u_{d+1, k_0,p,q} = \sum_{k_{d}\leqslant\cdots\leqslant k_1=0}^{k_0}\left[\prod_{j=0}^{d-1}\binom{k_j}{k_{j+1}}\left(2k_j-2k_{j+1}+q-1\right)!!\right]\left(2k_{d}+p-1\right)!!\,.
        \end{equation}
        By summing over the terms that depend only on the \(k_d\) index, we find that we have
        \begin{equation}
            u_{d+1,k_0,p,q}=\sum_{k_{d-1}\leqslant\cdots\leqslant k_1=0}^{k_0}\left[\prod_{j=0}^{d-2}\binom{k_j}{k_{j+1}}\left(2k_j-2k_{j+1}+q-1\right)!!\right]u_{2,k_{d-1},p, q}\,.
        \end{equation}
        We then have by \Cref{lem:doublefactorialbinomialformula}
        \begin{equation}
            \begin{split}
                &u_{d+1,k_0,p,q}\\
                ={}&\frac{(p-1)!!(q-1)!!}{(p+q)!!}\sum_{k_{d-1}\leqslant\cdots\leqslant k_1=0}^{k_0}\left[\prod_{j=0}^{d-2}\binom{k_j}{k_{j+1}}\left(2k_j-2k_{j+1}+q-1\right)!!\right]\left(2k_{d-1}+p+q\right)!!
            \end{split}
        \end{equation}
        which we can identify as
        \begin{equation}
            u_{d+1,k_0,p,q}=\frac{(p-1)!!(q-1)!!}{(p+q)!!}u_{d,k_0,p+q+1,q}\,.
        \end{equation}
        We thus have by our induction assumption
        \begin{equation}
            u_{d+1,k_0,p,q}=\frac{(p-1)!!(q-1)!!\left[2k_0+p+q+1+(d-1)(q+1)-1\right]!!(p+q)!!\left[(q-1)!!\right]^{d-1}}{(p+q)!![p+q+1+(d-1)(q+1)-1]!!}
        \end{equation}
        which we can finally simplify to
        \begin{equation}
            u_{d+1,k_0,p,q}=\frac{(p-1)!!\left[2k_0+p+d(q+1)-1\right]!!\left[(q-1)!!\right]^{d}}{[p+d(q+1)-1]!!}
        \end{equation}
        which concludes the proof.
	\end{proof}
        \begin{lemma}
            \label{lem:eigendecomposition-equivalence-relation}
            Let \(d\geqslant2\) and \(t\geqslant1\) be two natural numbers. Let \(\sim\) be an equivalence relation on \([d]^t\). For an element \(x\) of \([d]^t\), we define \(\mathcal{C}(x)\) to be its equivalence class with respect to \(\sim\) and we define:
            \begin{equation}
                \ket{\mathcal{C}(x)}=\sum_{y\sim x}\ket{y}\,.
            \end{equation}
            Let \(\rho\) be a density matrix that can be written as:
            \begin{equation}
                \rho=\sum_{x\in[d]^t}\alpha_{\mathcal{C}(x)}\sum_{\substack{y\in[d]^t\\y\sim x}}\ketbra{x}{y}
            \end{equation}
            with \(\left\{\alpha_{\mathcal{C}(x)}\right\}_{x\in[d]^t}\) being arbitrary positive numbers that are constant across equivalence classes.
            We then have:
            \begin{equation}
                \rho=\sum_{z\in[d]^t_{/\sim}}\alpha_{\mathcal{C}(z)}|\mathcal{C}(z)|\,\proj{\mathcal{C}(z)}\,.
            \end{equation}
            In particular, since \(\inner{\mathcal{C}(x)}{\mathcal{C}(y)}\neq0\iff \mathcal{C}(x)=\mathcal{C}(y)\), this is the spectral decomposition of \(\rho\).
        \end{lemma}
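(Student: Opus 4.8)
The plan is to reorganize the defining sum of $\rho$ by equivalence classes, at which point each block collapses to a rank-one operator. First, since the coefficients $\alpha_{\mathcal{C}(x)}$ are constant on equivalence classes, I would regroup
\begin{equation}
    \rho = \sum_{z\in[d]^t_{/\sim}}\alpha_{\mathcal{C}(z)}\sum_{x\in\mathcal{C}(z)}\sum_{y\in\mathcal{C}(z)}\ketbra{x}{y}
\end{equation}
where the double sum over a single class factors as $\bigl(\sum_{x\in\mathcal{C}(z)}\ket{x}\bigr)\bigl(\sum_{y\in\mathcal{C}(z)}\bra{y}\bigr)=\selfouter{\mathcal{C}(z)}$. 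Hence $\rho=\sum_z\alpha_{\mathcal{C}(z)}\selfouter{\mathcal{C}(z)}$.

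Next I would handle the normalization. The vector $\ket{\mathcal{C}(z)}=\sum_{y\sim z}\ket{y}$ is a sum of $|\mathcal{C}(z)|$ pairwise distinct computational basis vectors, which are orthonormal, so $\inner{\mathcal{C}(z)}{\mathcal{C}(z)}=|\mathcal{C}(z)|$ and therefore $\selfouter{\mathcal{C}(z)}=|\mathcal{C}(z)|\,\proj{\mathcal{C}(z)}$. Substituting this into the previous expression gives the claimed identity
\begin{equation}
    \rho=\sum_{z\in[d]^t_{/\sim}}\alpha_{\mathcal{C}(z)}\,|\mathcal{C}(z)|\,\proj{\mathcal{C}(z)}\,.
\end{equation}

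Finally, to see that this is the spectral decomposition, I would note that distinct equivalence classes are disjoint subsets of $[d]^t$, so for $\mathcal{C}(x)\neq\mathcal{C}(y)$ the vectors $\ket{\mathcal{C}(x)}$ and $\ket{\mathcal{C}(y)}$ are supported on disjoint sets of basis vectors and are orthogonal, whereas if $\mathcal{C}(x)=\mathcal{C}(y)$ then $\inner{\mathcal{C}(x)}{\mathcal{C}(y)}=|\mathcal{C}(x)|>0$. Thus the $\proj{\mathcal{C}(z)}$ are mutually orthogonal rank-one projectors summing (after weighting) to $\rho$, which makes the above a genuine eigendecomposition with eigenvalues $\alpha_{\mathcal{C}(z)}|\mathcal{C}(z)|$. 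There is essentially no obstacle here: the argument is pure bookkeeping, and the only points needing a little care are the normalization factor $|\mathcal{C}(z)|$ relating the unnormalized outer product $\selfouter{\mathcal{C}(z)}$ to the projector $\proj{\mathcal{C}(z)}$, and checking that the regrouping over equivalence classes does not double count.
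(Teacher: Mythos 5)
Your proof is correct, and it takes a genuinely different (and arguably more direct) route than the paper. The paper verifies the eigenvector equation: it computes $\rho\ket{\mathcal{C}(z)}$ term by term, shows it equals $\alpha_{\mathcal{C}(z)}|\mathcal{C}(z)|\,\ket{\mathcal{C}(z)}$, and then separately argues via $\tr{\rho} = \sum_{z\in[d]^t_{/\sim}}\alpha_{\mathcal{C}(z)}|\mathcal{C}(z)|$ that no further non-zero eigenvalues can remain. You instead reorganize the defining sum by partitioning $[d]^t$ into equivalence classes, collapse each block into the rank-one operator $\selfouter{\mathcal{C}(z)}$, and normalize using $\inner{\mathcal{C}(z)}{\mathcal{C}(z)}=|\mathcal{C}(z)|$. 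This regrouping gives the full decomposition in one algebraic step, so you never need the trace argument for completeness; the mutual orthogonality of the $\ket{\mathcal{C}(z)}$ (disjointness of classes) then immediately upgrades it to a spectral decomposition. Both approaches are sound and elementary; yours is slightly cleaner because it manufactures the entire decomposition directly rather than verifying each eigenvector and then closing the gap by a counting argument.
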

        \begin{proof}
            Let \(z\in[d]^t\). Let us show that \(\ket{\mathcal{C}(z)}\) s an eigenvector of \(\rho\). We have by definition of the different terms
            \begin{equation}
                \rho\ket{\mathcal{C}(z)} = \sum_{x\in[d]^t}\alpha_{\mathcal{C}(x)}\sum_{\substack{y\in[d]^t\\y\sim x}}\ketbra{x}{y}\sum_{w\sim z}\ket{w}\,.
            \end{equation}
            Since \(\inner{y}{w}=\delta_{y,w}\), this simplifies to
            \begin{equation}
                \rho\ket{\mathcal{C}(z)} = \sum_{x\in[d]^t}\alpha_{\mathcal{C}(x)}\sum_{\substack{y\in[d]^t\\y\sim x\\y\sim z}}\ket{x}\,.
            \end{equation}
            Now, note that the summand is independent of the summation index \(y\), that is
            \begin{equation}
                \rho\ket{\mathcal{C}(z)} = \sum_{\substack{x\in[d]^t\\x\sim z}}\alpha_{\mathcal{C}(x)}\,\ket{x}\sum_{\substack{y\in[d]^t\\y\sim x}}1\,.
            \end{equation}
            The rightmost term simply counts the number of elements that are in relation with a given \(x\in[d]^t\), which is the size of \(\mathcal{C}(x)\) by definition. We thus have
            \begin{equation}
                \rho\ket{\mathcal{C}(z)} = \sum_{\substack{x\in[d]^t\\x\sim z}}\alpha_{\mathcal{C}(x)}|\mathcal{C}(x)|\,\ket{x}\,.
            \end{equation}
            Now, note that by assumption the coefficients in the summand are constant across a given equivalence class, which means that we have
            \begin{equation}
                \rho\ket{\mathcal{C}(z)} = \alpha_{\mathcal{C}(z)}|\mathcal{C}(z)|\,\sum_{\substack{x\in[d]^t\\x\sim z}}\ket{x}\,.
            \end{equation}
            Finally, note that the rightmost sum is now \(\ket{\mathcal{C}(z)}\) by definition, which finally gives us
            \begin{equation}
                \rho\ket{\mathcal{C}(z)} = \alpha_{\mathcal{C}(z)}|\mathcal{C}(z)|\,\ket{\mathcal{C}(z)}\,.
            \end{equation}
            Thus, for any \(z\), \(\ket{\mathcal{C}(z)}\) is an eigenvector of \(\rho\) with its eigenvalue being \(\alpha_{\mathcal{C}(z)}|\mathcal{C}(z)|\). In order to consider a single element of each equivalence class, we can quotient \([d]^t\) by \(\sim\). Finally, note that there is no other eigenvectors associated to a non-zero eigenvalue, since:
            \begin{subequations}
                \begin{align}
                    \tr{\rho}=\sum_{z\in[d]^t}\alpha_{\mathcal{C}(z)}=\sum_{z\in[d]^t_{/\sim}}\alpha_{\mathcal{C}(z)}|\mathcal{C}(z)|\,.
                \end{align}
            \end{subequations}
        \end{proof}
        Before introducing the next Lemma, we recall the definition and some properties of the Beta distribution.
        
        A random variable \(X\) follows a \(\mathrm{Beta}(\alpha, \beta)\) distribution with \(\alpha\) and \(\beta\) being two positive numbers if its density function is given by
        \begin{equation}
            f_{\mathrm{Beta}(\alpha,\beta)}(x)=\begin{cases}\frac{\Gamma(\alpha+\beta)}{\Gamma(\alpha)\Gamma(\beta)}x^{\alpha-1}\,(1-x)^{\beta-1}&\text{if }x\in(0, 1)\\0&\text{otherwise}\end{cases}
        \end{equation}
        with \(\Gamma\) being the Gamma function. In particular, note that we have
        \begin{equation}
            \label{eq:beta_integration_identity}
            \int_0^1x^{\alpha-1}(1-x)^{\beta-1}\,\mathrm{d}x=\frac{\Gamma(\alpha)\Gamma(\beta)}{\Gamma(\alpha+\beta)}
        \end{equation}
        for all positive numbers \(\alpha\) and \(\beta\). If \(X\) is distributed according to a \(\mathrm{Beta}(\alpha,\beta)\) law, then its moments are given by
        \begin{equation}
            \label{eq:beta_moments}
            \expect{X^n} = \prod_{k=0}^{n-1}\frac{\alpha+k}{\alpha+\beta+k}
        \end{equation}
        for \(n\in\NN\).

        Finally, we'll make use of the fact that if \(A\) follows a \(\chi^2(p)\) law and \(B\) follows a \(\chi^2(q)\) law and \(A\) and \(B\) are independent, then \(\frac{A}{A+B}\) follows a \(\mathrm{Beta}\left(\frac{p}{2},\frac{q}{2}\right)\) law.
        
        We are now ready to state the following Lemma on the distribution of imaginarity of Haar random states.
        \begin{lemma}
            \label{lem:distribution-of-imaginarity}
            Let \(d\geqslant2\) be a natural number. Recall that the imaginarity of a pure state \(\ket{\psi}\in\complexvecs{d}\) is given by
            \begin{equation}
                \mathcal{I}(\ket{\psi}) = 1-\left|\inner{\overline{\psi}}{\psi}\right|^2\,.
            \end{equation}
            If \(\ket{\psi}\) is distributed according to the Haar measure on \(\complexvecs{d}\), then \(\mathcal{I}(\ket{\psi})\) follows a \(\mathrm{Beta}\left(\frac{d-1}{2}, 1\right)\) law.
        \end{lemma}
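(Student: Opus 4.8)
The plan is to work in the Gaussian model of a Haar‑random state, reduce the imaginarity to a simple function of three independent chi‑squared variables, and then recognise the target Beta law by matching all integer moments. Concretely, write a Haar‑random $\ket\psi\in\complexvecs d$ as $\ket\psi=\frac1{\|Z\|}\sum_{j=0}^{d-1}Z_j\ket j$ with $Z=X+\mathrm{i}Y$ and $X,Y\in\RR^d$ independent $\mathcal{N}(0,\id_d)$ vectors, the complex analogue of the representation used in the proof of \Cref{lem:rhodtexpression}. Then $\inner{\overline\psi}{\psi}=\frac1{\|Z\|^2}\sum_jZ_j^2=\frac{(\|X\|^2-\|Y\|^2)+2\mathrm{i}\langle X,Y\rangle}{\|X\|^2+\|Y\|^2}$. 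Decomposing $Y$ into its component along $X$ and a component $Y_\perp$ orthogonal to $X$, and writing $a=\|X\|^2$, $b=\|Y_\perp\|^2$, $c=\langle X,Y\rangle^2/\|X\|^2$, so that $\|Y\|^2=b+c$ and $\langle X,Y\rangle^2=ac$, the elementary identity $(a+b+c)^2-(a-b-c)^2-4ac=4ab$ gives $\bigl|\inner{\overline\psi}{\psi}\bigr|^2=1-\frac{4ab}{(a+b+c)^2}$, hence
\begin{equation}
    \mathcal{I}(\ket\psi)=\frac{4ab}{(a+b+c)^2}\,.
\end{equation}

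Next I would identify the joint law of $(a,b,c)$. Conditioning on $X$, the vector $Y$ is still $\mathcal{N}(0,\id_d)$, and projecting it onto the line $\RR X$ and its orthogonal complement shows that $c\sim\chi^2(1)$ and $b\sim\chi^2(d-1)$ are independent, and that their conditional law does not depend on $X$, hence they are also independent of $a=\|X\|^2\sim\chi^2(d)$; in particular $a,b,c$ are mutually independent. With $S=a+b+c$, the standard Gamma/Dirichlet decomposition (or two applications of the chi‑squared‑to‑Beta identity recalled just before the lemma) shows that $(a/S,b/S,c/S)$ is Dirichlet‑distributed with parameters $\bigl(\tfrac d2,\tfrac{d-1}2,\tfrac12\bigr)$, which sum to $d$, so that $\mathcal{I}(\ket\psi)=4\,(a/S)(b/S)$ is a function of this Dirichlet vector.

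Finally I would compute, for every $n\in\NN$, the $n$-th moment using the Dirichlet moment formula,
\begin{equation}
    \expect{\mathcal{I}(\ket\psi)^n}=4^n\,\frac{\Gamma(d)}{\Gamma(d+2n)}\,\frac{\Gamma\!\bigl(\tfrac d2+n\bigr)}{\Gamma\!\bigl(\tfrac d2\bigr)}\,\frac{\Gamma\!\bigl(\tfrac{d-1}2+n\bigr)}{\Gamma\!\bigl(\tfrac{d-1}2\bigr)}\,,
\end{equation}
and simplify it: writing the last two ratios as $\prod_{k=0}^{n-1}\bigl(\tfrac d2+k\bigr)$ and $\prod_{k=0}^{n-1}\bigl(\tfrac{d-1}2+k\bigr)$, the factor $4^n$ converts them into $\prod_{k=0}^{n-1}(d+2k)$ and $\prod_{k=0}^{n-1}(d-1+2k)$, whose union is the run of $2n$ consecutive integers $d-1,d,\dots,d+2n-2$; combined with $\frac{\Gamma(d)}{\Gamma(d+2n)}=\bigl(\prod_{j=0}^{2n-1}(d+j)\bigr)^{-1}$ this collapses to $\expect{\mathcal{I}(\ket\psi)^n}=\frac{d-1}{d+2n-1}$. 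By \Cref{eq:beta_moments}, the $n$-th moment of a $\mathrm{Beta}\!\bigl(\tfrac{d-1}2,1\bigr)$ variable is $\prod_{k=0}^{n-1}\frac{\frac{d-1}2+k}{\frac{d-1}2+1+k}$, which telescopes to the same value $\frac{d-1}{d+2n-1}$. Since $\mathcal{I}(\ket\psi)$ and the $\mathrm{Beta}\!\bigl(\tfrac{d-1}2,1\bigr)$ law are both supported on the compact interval $[0,1]$, equality of all moments forces equality of the distributions, which is the claim. The only delicate points are the algebraic reduction $\bigl|\inner{\overline\psi}{\psi}\bigr|^2=1-\frac{4ab}{(a+b+c)^2}$ and the verification that $b$ and $c$ carry exactly $d-1$ and $1$ degrees of freedom and are mutually independent of $a$; the rest is routine manipulation of Gamma functions.
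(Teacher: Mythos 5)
Your argument is correct, and it arrives at the same moment identity $\expect{\mathcal{I}^n} = \frac{d-1}{d+2n-1}$ as the paper, via the same overall strategy (Gaussian model for the Haar state, an algebraic reduction of $\mathcal{I}$, then moment matching on $[0,1]$ via the Hausdorff moment problem). The difference is in the intermediate decomposition. The paper writes $\mathcal{I} = 4\,u(1-u)\,(1-v^2)$ with $u = \|A\|^2/(\|A\|^2+\|B\|^2)$ and $v = \langle \hat A, \hat B\rangle$, argues these two factors are independent with $u \sim \mathrm{Beta}(\tfrac d2,\tfrac d2)$ and $1-v^2 \sim \mathrm{Beta}(\tfrac{d-1}2,\tfrac12)$, and then combines separate Beta-integral and Beta-moment computations. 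You instead isolate three mutually independent chi-squared variables $a \sim \chi^2(d)$, $b \sim \chi^2(d-1)$, $c \sim \chi^2(1)$, observe that $\mathcal{I} = 4ab/(a+b+c)^2$, and read off the moments from the Dirichlet$\bigl(\tfrac d2,\tfrac{d-1}2,\tfrac12\bigr)$ law of $(a/S, b/S, c/S)$. The two decompositions are algebraically equivalent — note $ab = \|A\|^2\|B\|^2 - \langle A,B\rangle^2$ — but yours is arguably a bit cleaner because it funnels everything through a single Dirichlet moment formula instead of a product of two independent Beta-variable computations, and it makes the ``degrees of freedom'' count $d + (d-1) + 1 = 2d$ transparent. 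The one thing to state a touch more explicitly is that $c$ is the squared length of the projection of $Y$ onto $\hat X$ and, conditionally on $X$, is the square of a standard normal, which is where the $\chi^2(1)$ and the independence from $a$ come from; you say this, and it is correct, but it carries the proof.
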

        \begin{proof}
            Since \(\ket{\psi}\) is Haar random, we can write it as
            \begin{equation}
                \ket{\psi} = \frac{A+\mathrm{i}\,B}{\sqrt{\|A\|^2+\|B\|^2}}
            \end{equation}
            with \(A\) and \(B\) being independently distributed according to a central standard normal law \(\mathcal{N}\left(0,\id_d\right)\), as per~\cite[Theorem~1]{Mez07}.
            We have
            \begin{equation}
                \inner{\overline{\psi}}{\psi} = \frac{\|A\|^2-\|B\|^2+2\mathrm{i}\inner{A}{B}}{\|A\|^2+\|B\|^2}\,.
            \end{equation}
            This gives us
            \begin{equation}
                \left|\inner{\overline{\psi}}{\psi}\right|^2 = \frac{\left(\|A\|^2-\|B\|^2\right)^2+4\inner{A}{B}^2}{\left(\|A\|^2+\|B\|^2\right)^2}
            \end{equation}
            which finally gives us
            \begin{subequations}
                \begin{align}
                    1-\left|\inner{\overline{\psi}}{\psi}\right|^2 &= \frac{\left(\|A\|^2+\|B\|^2\right)^2-\left(\|A\|^2-\|B\|^2\right)^2-4\inner{A}{B}^2}{\left(\|A\|^2+\|B\|^2\right)^2}\\
                    &= \frac{4\|A\|^2\|B\|^2-4\inner{A}{B}^2}{\left(\|A\|^2+\|B\|^2\right)^2}\\
                    &= \frac{4\|A\|^2\|B\|^2}{\left(\|A\|^2+\|B\|^2\right)^2}\left(1-\inner{\frac{A}{\|A\|}}{\frac{B}{\|B\|}}^2\right)\,.
                \end{align}
            \end{subequations}
            We can rewrite this expression as
            \begin{equation}
                \mathcal{I}(\ket{\psi}) = 4\left(\frac{\|A\|^2}{\|A\|^2+\|B\|^2}\right)\left(1-\frac{\|A\|^2}{\|A\|^2+\|B\|^2}\right)\left(1-\inner{\frac{A}{\|A\|}}{\frac{B}{\|B\|}}^2\right)\,.
            \end{equation}
            Since $A$ is a standard normal gaussian variable, we have that \(\|A\|\) and \(\frac{A}{\|A\|}\) are independent, and similarly for \(B\). As such, we have that \(\frac{\|A\|^2}{\|A\|^2+\|B\|^2}\) and \(\inner{\frac{A}{\|A\|}}{\frac{B}{\|B\|}}^2\) are independent.

            Furthermore, on the one hand, by the rotational invariance of the gaussian distribution, we have that \(\inner{\frac{A}{\|A\|}}{\frac{B}{\|B\|}}^2\) has the same distribution as \(\inner{\frac{A}{\|A\|}}{0}^2\), which we can write as \(\frac{A_0^2}{A_0^2+\sum\limits_{i=1}^{d-1}A_i^2}\). Since \(A_0^2\sim\chi^2(1)\) and \(\sum\limits_{i=1}^{d-1}A_i^2\sim\chi^2(d-1)\) are independent, we have that this variable follows a \(\mathrm{Beta}\left(\frac12,\frac{d-1}{2}\right)\) law. We thus have that \(1-\inner{\frac{A}{\|A\|}}{\frac{B}{\|B\|}}^2\) follows a \(\mathrm{Beta}\left(\frac{d-1}{2},\frac12\right)\) law. In particular, its moments are given by
            \begin{equation}
                \expect{\left(1-\inner{\frac{A}{\|A\|}}{\frac{B}{\|B\|}}^2\right)^n} =\prod_{k=0}^{n-1}\frac{\frac{d-1}{2}+k}{\frac{d}{2}+k}= \frac{(d+2n-3)!!(d-2)!!}{(d-3)!!(d+2n-2)!!}
            \end{equation}
            for \(n\in\NN\) as per~\Cref{eq:beta_moments}.

            On the other hand, by a similar argument, we have that \(\frac{\|A\|^2}{\|A\|^2+\|B\|^2}\) follows a \(\mathrm{Beta}\left(\frac{d}{2},\frac{d}{2}\right)\) law. We thus have
            \begin{equation}
                \expect{\left(\left(\frac{\|A\|^2}{\|A\|^2+\|B\|^2}\right)\left(1-\frac{\|A\|^2}{\|A\|^2+\|B\|^2}\right)\right)^n}=\frac{\Gamma(d)}{\Gamma^2\left(\frac{d}{2}\right)}\int_{0}^1x^n(1-x)^nx^{\frac{d}{2}-1}(1-x)^{\frac{d}{2}-1}\,\mathrm{d}x
            \end{equation}
            for \(n\in\NN\). This simplifies to
            \begin{equation}
                \expect{\left(\left(\frac{\|A\|^2}{\|A\|^2+\|B\|^2}\right)\left(1-\frac{\|A\|^2}{\|A\|^2+\|B\|^2}\right)\right)^n}=\frac{\Gamma(d)\Gamma^2\left(n+\frac{d}{2}\right)}{\Gamma^2\left(\frac{d}{2}\right)\Gamma(2n+d)}
            \end{equation}
            as per~\Cref{eq:beta_integration_identity}. Using the \(\Gamma(z+1)=z\Gamma(z)\) identity, we then find
            \begin{equation}
                \expect{\left(\left(\frac{\|A\|^2}{\|A\|^2+\|B\|^2}\right)\left(1-\frac{\|A\|^2}{\|A\|^2+\|B\|^2}\right)\right)^n}=\frac{(d-1)![(2n+d-2)!!]^2}{4^n(2n+d-1)![(d-2)!!]^2}\,.
            \end{equation}

            Now, since the two variables we computed the moments of are independent, the moments of their product are given by the product of their moments. That is, we have for \(n\in\NN\)
            \begin{equation}
                \expect{\mathcal{I}^n(\ket{\psi})} = \frac{(d-1)!(2n+d-2)!!(d+2n-3)!!}{(2n+d-1)!(d-2)!!(d-3)!!}=\frac{(d-1)!(2n+d-2)!}{(d-2)!(2n+d-1)!}=\frac{d-1}{2n+d-1}\,.
            \end{equation}
            Thus, the moments of \(\mathcal{I}(\ket{\psi})\) match with the moments of the \(\mathrm{Beta}\left(\frac{d-1}{2},1\right)\) law.

            Since \(\mathcal{I}(\ket{\psi})\) is defined over the bounded interval \([0, 1]\), its moments completely define its distribution as per the Hausdorff moment problem, which concludes the proof.
        \end{proof}
\end{document}